\definecolor{red}{rgb}{0,0,0}
\definecolor{green}{rgb}{0,0,0}
\definecolor{blue}{rgb}{0,0,0}
\DeclareMathOperator*{\argmin}{arg\,min}
\DeclareMathOperator*{\argmax}{arg\,max}
\def\bs{\boldsymbol}
\def\bs{\boldsymbol}
\def\as{\mathrm{a.s.}}
\def\S{\mathcal{S}}
\def\V{\mathcal{V}}
\def\A{\mathcal{A}}
\def\oITR{\pi^{\mathrm{opt}}}
\newcommand{\opt}{\mathrm{opt}} 
\newcommand{\Q}{\mathrm{Q}} 
\newtheorem{theorem}{Theorem}
\newtheorem{lemma}{Lemma}
\newtheorem{remark}{Remark}
\begin{document}

\renewcommand{\baselinestretch}{1.2}
\markboth{\hfill{\footnotesize\rm Xinyi Li and Michael R. Kosorok}\hfill}
{\hfill {\footnotesize\rm Functional ITRs with Imaging Features} \hfill}
\renewcommand{\thefootnote}{}
$\ $\par \fontsize{10.95}{14pt plus.8pt minus .6pt}\selectfont
\vspace{0.8pc} \centerline{\large\bf Functional Individualized Treatment Regimes with Imaging Features}
\vspace{.4cm} \centerline{ 
Xinyi Li$^{a}$ and  
Michael R. Kosorok$^{b}$
\footnote{\emph{Address for correspondence}: 
Xinyi Li, School of Mathematical and Statistical Sciences, Clemson University, Clemson, SC 29634, USA. Email: lixinyi@clemson.edu}} \vspace{.4cm} 
\centerline{\it 
$^{a}$Clemson University, 
$^{b}$University of North Carolina at Chapel Hill} 
\vspace{.55cm}
\fontsize{9}{11.5pt plus.8pt minus .6pt}\selectfont


\begin{quotation}
\noindent {\it Abstract:}
Precision medicine seeks to discover an optimal personalized treatment plan and thereby provide informed and principled decision support, based on the characteristics of individual patients. With recent advancements in medical imaging, it is crucial to incorporate patient-specific imaging features in the study of individualized treatment regimes. We propose a novel, data-driven method to construct interpretable image features which can be incorporated, along with other features, to guide optimal treatment regimes. The proposed method treats imaging information as a realization of a stochastic process, and employs smoothing techniques in estimation. We show that the proposed estimators are consistent under mild conditions. The proposed method is applied to a dataset provided by the Alzheimer's Disease Neuroimaging Initiative.

\vspace{9pt}
\noindent {\it Key words and phrases:}
Precision medicine;
Functional data;
Bivariate spline;
Imaging data;
Treatment regime.
\end{quotation}

\fontsize{10.95}{14pt plus.8pt minus .6pt}\selectfont
\thispagestyle{empty}

\label{SEC:introduction}
\setcounter{equation}{0}
\noindent \textbf{1. Introduction} \vskip 0.1in
\renewcommand{\thefigure}{1.\arabic{figure}} \setcounter{figure}{0}

\noindent Precision medicine is an approach to medical treatment that takes into account individual variability in various features, such as genomic characteristics, environment factors, medical history, lifestyle, and current health status. In contrast to traditional treatment strategies, which tend to be uniform and one-size-fits-all, precision medicine aims to provide optimal treatment that is tailored to the unique characteristics of each individual. 
To achieve this goal, individualized treatment rules are developed to formalize the decision-making process that translates patient information into recommended treatment. The identification and application of individualized treatment regimes (ITRs) are crucial areas of investigation in precision medicine research, particularly in the study of chronic diseases or disorders that necessitate diverse medical interventions.
Notably, ITRs have been successfully applied in various clinical contexts, including but not limited to Alzheimer's Disease \citep{Isaacson:19}, diabetes \citep{Luckett:etal:20}, cancers \citep{Thall:etal:07, Zhao:etal:11}, and attention deficit hyperactivity disorder \citep{Pelham:Fabiano:08}.

The field of medical imaging has experienced significant advancements that have had a notable impact on disease and health studies. In addition to characterizing abnormalities and providing morphologic disease information, imaging can also define the imaging phenotype of the disease, using an increasingly diverse set of modern imaging tools and imaging biomarkers \citep{Herold:etal:16}.
The integration of imaging features in precision medicine research has the potential to significantly enhance patient care by enabling effective early detection and diagnosis tools, as well as guiding optimal treatments and lifestyle interventions.
For instance, Alzheimer's disease-related pathophysiologic changes can be visualized up to 15 years before the onset of clinical dementia \citep{Bateman:etal:12}. 
Thus, biomedical imaging and imaging-guided interventions, which provide multiparametric morphologic and functional information,
constitute crucial elements in the infrastructure of precision medicine.
Consequently, it is essential to develop strategies that incorporate imaging data in conjunction with other abundant information in precision medicine research.

In this work, we propose a novel and flexible approach, Functional Individualized Treatment Regimes with Imaging features (FITRI), to adapt semiparametric learning and functional data analysis (FDA) frameworks to precision medicine with abundant features, including medical images, genetic features, environment factors, etc. 
Semiparametric modeling and FDA offer powerful tools to deal with image features, allowing for a flexible range of possible model specifications. 
Specifically, we apply FDA for low-dimensional representation of high-resolution and high-dimensional imaging features. In other words, the imaging features are represented as functional terms in our analysis. We first develop a multi-dimensional functional principal component (FPC) basis expansion tool to approximate the functional imaging features, and then conduct a Q-learning \citep{Kosorok:Laber:19,Clifton:Laber:20} type analysis to estimate the optimal treatment regimes.
We also investigate the theoretical properties of the proposed bases and estimators, and develop computationally efficient algorithms accordingly. 

The proposed work presents a significant contribution to the field of precision medicine research in several ways. First and foremost, the FITRI approach bridges the gap in the current literature in precision medicine studies that incorporate imaging features, which is not a trivial task due to the large volume and complexity of image data.
The development and analysis of FITRI require efficient statistical methods and decision-support algorithms, which will be discussed later. And the proposed estimates provide an efficient data-driven summary statistic for the information contained in the imaging features.
Moreover, the proposed work expands the existing methods that use functional data to construct optimal ITRs. 
Prior studies, including
\cite{McKeague:Qian:14}, \cite{Ciarleglio:etal:15}, \cite{Ciarleglio:etal:16}, \cite{Ciarleglio:etal:18}, \cite{Laber:Staicu:18} and \cite{Park:etal:21}, have developed methods for various FDA models for dense or sparse functional data. All of these works focus on functional data with a one-dimensional index continuum. 
In contrast, the proposed basis tool demonstrates both theoretical superiority and computational efficiency in dealing with imaging data, which can be widely used in the FDA approach dealing with image data \citep{Nathoo:Kong:Zhu:19,Zhu:Li:Zhao:22}.
Such a smoothing tool is lacking in the current literature on functional data indexed by two or higher-dimensional continuums. Not only do the proposed multi-dimensional FPC bases work for index continuum dimensions greater than 1, they also provide practical, data-driven interpretation for the dominant information in the imaging features.  

The rest of the paper is organized as follows. 
Section 2 presents the construction methods for the proposed multi-dimensional FPC basis for multivariate functional data, with a particular focus on the two-dimensional (2D) case.  
In Section 3, we apply the proposed 2D-FPC basis to the functional model and describe the corresponding estimator.
Section 4 provides a presentation of the theoretical findings, while Section 5 elaborates on the implementation aspects.
In Section 6, we furnish empirical investigations to showcase the performance of the proposed estimator. Additionally, Section 7 provides an illustrative example of real-world data in the context of Alzheimer's Disease.
The conclusion of the paper is presented in Section 8, which discusses open issues and future work. 
The supplemental materials provide theoretical details.

\vskip 0.2in \noindent \textbf{2. Bases approximation for imaging features} \vskip 0.1in
\renewcommand{\thetable}{2.\arabic{table}} \setcounter{table}{0} 
\renewcommand{\thefigure}{2.\arabic{figure}} \setcounter{figure}{0}
\renewcommand{\theequation}{2.\arabic{equation}} \setcounter{equation}{0} 
\label{SEC:basis}

We assume the observed data are $\{\mathbf{X}_i, Z_i(\bs{s}), A_i, Y_i\}_{i=1}^n$, which comprise $n$ independent, identically distributed (iid) copies of a trajectory $\{\mathbf{X}, Z(\bs{s}), A, Y\}$, where $\mathbf{X}\in\mathcal{X}\subseteq\mathbb{R}^q$ denotes patient information, $\{Z(\bs{s}): \bs{s}\in \mathcal{V}\}$ is a stochastic process indexed by $\mathcal{V}\subset\mathbb{R}^m$ with $m\geq 1$, $A\in\A=\{0,1\}$ is a binary treatment, and $Y$ is real-valued outcome for which higher values are more desirable. 
For the $i$th subject, the real-valued imaging measure $Z_i(\bs{s}_j)$ is observed at point $\bs{s}_j\in\mathcal{V}$ for $j=1,\ldots,N_s$. 
For example, when $m=2$, $\bs{s}_j=(s_{j1},s_{j2})^{\top}$ could be the coordinates of a voxel center in a two-dimensional 
brain image.
Let $\mu$ be a nonnegative finite measure on $\mathcal{V}$,
and define $L_2(\mathcal{V}, \mu)=\{f: \|f\|_{\mu,2}<\infty\}$, where
	$\|f\|^2_{\mu,2}=\int_{\mathcal{V}} |f(\bs{s})|^2 \mathrm{d}\mu(\bs{s})$.
Note that $L_2(\mathcal{V},\mu)$ is a closed linear space, specifically, a Hilbert space. 
We assume that $Z_1,\ldots,Z_n$ are iid with distribution $f$ such that $\|Z\|_{\mu,2}<\infty$ almost surely.

We employ a Q-learning type analysis. Specifically, define the Q-function as
\begin{equation}
    Q\{\mathbf{x}, z(\bs{s}), a\} = \mathrm{E}\{Y | \mathbf{X} = \mathbf{x}, Z(\bs{s}) = z(\bs{s}), A = a\}. 
\label{EQN:QFunction}
\end{equation}
Assume that we want to parametrize all linear functionals of $Z$, which can be characterized as 
\begin{equation}
\label{EQN: eqn1}
	\int_{\mathcal{V}} Z(\bs{s})g(\bs{s})\mathrm{d}\mu(\bs{s})
\end{equation}
for some $g\in L_2(\mathcal{V},\mu)$.
To be specific, we assume a semiparametric working model:
\begin{align}
    \mathrm{E}\{Y | \mathbf{X} = \mathbf{x}, Z(\bs{s}) =& ~ z(\bs{s}), A = a\} \nonumber \\
	=& ~ \bs{\alpha}_1^{\top}\mathbf{X} + \int_{\V} \beta_1(\bs{s})Z(\bs{s})\mathrm{d}\bs{s} + A\left\{\bs{\alpha}_2^{\top}\mathbf{X} + \int_{\V} \beta_2(\bs{s})Z(\bs{s})\mathrm{d}\bs{s}\right\},
\label{MOD:1}
\end{align}
where $\bs{\alpha}=(\bs{\alpha}_1^{\top}, \bs{\alpha}_2^{\top})^{\top}$ are linear coefficients, and $\beta_1(\bs{s})$ and $\beta_2(\bs{s})$ are coefficient functions/maps.

Efficient tools are essential for addressing coefficient maps in  estimating the Q-function and, subsequently, determining the optimal treatment regime. 
The analysis of imaging data presents challenges such as high dimensionality and resolution, complex geometric structures, including complex boundaries, and spatial heterogeneity.
A variety of nonparametric methods have been developed for analyzing images; see for example, tensor-product-based kernel smoothing \citep{Zhu:Fan:Kong:14}, thin plate spline smoothing \citep{Ramsay:02}, soap film smoothing \citep{Wood:Bravington:Hedley:08}, and multivariate spline over triangulation \citep[MST;][]{Lai:Schumaker:07}.
MST has demonstrated superiority in analyzing multi-dimensional (MD) imaging data, as evidenced by its application in bivariate spline analysis of 2D images \citep{Lai:Wang:13} and trivariate spline analysis of 3D images \citep{Li:etal:2022}.
The construction of MST involves two main steps: constructing triangulation to approximate the entire domain and constructing multivariate splines based on the triangulation. 
In dealing with MD data, the complex and irregular shape poses a primary challenge compared to 1D data, which conventional nonparametric methods struggle to estimate accurately along the boundary, commonly known as the ``leakage'' issue \citep{Ramsay:02, Wang:Ranalli:07}.
However, the triangulation, a set of MD simplices whose union approximates the domain $\mathcal{V}$ well, provides an efficient tool to address the ``leakage'' issue. Based on the constructed triangulation, the multivariate spline can be constructed with explicit formulas, allowing efficient approximation of the MD images.
One can refer to \cite{Lai:Schumaker:07} for additional technical details.

The use of MST presents numerous advantages in the analysis of MD imaging data. However, when applied to the model (\ref{MOD:1}), theoretical and computational limitations emerge. Specifically, to establish the asymptotic consistency of the proposed coefficients estimates and, consequently, the estimated optimal regime, the stability condition for the bases inside the integral is required. Unfortunately, such a condition is lacking in the literature for MST. Moreover, the construction of multivariate splines typically involves hundreds or thousands of bases, rendering the whole system underdetermined, given the limited sample size. To overcome these limitations, we propose an MD-FPC basis based on MST, in a parallel fashion to the construction of 1D FPC bases. By constructing a Reproducing Kernel Hilbert Space (RKHS) based on the multivariate spline space, the low-dimensional representation of MD-FPC can approximate the multivariate spline space, and thus provide a good approximation to the original space $L_2(\mathcal{V},\mu)$. This approach offers theoretical and computational advantages and allows for the handling of the complicated MD image domain.

\vskip 0.1in \noindent \textbf{2.1. Construction of MD-FPC basis} \vskip .10in

In the following, while we develop the methodology for general $L_2$ spaces, including $\mathcal{V}\subset\mathbb{R}^m$, for $m\geq 1$; we also will specialize some of the results for $\mathcal{V}\subset\mathbb{R}^2$, that is, for the 2D-FPC basis.

We first need the following assumptions:
\begin{itemize}
	\item[(A1)] ({\it Space}) Suppose we have a Hilbert space $\mathcal{H}_0\equiv L_2(\mathcal{V},\mu)$, and assume $Z\in \mathcal{H}_0$ with probability one.
	\item[(A2)] ({\it Covariance function}) Assume $\mathrm{E}\|Z\|^2_{\mu,2}<\infty$. Let $V_0(\bs{s},\bs{s}^{\prime})=\mathrm{E}\{Z(\bs{s})-\mathrm{E}Z(\bs{s})\}\{Z(\bs{s}^{\prime})-\mathrm{E}Z(\bs{s}^{\prime})\}$, and suppose $V_0(\bs{s},\bs{s}^{\prime})=\sum_{k=1}^{\infty}\lambda_k\phi_k(\bs{s})\phi_k(\bs{s}^{\prime})$, where $\{\phi_k, k\geq 1\}$ is an orthonormal basis contained in $\mathcal{H}_0$, with corresponding eigenvalues $0 \leq \cdots < \lambda_2 < \lambda_1 < \infty$. 
	\item[(A3)] 
	({\it Eigenvalues}) $\sum_{k=1}^{\infty}\lambda_k<\infty$.
\end{itemize}

\begin{remark}
By Assumptions (A1) and (A2), we have the Karhunen-Lo\'{e}ve expansion $Z(\bs{s})=\mathrm{E}Z(\bs{s})+\sum_{k=1}^{\infty}\xi_k\phi_k(\bs{s})$, where $\xi_k=\int_{\mathcal{V}}\phi_k(\bs{s})\{Z(\bs{s})-\mathrm{E}Z(\bs{s})\}\mathrm{d}\mu(\bs{s})$, $\mathrm{E}\xi_k=0$, $\mathrm{Var}(\xi_k)=\lambda_k$, and $\mathrm{E}(\xi_k\xi_{k^{\prime}})=0$, for $1\leq k<k^{\prime}$.
Let $\mathcal{H}_1$ be the closed linear span of the basis functions $\{\phi_k, k\geq 1\}$, then consequently $\mathcal{H}_1\subset\mathcal{H}_0$ is also a Hilbert space, and Assumption (A3) will then imply that $\Pr(Z-\mathrm{E}Z\in \mathcal{H}_1)=1$.
\end{remark}

Let $\mathcal{B}_1=\{b(\bs{s})\in\mathcal{H}_0: \|b\|_{\mu,2}\leq1\}$, $\mathcal{F}_1=\{\int_{\mathcal{V}}b(\bs{s})\{Z(\bs{s})-\mathrm{E}Z(\bs{s})\}\mathrm{d}\mu(\bs{s}): b(\bs{s})\in\mathcal{B}_1\}$, and 
$
	\mathcal{F}_2
	=\mathcal{F}_1\times\mathcal{F}_1
	=\{\int_{\mathcal{V}\times\mathcal{V}} a(\bs{s})b(\bs{s}^{\prime})\{Z(\bs{s})-\mathrm{E}Z(\bs{s})\}\{Z(\bs{s}^{\prime})-\mathrm{E}Z(\bs{s}^{\prime})\}\mathrm{d}\mu(\bs{s})\mathrm{d}\mu(\bs{s}^{\prime}): a(\bs{s}),b(\bs{s}^{\prime})\in\mathcal{B}_1\}
$.
Theorem \ref{THM:GC} in supplemental materials shows that both
$\mathcal{F}_1$ and $\mathcal{F}_2$ are Glivenko-Cantelli classes of functions of $Z$.
Let $\mathcal{S}$ be the closed linear space spanned by MST in $\mathcal{H}_0$.
Let $G_N$ be the operator that projects onto $\mathcal{S}$, where the index $N$ is the number of spline bases that grow with the sample size $n$.
We need the following additional assumption:
\begin{itemize}
	\item[(A4)] 
({\it Convergence of spline space}) Assume that for any $g\in \mathcal{H}_1$ such that $\|g\|_{\mu,2}\leq 1$, we have that $\|G_Ng-g\|_{\mu,2}\rightarrow 0$ as $n\rightarrow\infty$.
\end{itemize}

\begin{remark}
Assumption (A4) assumes the convergence of the spline space, which is a regular conclusion in the MST literature; see, for example, \cite{Lai:Wang:13} and \cite{Li:etal:2022}.
\end{remark}

Denote the empirical variance function as $V_n(\bs{s},\bs{s}^{\prime})=n^{-1}\sum_{i=1}^n\{Z_i(\bs{s})-\bar{Z}_n(\bs{s})\}\{Z_i(\bs{s}^{\prime})-\bar{Z}_n(\bs{s}^{\prime})\}$, where $\bar{Z}_n=n^{-1}\sum_{i=1}^n Z_i$. 
We first obtain the sequence of theoretical bases based on $V_n(\bs{s},\bs{s}^{\prime})$. 
Define
\begin{equation}
    \widehat{\phi}_{n1}(\bs{s})
	=\argmax_{f\in\mathcal{B}_1\cap \mathcal{S}}\int_{\mathcal{V}\times\mathcal{V}}f(\bs{s})f(\bs{s}^{\prime})V_n(\bs{s},\bs{s}^{\prime})\mathrm{d}\mu(\bs{s})\mathrm{d}\mu(\bs{s}^{\prime}).
\label{EQN:phi_n1_hat}
\end{equation}
As we show in Theorem \ref{THM:fn1hat} in Section A in supplemental materials,
$\widehat{\phi}_{n1}$ converges to
$\phi_{1}$ up to sign almost surely, as $n\to\infty$. In addition, we define 
\begin{eqnarray*}
    \widehat{\mathcal{H}}_{n1}
	&=&\{\text{closed linear span of }\{\widehat{\phi}_{n1}(\bs{s})\} \text{ in } \mathcal{H}_0\}, 
 \\
	\widehat{\phi}_{n2} (\bs{s})
	&=& \argmax_{f\in\mathcal{B}_1\cap \mathcal{S}\cap\widehat{\mathcal{H}}_{n1}^{\perp}} 
	\int_{\mathcal{V}\times\mathcal{V}}f(\bs{s})f(\bs{s}^{\prime})V_n(\bs{s},\bs{s}^{\prime})\mathrm{d}\mu(\bs{s})\mathrm{d}\mu(\bs{s}^{\prime}),
\end{eqnarray*}
where $\widehat{\mathcal{H}}_{n1}^{\perp}$ denotes the closed orthocomplements of $\widehat{\mathcal{H}}_{n1}$ in $L_2(\mathcal{V},\mu)$.
Similarly, $\widehat{\phi}_{n2}$ converges to $\phi_{2}$ up to sign almost surely, as $n\to\infty$. Repeating the construction recursively, we have for $k=1,\ldots,K$,
\begin{align}
\label{DEF:HK}
	\widehat{\mathcal{H}}_{nk}
	&=\{\text{closed linear span of }\{\widehat{\phi}_{n1},\ldots,\widehat{\phi}_{nk}\} \text{ in }  \mathcal{H}_0\},  
 \nonumber \\
	\mathcal{H}_{0k}
	&=\{\text{closed linear span of }\{\phi_1,\ldots,\phi_k\} \text{ in }  \mathcal{H}_0\},  
 \\
	\widehat{\phi}_{n(k+1)} (\bs{s})
	&= \argmax_{f\in\mathcal{B}_1\cap \mathcal{S}\cap\widehat{\mathcal{H}}_{nk}^{\perp}} 
	\int_{\mathcal{V}\times\mathcal{V}}f(\bs{s})f(\bs{s}^{\prime})V_n(\bs{s},\bs{s}^{\prime})\mathrm{d}\mu(\bs{s})\mathrm{d}\mu(\bs{s}^{\prime}). \nonumber
\end{align}
As shown in Theorem \ref{THM:fHconv} in Section 4.1, $\{\widehat{\phi}_{n1},\ldots,\widehat{\phi}_{nK}\}$ form an orthonormal system on $\mathcal{S}\cap\mathcal{H}_0$, and $\{\widehat{\phi}_{nk}\}_k$ converge to $\{\phi_{k}\}_k$ up to sign almost surely, as $n\to\infty$.

In practice, in the 2D-FPC setting, we obtain the measurements of the image data $\{Z_i(\bs{s_j})\}_{i=1,j=1}^{n,N_s}$ on a finite grid or random set of points, instead of the whole continuum. That means, $V_n(\bs{s},\bs{s}^{\prime})$ in (\ref{EQN:phi_n1_hat}) and (\ref{DEF:HK}) is not computable for practical data settings; consequently, $\{\widehat{\phi}_{nk}\}_k$ is not directly accessible. Therefore, we propose to employ bivariate splines over triangulation (BST) as an initial smoothing tool to construct the 2D-FPC basis function.
In 2D-FPC, we are specializing to the setting where $\mathcal{V}\subset\mathbb{R}^2$.

Specifically, let $\mathbf{B}(\bs{s})=\{B_{J}(\bs{s})\in\mathcal{S}\}_{J\in\mathcal{J}}$ be the BST basis of $\mathcal{S}$, and let $\mathbf{H}=\int_{\mathcal{V}} \mathbf{B}(\bs{s}) \mathbf{B}^{\top}(\bs{s})\mathrm{d}\mu(\bs{s})$ with dimension $|\mathcal{J}|\times |\mathcal{J}|$, where $|\mathcal{J}|$ is the cardinality of the bivariate basis index set $\mathcal{J}$. 
Define a ``pre-smoothed'' variance-covariance matrix by bivariate spline as
\[
    \mathbf{K}_n
    =\mathbf{H}^{-1/2}
    \left\{\int_{\mathcal{V}\times\mathcal{V}}\mathbf{B}(\bs{s})V_n(\bs{s},\bs{s}^{\prime})\mathbf{B}(\bs{s}^{\prime})^{\top}\mathrm{d}\mu(\bs{s})\mathrm{d}\mu(\bs{s}^{\prime})\right\} \mathbf{H}^{-1/2}.
\]
Let $\{\widehat{\bs{\varphi}}_{nk}\}_{k=1}^{p_n}$ be the eigenvectors of $\mathbf{K}_n$, ordered from largest eigenvalue to smallest, where $p_n=|\mathcal{J}|\wedge n$. Then for $k\geq1$, we define
\begin{equation}
    \widehat{\phi}^{\S}_{nk}(\bs{s})=\widehat{\bs{\varphi}}_{nk}^{\top}\mathbf{H}^{-1/2}\mathbf{B}(\bs{s}), ~~~~
    \widehat{\lambda}_{nk}=\int_{\mathcal{V}\times\mathcal{V}}\widehat{\phi}_{nk}(\bs{s})\widehat{\phi}_{nk}(\bs{s}^{\prime})V_n(\bs{s},\bs{s}^{\prime})\mathrm{d}\mu(\bs{s})\mathrm{d}\mu(\bs{s}^{\prime}).
\label{EQN:lambdahat}
\end{equation}

As shown in Theorem \ref{THM:fHconv} (iii) in Section 4.1, $\{\widehat{\phi}_{nk}(\bs{s})\}_k$ and $\{\widehat{\phi}^{\S}_{nk}(\bs{s})\}_k$ are asymptotically equal up to sign almost surely, as $n\to\infty$.
Also, by Theorem \ref{THM:fHconv} (iv) in Section 4.1, we have that $\{\widehat{\lambda}_{nk}\}_k$ converge to $\{\lambda_k\}_k$ almost surely.
Note that the $\widehat{\lambda}_{nk}$ values are also the eigenvalues of $\mathbf{K}_n$. 
Therefore, we obtain that the 
$\{\widehat{\phi}^{\S}_{nk}(\bs{s})\}_k$ functions are the constructed orthonormal bases.

\vskip 0.2in \noindent \textbf{3. Functional learning with imaging features for estimating optimal ITRs} \vskip 0.1in
\renewcommand{\thetable}{3.\arabic{table}} \setcounter{table}{0} 
\renewcommand{\thefigure}{3.\arabic{figure}} \setcounter{figure}{0}
\renewcommand{\theequation}{3.\arabic{equation}} \setcounter{equation}{0} 
\label{SEC:Method}

With all the preparation in Section 2, we are ready to implement functional learning with imaging features for estimating the optimal ITRs.

An individualized treatment rule is a function $\pi: \mathcal{X}\times \V\to\A$ so that under $\pi$, a patient presenting with $\mathbf{X}=\mathbf{x}$ and $Z(\bs{s})=z(\bs{s})$ will be assigned to treatment $\pi(\mathbf{x}, z(\bs{s}))$. We employ the potential outcome framework to define an optimal ITR. Let $Y^{\ast}(a)$ be the potential outcome under treatment $a\in\A$, and under any regime $\pi$, $Y^{\ast}(\pi)=\sum_{a\in\A}Y^{\ast}(a)1\{\pi(\mathbf{X}, Z(\bs{s}))=a\}$. 
For a decision rule $\pi$, let $V(\pi) = \mathrm{E}Y^{\ast}(\pi)$ be the value of $\pi$.
An optimal regime for $Y$, denoted as $\oITR$, satisfies $V(\oITR)\geq V(\pi)$ 
for any other regime $\pi$.

Throughout the paper, we use capital letters, like $\mathbf{X}$, to denote random variables, and lower cases like $\mathbf{x}$ for instances of corresponding random variables. To identify the optimal regimes in terms of the data-generating model, we make the following assumptions:

\begin{itemize}
	\item[(A5)] 
	(Consistency) $Y=Y^{\ast}(A)$.
	\item[(A6)] 
	(Positivity) There exists some constant $c$ such that $\Pr\{A=a | \mathbf{X}, Z(\bs{s})\} \geq c>0$ for each $a\in\A$, $\mathbf{X}\in\mathcal{X}$ and $Z(\bs{s})\in\V$ almost surely.
	\item[(A7)] 
	(Ignorability) $\{Y^{\ast}(-1), Y^{\ast}(1)\} \perp A | \mathbf{X}, Z(\bs{s})$.
\end{itemize}

\begin{remark}
    Assumptions (A5)--(A7) are standard in the context of precision medicine \citep{Robins:04,Hernan:Robins:10}. Assumption (A6) is verifiable in the study. Assumption (A7) implies there are no unmeasured confounders, which generally holds by construction in randomized studies but is not in general empirically verifiable in observational studies \citep{Rosenbaum:84,Rosenbaum:Rubin:83}. 
\end{remark}

Under Assumptions (A5)--(A7), and recalling the definition of the Q-function in (\ref{EQN:QFunction}), the optimal ITR is
$\pi^{\opt}\{\mathbf{x}, z(\bs{s})\} = \argmax_{a\in\A} Q\{\mathbf{x}, z(\bs{s}), a\}$.
Let $\widehat{Q}\{\mathbf{x}, z(\bs{s}), a\}$ be the functional-regression-based estimators fit using the data.
Then the optimal ITR can be estimated through $\widehat{Q}\{\mathbf{x}, z(\bs{s}), a\}$ via
\[
	\widehat{\pi}^{\Q}\{\mathbf{x}, z(\bs{s})\} = \argmax_{a\in\A} \widehat{Q}\{\mathbf{x}, z(\bs{s}), a\}.
\]

Switch back now to functional regression modeling. Recall our model in (\ref{MOD:1}), or equivalently,
assume 
\[
	Y=\bs{\alpha}_1^{\top}\mathbf{X} + A\bs{\alpha}_2^{\top}\mathbf{X} + \int_{\mathcal{V}}\beta_1(\bs{s})Z(\bs{s})\mathrm{d}\mu(\bs{s}) + A\int_{\mathcal{V}}\beta_2(\bs{s})Z(\bs{s})\mathrm{d}\mu(\bs{s}) + \varepsilon,
\]
where $\mathbf{X}$ includes a $\{1\}$ term (intercept) and is in $\mathbb{R}^q$, $\bs{\alpha}_1$, $\bs{\alpha}_2\in\mathbb{R}^q$, $A$ is binary treatment indicator (can assume $A\in\{0,1\}$), $Z(\bs{s})$ is a stochastic process on $\mathcal{H}_0$,
and $\beta_1(\bs{s})$,$\beta_2(\bs{s})\in\mathcal{H}_0$.
Assume $\mathrm{E}\|\mathbf{X}\|^2_2<\infty$, $\mathrm{E}(\varepsilon|\mathbf{X},A,Z)=0$ almost surely, $\mathrm{E}\{\mathrm{Var}(\varepsilon|\mathbf{X},A,Z)\}<\infty$ and $\mathrm{Var}(\varepsilon)=\sigma^2$.

Denote $\bs{\theta}=(\bs{\alpha}_1^{\top},\bs{\alpha}_2^{\top},\beta_1,\beta_2)^{\top}$.
Let the generic covariate vector $\mathbf{W}=(\mathbf{X}^{\top},A\mathbf{X}^{\top},Z,AZ)^{\top}$,  and define
	$\bs{\theta}(\mathbf{W})=\bs{\alpha}_1^{\top}\mathbf{X} + \bs{\alpha}_2^{\top}A\mathbf{X} + \int_{\mathcal{V}}\beta_1(\bs{s})Z(\bs{s})\mathrm{d}\mu(\bs{s}) + \int_{\mathcal{V}}\beta_2(\bs{s})AZ(\bs{s})\mathrm{d}\mu(\bs{s})$ .
Also, for subject $i$, the observed covariate is $\mathbf{W}_i=(\mathbf{X}_i^{\top},A_i\mathbf{X}_i^{\top},Z_i,A_iZ_i)^{\top}$.
Note that $\bs{\theta}$ is a linear operator, instead of a pure linear coefficient; and $\mathbf{W}$ and $\mathbf{W}_i$ are ``functional matrices'', namely a hybrid of both matrices and functional covariates. We need the following Assumption (A8) for the linear operator $\bs{\theta}$.

\begin{itemize}
	\item[(A8)] 
({\it Linear operator}) $\bs{\theta}(\mathbf{W}_i)=0$ $\as$ $\Rightarrow \|\bs{\theta}\|_{\ast}=0$, where $\|\bs{\theta}\|_{\ast}=\|\bs{\alpha}_1\|_2+\|\bs{\alpha}_2\|_2+\|\beta_1\|_{\mu,2}+\|\beta_2\|_{\mu,2}$. 
\end{itemize}

\begin{remark}
Assumption (A8) can be implied by $\mathrm{E}\mathbf{W}_i\mathbf{W}_i^{\top}$  being positive definite, since $\mathrm{E}\{\bs{\theta}(\mathbf{W}_i)\}^2=\bs{\theta}^{\top}\mathrm{E}\mathbf{W}_i\mathbf{W}_i^{\top}\bs{\theta}=0$ if and only if $\bs{\theta}=\bs{0}$.
\end{remark}

We employ the 2D-FPC basis proposed in Section 2 to approximate the covariates $Z$ and $AZ$. To be specific, denote the 2D-FPC bases for $Z(\bs{s})$ and $AZ(\bs{s})$ as $\{\widehat{\phi}_{1,nk}(\bs{s})\}_{k}$ and $\{\widehat{\phi}_{2,nk}(\bs{s})\}_{k}$, respectively, then
\begin{equation}
\label{EQN:ZAZ}
    Z(\bs{s}) \approx \sum_{k=1}^{K_1}\widehat{U}_{1,k}\widehat{\phi}_{1,nk}(\bs{s}),~~~~
    AZ(\bs{s}) \approx \sum_{k=1}^{K_2}\widehat{U}_{2,k}\widehat{\phi}_{2,nk}(\bs{s}),
\end{equation}
where $\widehat{U}_{1,k}=\int_{\mathcal{V}}\widehat{\phi}_{1,nk}(\bs{s})Z(\bs{s}) \mathrm{d}\mu(\bs{s})$, $\widehat{U}_{2,k}=\int_{\mathcal{V}}\widehat{\phi}_{2,nk}(\bs{s})AZ(\bs{s}) \mathrm{d}\mu(\bs{s})$,
and $K_1$ and $K_2$ are basis numbers.
Note that we exclude the expectation terms in Equation (\ref{EQN:ZAZ}), which are absorbed into the intercept terms and have no effect on the estimation procedure. Consequently, $\forall k$, $\widehat{U}_{1,k}$ and $\widehat{U}_{2,k}$ does not necessarily have mean zero.
Fix $1\leq K_1,K_2 <\infty$. Note that $K_1$ and $K_2$ can be zero, in which case the corresponding terms vanish in the functional regression model. To avoid such trivialities, we will require $K_1,K_2\geq1$. As for the selection of $K_1$ and $K_2$, we discuss this in detail in Section 5.1.
Accordingly, the coefficient functions $\beta_1(\bs{s})$ and $\beta_2(\bs{s})$ have the representations $\beta_{\ell}(\bs{s}) = \sum_{k=1}^{K_{\ell}}\gamma_{\ell, k}\widehat{\phi}_{\ell, nk}(\bs{s})$ for $\ell=1,2$, where $\gamma_{\ell, k}=\int_{\mathcal{V}}\widehat{\phi}_{\ell, nk}(\bs{s})\beta_{\ell}(\bs{s})\mathrm{d}\mu(\bs{s})$.

For $1\leq K_{\ell}\leq p_n$, let $\widehat{\mathbf{U}}_{\ell}(K_{\ell})=(\widehat{U}_{\ell,1},\ldots,\widehat{U}_{\ell,K_{\ell}})^{\top}$ and $\widehat{\mathbf{U}}_{\ell,i}(K_{\ell})=(\widehat{U}_{\ell,i1},\ldots,\widehat{U}_{\ell,iK_{\ell}})^{\top}$, $\ell=1,2$, where for $1\leq i\leq n$,
\begin{eqnarray*}
    \widehat{U}_{1,ik}&=&\int_{\mathcal{V}}\widehat{\phi}_{1,nk}(\bs{s})Z_i(\bs{s})\mathrm{d}\mu(\bs{s}), ~~1\leq k\leq K_1, \\
    \widehat{U}_{2,ik}&=&\int_{\mathcal{V}}\widehat{\phi}_{2,nk}(\bs{s})A_iZ_i(\bs{s})\mathrm{d}\mu(\bs{s}), ~~ 1\leq k\leq K_2.
\end{eqnarray*}
Note that $\mathrm{Var}(\widehat{U}_{\ell,ik})=\widehat{\lambda}_{\ell,nk}$, where $\widehat{\lambda}_{\ell,nk}$ are defined in (\ref{EQN:lambdahat}) with $V_{\ell,n}(\bs{s},\bs{s}^{\prime})$, $\ell=1,2$, respectively, with
\begin{eqnarray}
    V_{1,n}(\bs{s},\bs{s}^{\prime})&=&\frac{1}{n}\sum_{i=1}^n\{Z_i(\bs{s})-\bar{Z}_n(\bs{s})\}\{Z_i(\bs{s}^{\prime})-\bar{Z}_n(\bs{s}^{\prime})\}, \nonumber \\
    V_{2,n}(\bs{s},\bs{s}^{\prime})&=&\frac{1}{n}\sum_{i=1}^n\{A_iZ_i(\bs{s})-(\overline{AZ})_n(\bs{s})\}\{A_iZ_i(\bs{s}^{\prime})-(\overline{AZ})_n(\bs{s}^{\prime})\}. \label{EQN:Vn}
\end{eqnarray}
In a parallel fashion to $\mathbf{W}$ and $\mathbf{W}_i$, for $1\leq i\leq n$, let $\widehat{\mathbf{W}}=(\mathbf{X}^{\top},A\mathbf{X}^{\top},\widehat{\mathbf{U}}_1^{\top}(K_1),A\widehat{\mathbf{U}}_2^{\top}(K_2))^{\top}$ and $\widehat{\mathbf{W}}_i=(\mathbf{X}_i^{\top},A_i\mathbf{X}_i^{\top},\widehat{\mathbf{U}}_{1,i}^{\top}(K_1),A_i\widehat{\mathbf{U}}_{2,i}^{\top}(K_2))^{\top}$.
Assume the full parameter space is 
	$\bs{\Theta}=\{\bs{\alpha}_1,\bs{\alpha}_2\in\mathbb{R}^q, \beta_1, \beta_2\in \mathcal{H}_0\}$. 
Recall the definitions of $\mathcal{H}_{0K}$ and $\widehat{\mathcal{H}}_{nK}$ in (\ref{DEF:HK}), which are the closed linear spaces spanned by bases $\{\widehat{\phi}_{nk}\}_{k=1}^K$ and $\{\phi_{k}\}_{k=1}^K$, respectively. Define the parameter spaces
\begin{eqnarray}
\label{DEF:ThetaK}
	\bs{\Theta}_{K_1,K_2} &=& \{\bs{\alpha}_1,\bs{\alpha}_2\in\mathbb{R}^q, \beta_1\in\mathcal{H}_{0K_1}, \beta_2\in\mathcal{H}_{0K_2}\}, \\
	\widehat{\bs{\Theta}}_{K_1,K_2} &=& \{\bs{\alpha}_1,\bs{\alpha}_2\in\mathbb{R}^q, \beta_1\in\widehat{\mathcal{H}}_{nK_1}, \beta_2\in\widehat{\mathcal{H}}_{nK_2}\}.\nonumber 
\end{eqnarray}
As shown in Theorem \ref{THM:fHconv} in Section 4.1, $\widehat{\bs{\Theta}}_{K_1,K_2}\to\bs{\Theta}_{K_1,K_2}$ almost surely, as $n\to\infty$. Since $\bs{\Theta}_{K_1,K_2}\subset\bs{\Theta}$, we also have $\widehat{\bs{\Theta}}_{K_1,K_2}\subset\bs{\Theta}$ almost surely, as $n\to\infty$.
Let $\bs{\gamma}_1=(\gamma_{1,1},\ldots,\gamma_{1,K_1})^{\top}$ and $\bs{\gamma}_2=(\gamma_{2,1},\ldots,\gamma_{2,K_2})^{\top}$. With a slight abuse of notation, denote $\widetilde{\bs{\theta}}=(\bs{\alpha}_1^{\top},\bs{\alpha}_2^{\top},\bs{\gamma}_1^{\top},\bs{\gamma}_2^{\top})^{\top}$, and the corresponding parameter space as $\widetilde{\bs{\Theta}}_{K_1,K_2}=\{\bs{\alpha}_1,\bs{\alpha}_2\in\mathbb{R}^q, \bs{\gamma}_1\in\mathbb{R}^{K_1}, \bs{\gamma}_2\in\mathbb{R}^{K_2}\}$.
Obviously, the closed linear span of $\widehat{\bs{\Theta}}_{K_1,K_2}$ is equal to the closed linear span of $\widetilde{\bs{\Theta}}_{K_1,K_2}$ almost surely, as $n\to\infty$; in the sense that, for each $\widehat{\bs{\theta}}\in\widehat{\bs{\Theta}}_{K_1,K_2}$, there exists one and only one $\widetilde{\bs{\theta}}\in\widetilde{\bs{\Theta}}_{K_1,K_2}$ such that 
$
	\widetilde{\bs{\theta}}(\widehat{\mathbf{W}})=\widehat{\bs{\theta}}(\mathbf{W}) 
$
almost surely. Actually, this is an isomorphism by Assumption (A8).

Combining the model (\ref{MOD:1}) and the basis expansion with 2D-FPC basis in (\ref{EQN:ZAZ}), we are able to obtain the estimates
\begin{equation}
    \widehat{\bs{\theta}}_n=\argmin_{\bs{\theta}\in\widehat{\bs{\Theta}}_{K_1,K_2}}n^{-1}\sum_{i=1}^n \{Y_i-\widehat{\bs{\theta}}(\mathbf{W}_i)\}^2.
\label{EQN:coeff}    
\end{equation}

Consequently, the optimal decision rule is estimated as
\begin{eqnarray}
    \widehat{\pi}
    =\argmax_A\widehat{Q}\{\widehat{\bs{\theta}}_n(\mathbf{W})\}
    &=&I\left(\left[\widehat{Q}\{\widehat{\bs{\theta}}_n(\mathbf{W})\}|(A=1) -
        \widehat{Q}\{\widehat{\bs{\theta}}_n(\mathbf{W})\}|(A=0)\right] > 0\right) \nonumber \\
    &=& I\left(\widehat{\alpha}_2\mathbf{X} + \int_{\mathcal{V}}\widehat{\beta}_2(\bs{s})Z(\bs{s})\mathrm{d}\mu(\bs{s}) > 0\right).
\label{EQN:pihat}
\end{eqnarray} 

Combining all the aforementioned steps, the complete algorithm, which we refer to as the semiparametric functional learning algorithm, is given below in Algorithm \ref{ALG:SFL}.

\normalem{
\begin{algorithm}[htbp]
	\SetKwInOut{Input}{Input}
	\SetKwInOut{Initialization}{Initialization}
	\SetKwInOut{Output}{Output}
	\caption{Semiparametric Functional Learning Algorithm.}
	\Input{$\{\mathbf{X}_i, Z_i(\bs{v}), A_i, Y_i\}_{i=1}^n$.}
	\Output{$\widehat{\bs{\theta}}_n$ and $\widehat{\pi}$.}
	\BlankLine
		\textbf{Step 1.} Based on $\{Z_i(\bs{s}_j)\}_{i=1,j=1}^{n,N_s}$, construct 2D-FPC basis functions $\{\widehat{\phi}^{\S}_{\ell,nk}(\bs{s})\}_{k=1}^{K_{\ell}}$ by (\ref{EQN:lambdahat}) with respect to covariance functions $V_{\ell,n}(\bs{s},\bs{s}^{\prime})$ defined in (\ref{EQN:Vn}), $\ell=1,2$. 
		
		\textbf{Step 2.} Obtain the estimates $\widehat{\bs{\theta}}_n$ by (\ref{EQN:coeff}). 

		\textbf{Step 3.} Obtain the estimated optimal regime $\widehat{\pi}$ through (\ref{EQN:pihat}).
\label{ALG:SFL}
\end{algorithm}}

\vskip 0.2in \noindent \textbf{4. Theoretical results} \vskip 0.1in
\renewcommand{\thetable}{4.\arabic{table}} \setcounter{table}{0} 
\renewcommand{\thefigure}{4.\arabic{figure}} \setcounter{figure}{0}
\renewcommand{\theequation}{4.\arabic{equation}} \setcounter{equation}{0} 
\label{SEC:Theory}

In this section, we establish the asymptotic properties of the bases $\{\widehat{\phi}_{nk}(\bs{s})\}_k$ and $\{\widehat{\phi}_{nk}^{\mathcal{S}}(\bs{s})\}_k$ proposed in Section 2, and the estimated $\widehat{\bs{\theta}}$ and estimated optimal regime $\widehat{\pi}$ proposed in Section 3. Our main results are summarized in Theorems \ref{THM:fHconv} and \ref{THM:Conv}.

\vskip 0.1in \noindent \textbf{4.1. Theoretical results for 2D-FPC basis} \vskip .10in
\label{SEC:Theory_2DFPCA}

The main results are stated below. Theorem \ref{THM:fHconv} states that as $n\to\infty$, the proposed bases $\{\widehat{\phi}_{nk}(\bs{s})\}_k$ converge to the theoretical ones up to sign almost surely, the projection of the closed linear span of the proposed bases $\{\widehat{\phi}_{nk}(\bs{s})\}_k$ converge to the theoretical ones almost surely, the proposed ``pre-smoothed'' bases $\{\widehat{\phi}_{nk}^{\mathcal{S}}(\bs{s})\}_k$ are asymptotically equal to $\{\widehat{\phi}_{nk}(\bs{s})\}_k$ up to sign, and the eigenvalues $\{\widehat{\lambda}_{nk}\}_k$ converge to their theoretical values almost surely.

\begin{theorem}
\label{THM:fHconv}
Under Assumption (A3), there exists a sequence $K_n\to\infty$ as $n\to\infty$, such that
\begin{enumerate}[(i)]
	\item for each $k$, there exists a sign sequence $\{S_{nk}\}$ such that $\max_{1\leq k\leq K_n} \|S_{nk}\widehat{\phi}_{nk}-\phi_k\|_{\mu,2}\xrightarrow{\as}0$; 
	\item for any $g_n\in \mathcal{H}_0$
    with $\limsup_{n\to\infty}\|g_n\|_{\mu,2}<\infty$, $\|\widehat{\mathcal{H}}_{nK_n}g_n - \mathcal{H}_{0K_n}g_n\|_{\mu,2}\xrightarrow{\as}0$ ~{ and }~
	$\|\widehat{\mathcal{H}}_{nK_n}^{\perp}g_n - \mathcal{H}_{0K_n}^{\perp}g_n\|_{\mu,2}\xrightarrow{\as}0$;
	\item $\{\widehat{\phi}_{nk}(\bs{s})\}_k$ and $\{\widehat{\phi}^{\S}_{nk}(\bs{s})\}_k$ are asymptotically equal up to sign;
	\item $\max_{k} |\widehat{\lambda}_{nk}-\lambda_k|
	\xrightarrow{\as} 0$.
\end{enumerate}
\end{theorem}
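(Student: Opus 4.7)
The plan is to prove the four parts in the order stated, since (ii)--(iv) all lean on (i). The three key ingredients I would exploit are: the Glivenko--Cantelli property of $\mathcal{F}_2$ from Theorem \ref{THM:GC}, which yields
\[
\sup_{f\in\mathcal{B}_1}\Bigl|\int_{\V\times\V} f(\bs{s})f(\bs{s}')\{V_n(\bs{s},\bs{s}')-V_0(\bs{s},\bs{s}')\}\mathrm{d}\mu(\bs{s})\mathrm{d}\mu(\bs{s}')\Bigr|\xrightarrow{\as}0;
\]
Assumption (A4), which ensures each $\phi_k$ can be approximated by elements of $\mathcal{S}$ with error $o(1)$; and the spectral-perturbation principle arising from the strict ordering $\lambda_1>\lambda_2>\cdots$ via Courant--Fischer.

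For part (i), I would induct on $k$. For the base case, uniform GC convergence plus $\sup_{f\in\mathcal{B}_1}\int\int f f' V_0\, \mathrm{d}\mu\mathrm{d}\mu=\lambda_1$ gives the upper bound $\limsup_n \int\int\widehat{\phi}_{n1}\widehat{\phi}_{n1}'V_n\mathrm{d}\mu\mathrm{d}\mu\le\lambda_1$; picking $g_N\in\mathcal{S}$ with $\|g_N-\phi_1\|_{\mu,2}\to 0$ by (A4) provides the matching lower bound, so the maximum value converges to $\lambda_1$ a.s. The eigengap $\lambda_1-\lambda_2>0$ then forces $\widehat{\phi}_{n1}$ into a vanishing neighborhood of $\pm\phi_1$, since otherwise the decomposition along the theoretical eigenbasis would make the quadratic form strictly below $\lambda_1$ by a fixed margin. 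The inductive step adds the constraint $f\in\widehat{\mathcal{H}}_{n,k-1}^{\perp}$; by the inductive hypothesis this constraint is asymptotically equivalent to $f\in\mathcal{H}_{0,k-1}^{\perp}$, so the same argument reproduces the next eigenpair. To globalize over $k\le K_n$, I would use a diagonalization: for each fixed $k$ the error goes to zero a.s., so there exists a sequence $K_n\to\infty$ slow enough that the maximum error over $k\le K_n$ still vanishes a.s.

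Part (ii) follows from (i) by expanding $g_n$ in the orthonormal systems $\{\phi_k\}$ and $\{\widehat{\phi}_{nk}\}$: the projection coefficients $\langle g_n,\widehat{\phi}_{nk}\rangle$ differ from $S_{nk}\langle g_n,\phi_k\rangle$ by at most $\|g_n\|_{\mu,2}\|\widehat{\phi}_{nk}-S_{nk}\phi_k\|_{\mu,2}$, which is $o(1)$ uniformly in $k\le K_n$ by (i); summing in a Parseval-type bound and using $\limsup\|g_n\|_{\mu,2}<\infty$ gives the claim for $\widehat{\mathcal{H}}_{nK_n}g_n$, and the orthocomplement statement is then automatic. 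Part (iii) is essentially a finite-dimensional fact: restricting the variational problem defining $\widehat{\phi}_{nk}$ to $f=\mathbf{c}^{\top}\mathbf{B}(\bs{s})\in\mathcal{S}$ converts $\|f\|_{\mu,2}^2=\mathbf{c}^{\top}\mathbf{H}\mathbf{c}\le 1$ into $\|\mathbf{H}^{1/2}\mathbf{c}\|_2\le 1$ and the objective into $(\mathbf{H}^{1/2}\mathbf{c})^{\top}\mathbf{K}_n(\mathbf{H}^{1/2}\mathbf{c})$, so $\widehat{\phi}_{nk}$ and $\widehat{\phi}_{nk}^{\S}$ solve the identical eigenproblem for $\mathbf{K}_n$; they coincide up to sign whenever the eigenvalues of $\mathbf{K}_n$ are distinct, which holds eventually a.s.\ because part (iv) forces $\widehat{\lambda}_{nk}\to\lambda_k$ and the $\lambda_k$ are strictly ordered.

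For part (iv), combine parts (i) and the uniform GC bound to evaluate $\widehat{\lambda}_{nk}=\int\int\widehat{\phi}_{nk}(\bs{s})\widehat{\phi}_{nk}(\bs{s}')V_n(\bs{s},\bs{s}')\mathrm{d}\mu(\bs{s})\mathrm{d}\mu(\bs{s}')$: the inner integrand converges (in the bilinear-form sense over the unit ball) to $\int\int\phi_k\phi_k'V_0\mathrm{d}\mu\mathrm{d}\mu=\lambda_k$, and the uniformity over $k\le K_n$ transfers from the uniformity in (i) since the quadratic form is $1$-Lipschitz in each argument on $\mathcal{B}_1$. The main obstacle is the induction in (i): one must show that the approximation error does not amplify when we peel off successive orthocomplement constraints, and this requires that $K_n$ grow slowly enough compared with both the GC rate and the spline-approximation rate in (A4), and that the eigengaps $\lambda_k-\lambda_{k+1}$, while shrinking in $k$, stay uniformly positive on the truncated index range. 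This rate-matching between $K_n$, the GC envelope for $\mathcal{F}_2$, the spline error $\|G_N\phi_k-\phi_k\|_{\mu,2}$, and the eigengaps is where the real work lies; everything else is bookkeeping against standard spectral-stability estimates.
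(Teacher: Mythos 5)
Your proposal follows essentially the same route as the paper's own proof: the Glivenko--Cantelli uniform bound on $\sup_{f\in\mathcal{B}_1}\bigl|\int\int ff'(V_n-V_0)\bigr|$, matching variational upper/lower bounds using (A4)-projections of $\phi_1$ onto $\mathcal{B}_1\cap\mathcal{S}$ plus the eigengap argument via the Fourier coefficients $\widehat{a}_{nk}=\int\widehat{\phi}_{n1}\phi_k\,\mathrm{d}\mu$ for (i), induction peeling off the orthocomplement constraints, the projection-operator decomposition for (ii), the $\mathbf{H}^{1/2}$ change of variables identifying both variational problems with the eigenproblem for $\mathbf{K}_n$ in (iii), and the two-term split of $\widehat{\lambda}_{nk}-\lambda_k$ (GC term plus basis-difference term against $V_0$, controlled by $\sum_k\lambda_k^2<\infty$) for (iv). The only cosmetic differences are that you resolve the uniqueness/sign issue in (iii) by appealing to (iv), whereas the paper identifies the two argmaxes directly so its dependency order is (i)$\to$(iii)$\to$(iv) (your reordering (i)$\to$(iv)$\to$(iii) is not circular and is fine), and that you make explicit the diagonalization producing $K_n\to\infty$ and the rate-matching among the GC rate, the spline approximation error, and the shrinking eigengaps, which the paper leaves implicit.
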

\begin{proof} [Sketch Proof of Theorem \ref{THM:fHconv}]
To show Theorem \ref{THM:fHconv}, we need to show the following conclusions in sequence:
\begin{enumerate} [(a)]
    \item $\mathcal{F}_1$ and $\mathcal{F}_2$ are Glivenko-Cantelli classes;
    \item Recall that $\widehat{\phi}_{n1}=\arg\max_{f\in\mathcal{B}_1\cap\mathcal{S}}\int_{\mathcal{V}\times\mathcal{V}}f(\bs{s})f(\bs{s}^{\prime})V_n(\bs{s},\bs{s}^{\prime})\mathrm{d}\mu(\bs{s})\mathrm{d}\mu(\bs{s}^{\prime})$. Then $\exists$ a sign sequence $\{S_{n1}: \forall n, S_{n1}\in\{-1,1\}\}$ such that $\|S_{n1}\widehat{\phi}_{n1}-\phi_1\|_{\mu,2} \xrightarrow{\as}0$, as $n\to\infty$.
    \item Suppose for some $1\leq K<\infty$, $\exists$ sign sequences $\{S_{n1}, S_{n2}, \ldots, S_{nK}: S_{nk}\in\{-1,1\}, k=1,\ldots,K\}_n$ such that $\max_{1\leq k\leq K} \|S_{nk}\widehat{\phi}_{nk}-\phi_k\|_{\mu,2}  \xrightarrow{\as}0$, as $n\to\infty$, where $\widehat{\phi}_{nk}\in\mathcal{B}_1\cap \mathcal{S}$, $\forall 1\leq k\leq K$ and $\{\widehat{\phi}_{n1},\ldots,\widehat{\phi}_{nK}\}$ are orthogonal in $\mathcal{H}_0$. Let $\{g_n\}\in\mathcal{H}_0$ be a sequence satisfying $\limsup_{n\to\infty}\|g_n\|_{\mu,2}<\infty$, then both 
    \[
	\|\widehat{\mathcal{H}}_{nK}g_n - \mathcal{H}_{0K}g_n\|_{\mu,2}\xrightarrow{\as}0 ~~~~\text{ and }~~~~
	\|\widehat{\mathcal{H}}_{nK}^{\perp}g_n - \mathcal{H}_{0K}^{\perp}g_n\|_{\mu,2}\xrightarrow{\as}0, ~~~~\text{ as } n\to\infty.
    \]
    \item Assume for some $1\leq K<\infty$, that $\{\widehat{\phi}_{n1},\ldots,\widehat{\phi}_{nK}\}$ form an orthonormal system on $\mathcal{S}\cap\mathcal{H}_0$ with $\max_{1\leq k\leq K} \|S_{nk}\widehat{\phi}_{nk}-\phi_k\|_{\mu,2} \xrightarrow{\as}0$, as $n\to\infty$, for some sign sequences $\{S_{n1},\ldots,S_{nK}:S_{nk}\in\{-1,1\}, k=1,\ldots,K\}_n$. Let $\widehat{\phi}_{n(K+1)} = \argmax_{f\in\mathcal{B}_1\cap \mathcal{S}\cap\widehat{\mathcal{H}}_{nK}^{\perp}} 	\int_{\mathcal{V}\times\mathcal{V}}f(\bs{s})f(\bs{s}^{\prime})V_n(\bs{s},\bs{s}^{\prime})\mathrm{d}\mu(\bs{s})\mathrm{d}\mu(\bs{s}^{\prime})$.
    Then $\exists$ a sign sequence $\{S_{n(K+1)}\in\{-1,1\}\}_n$ such that $\|S_{n(K+1)}\widehat{\phi}_{n(K+1)}-\phi_{(K+1)}\|_{\mu,2} \xrightarrow{\as}0$, as $n\to\infty$.
\end{enumerate}
The detailed proof and more technical details are given in Section A in supplemental materials. 
\end{proof}

\vskip .10in \noindent \textbf{4.2. Theoretical results for individualized treatment regimes estimation} \vskip .10in
\label{SEC:Theory_esti}

Let $\bs{\theta}_0=(\bs{\alpha}_{01},\bs{\alpha}_{02},\beta_{01},\beta_{02})\in\bs{\Theta}$ be the true parameter value.
We require further the following Assumptions (A9)--(A10) to develop the consistency results of the estimates:

\begin{itemize}
	\item[(A9)] 
(Covariates) $\mathrm{E}\|\mathbf{X}\|^2<\infty$. 
The eigenvalues of $\mathrm{E}(\mathbf{X}\mathbf{X}^{\top})$ are bounded away from 0 and infinity.
	\item[(A10)] 
(Coefficients) $\|\alpha_{01}\|<\infty$, $\|\alpha_{02}\|<\infty$, $\beta_{01},\beta_{02}\in \mathcal{H}_0$, $\|\beta_{01}\|^2_{\mu,2}<\infty$ and $\|\beta_{02}\|^2_{\mu,2}<\infty$.
\end{itemize}
	
In what follows, $P$ denotes taking the expectation over a single observation $(\mathbf{W},Y)$.
The following 
Theorem \ref{THM:Conv} (i) gives the consistency result for $\widehat{\bs{\theta}}_n$,
and Theorem \ref{THM:Conv} (ii), which is similar to results found in \cite{Qian:Murphy:11} but generalized to our setting, establishes asymptotic optimality of the treatment regime estimated from the forgoing regression model. 

\begin{theorem}
\label{THM:Conv}
Under Assumptions (A5) -- (A10), as $n\to\infty$, 
\begin{enumerate}[(i)]
    \item $P\left\{\widehat{\bs{\theta}}_n(\mathbf{W})-\bs{\theta}_0(\mathbf{W})\right\}^2 \xrightarrow{\as} 0$;
	\item $V(\pi^{opt}) - V(\widehat{\pi}) \xrightarrow{\as} 0$. 
\end{enumerate}
\end{theorem}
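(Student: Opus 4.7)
\medskip
\noindent\textbf{Proof Plan.} The plan is to handle parts (i) and (ii) in sequence, reducing (ii) to (i) via a standard Q-learning value-bound argument. Throughout, pick truncation levels $K_{1n},K_{2n}\to\infty$ slowly enough that the data-driven basis, Gram matrix, and approximation error can all be controlled simultaneously.

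For part (i), I would use a bias-variance decomposition. Let $\bs{\theta}_n^{\ast}\in\bs{\Theta}_{K_{1n},K_{2n}}$ be the truncated target obtained by replacing $\beta_{01},\beta_{02}$ with their Karhunen--Lo\`{e}ve partial sums in the theoretical bases $\{\phi_k\}$, and write
\[
P\{\widehat{\bs{\theta}}_n(\mathbf{W})-\bs{\theta}_0(\mathbf{W})\}^2 \le 2P\{\widehat{\bs{\theta}}_n(\mathbf{W})-\bs{\theta}_n^{\ast}(\mathbf{W})\}^2 + 2P\{\bs{\theta}_n^{\ast}(\mathbf{W})-\bs{\theta}_0(\mathbf{W})\}^2.
\]
The approximation term vanishes as $n\to\infty$ by (A3) and (A10) together with Parseval's identity applied to $\beta_{01},\beta_{02}$ in the basis $\{\phi_k\}$. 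For the estimation term, I would exploit the isomorphism noted after Assumption (A8) between $\widehat{\bs{\Theta}}_{K_1,K_2}$ and the Euclidean space $\widetilde{\bs{\Theta}}_{K_1,K_2}$ to rewrite (\ref{EQN:coeff}) as a finite-dimensional OLS problem $\widehat{\widetilde{\bs{\theta}}}_n=(n^{-1}\widehat{\mathbb{W}}^{\top}\widehat{\mathbb{W}})^{-1}n^{-1}\widehat{\mathbb{W}}^{\top}\mathbf{Y}$. By Theorem \ref{THM:fHconv}(i)--(iv), the design Gram matrix converges a.s.\ to its theoretical counterpart; (A8), (A9), and a bounded-eigenvalue argument ensure the limit is positive definite (for each fixed truncation). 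The uniform law on the Glivenko--Cantelli classes $\mathcal{F}_1,\mathcal{F}_2$ from Theorem \ref{THM:GC} then controls the score $n^{-1}\widehat{\mathbb{W}}^{\top}\mathbf{Y}$, yielding $P\{\widehat{\bs{\theta}}_n(\mathbf{W})-\bs{\theta}_n^{\ast}(\mathbf{W})\}^2\xrightarrow{\as}0$.

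For part (ii), note that the contrast function $\tau(\mathbf{x},z)=\bs{\alpha}_{02}^{\top}\mathbf{x}+\int_{\V}\beta_{02}(\bs{s})z(\bs{s})\mathrm{d}\mu(\bs{s})$ satisfies $\pi^{\opt}=I(\tau>0)$ and $\widehat{\pi}=I(\widehat{\tau}>0)$ with the obvious definition of $\widehat{\tau}$. Under Assumptions (A5)--(A7), $V(\pi)=\mathrm{E}\{Q(\mathbf{X},Z,\pi(\mathbf{X},Z))\}$, so a short standard calculation yields
\[
V(\pi^{\opt}) - V(\widehat{\pi}) = \mathrm{E}\bigl\{|\tau(\mathbf{X},Z)|\cdot I(\pi^{\opt}\neq\widehat{\pi})\bigr\} \le \mathrm{E}|\tau-\widehat{\tau}| \le \sqrt{P(\tau-\widehat{\tau})^2},
\]
using $\{\pi^{\opt}\neq\widehat{\pi}\}\subset\{|\tau|\le|\tau-\widehat{\tau}|\}$ and Cauchy--Schwarz. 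Finally, taking $A=1$ and $A=0$ in $\widehat{\bs{\theta}}_n(\mathbf{W})-\bs{\theta}_0(\mathbf{W})$ and differencing isolates $\widehat{\tau}-\tau$; together with positivity (A6) this gives $P(\tau-\widehat{\tau})^2 \le c^{-1}P\{\widehat{\bs{\theta}}_n(\mathbf{W})-\bs{\theta}_0(\mathbf{W})\}^2$, and part (i) closes the argument.

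The main obstacle is the estimation-error step in part (i): one must reconcile the growing truncation dimensions $K_{1n},K_{2n}$ with the sample-dependent bases $\{\widehat{\phi}_{\ell,nk}\}$. The delicate point is that the design matrix $\widehat{\mathbb{W}}$, the Gram matrix, and the target $\bs{\theta}_n^{\ast}$ all drift with $n$; a naive plug-in of Theorem \ref{THM:fHconv} at a fixed $K$ is insufficient. I expect the resolution to require choosing $K_{1n},K_{2n}$ slowly enough that Theorem \ref{THM:fHconv}(ii) applies uniformly over the eigenspace, ensuring a uniform-in-$K$ lower bound on the smallest eigenvalue of the Gram matrix, and that the stochastic score converges a.s.\ via the Glivenko--Cantelli property rather than merely in probability.
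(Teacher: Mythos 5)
Your plan is essentially sound, and for part (ii) you have in fact supplied the argument the paper only cites: the paper's proof of (ii) is the single sentence that it ``follows from Qian and Murphy (2011) directly,'' and your chain $V(\pi^{\opt})-V(\widehat{\pi})=\mathrm{E}\{|\tau|\,I(\pi^{\opt}\neq\widehat{\pi})\}\leq \mathrm{E}|\tau-\widehat{\tau}|\leq \{P(\tau-\widehat{\tau})^2\}^{1/2}$, together with the conditioning-on-$(\mathbf{X},Z)$ step using positivity (A6) to get $P(\tau-\widehat{\tau})^2\leq (2/c)\,P\{\widehat{\bs{\theta}}_n(\mathbf{W})-\bs{\theta}_0(\mathbf{W})\}^2$, is exactly the content of that citation. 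For part (i), however, your route differs structurally from the paper's. You propose explicit OLS inversion of the Gram matrix plus a bias--variance split around the Karhunen--Lo\`{e}ve truncation $\bs{\theta}_n^{\ast}$. The paper never inverts the design: it runs an M-estimation argument --- the basic inequality from the definition of $\widetilde{\bs{\theta}}_n$ (their display (\ref{EQN:DiffDelta})), a bound $\limsup_n\|\widetilde{\bs{\delta}}_n\|_2<\infty$ (Lemma \ref{LEM:OrderDelta}, which is where $\|\widehat{\mathbf{D}}_n-\mathbf{D}\|_{\infty}\xrightarrow{\as}0$ and fixed-$K$ positive definiteness enter), Glivenko--Cantelli uniform convergence of $\mathbb{P}_n\{Y-\widehat{\bs{\theta}}_n(\mathbf{W})\}^2$ (Lemma \ref{LEM:ConvThetahat}), and then a subsequence/unique-minimizer identification: every limit point of $\widehat{\bs{\theta}}_n$ must minimize $P\{Y-\bs{\theta}(\mathbf{W})\}^2$ over $\bs{\Theta}_{K_1,K_2}$, whose minimizer $\bs{\theta}_0^{\ast}$ is unique by (A8). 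One substantive point your version must add: the OLS limit is $\bs{\theta}_0^{\ast}$, the population $L_2(P)$ projection of $\bs{\theta}_0(\mathbf{W})$ onto the truncated model, \emph{not} your truncated target $\bs{\theta}_n^{\ast}$; these differ whenever $\mathbf{X}$ is correlated with the scores, or the scores of $Z$ and $AZ$ are correlated. The fix is one line --- by projection optimality and $\mathrm{E}(\varepsilon|\mathbf{X},A,Z)=0$, $P\{\bs{\theta}_0^{\ast}(\mathbf{W})-\bs{\theta}_0(\mathbf{W})\}^2\leq P\{\bs{\theta}_n^{\ast}(\mathbf{W})-\bs{\theta}_0(\mathbf{W})\}^2\to0$ --- which is the role the paper's display (\ref{EQN:Theta0W}) plays; without it your decomposition does not close.

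There is also one expectation in your plan that would fail as literally stated: a ``uniform-in-$K$ lower bound on the smallest eigenvalue of the Gram matrix'' cannot hold as $K_{1n},K_{2n}\to\infty$. The diagonal entries of $\mathbf{D}$ corresponding to the scores are $\mathrm{E}\,\widetilde{U}_{\ell,k}^2=\lambda_{\ell,k}+(\mathrm{E}\widetilde{U}_{\ell,k})^2\to0$ under (A2)--(A3) (the means are square-summable since $\mathrm{E}Z\in\mathcal{H}_0$), and the smallest eigenvalue of a symmetric matrix is at most its smallest diagonal entry, so the eigenvalue floor necessarily decays to $0$ along growing truncations. The paper's positive-definiteness claim ($c_1>0$ from (A3) and (A9)) is asserted only for \emph{fixed} $K_1,K_2$, with $c_1$ implicitly $K$-dependent; the growing-$K$ regime is handled by the diagonalization device --- prove everything at fixed $K$, then let $K_{1n},K_{2n}\to\infty$ ``sufficiently slowly'' so that the $K$-dependent eigenvalue floor decays more slowly than the estimation error (admittedly, the paper itself is terse at exactly this point). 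Your plan should therefore claim a rate comparison under slow growth of $K_{1n},K_{2n}$, rather than uniformity of the eigenvalue bound; with that repair and the $\bs{\theta}_0^{\ast}$-projection step added, your argument goes through.
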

\begin{proof} [Sketch Proof of Theorem \ref{THM:Conv}]
The conclusion in (i) follows from the facts of the convergence of bases, the negligibility of the basis approximation error caused by the basis cutoff, and the fact that $\limsup\|\widehat{{\bs{\theta}}}_n\|_{\ast}$ is bounded. The conclusion in (ii) now follows from \cite{Qian:Murphy:11} directly. See the detailed proof and more technical details in Section B in supplemental materials.
\end{proof}

\vskip .20in \noindent \textbf{5. Implementation} \vskip .10in
\renewcommand{\thetable}{5.\arabic{table}} \setcounter{table}{0} 
\renewcommand{\thefigure}{5.\arabic{figure}} \setcounter{figure}{0}
\renewcommand{\theequation}{5.\arabic{equation}} \setcounter{equation}{0} 
\label{SEC:implem}

This section outlines the practical implementation of the proposed procedure. We begin by discussing the selection criteria for bases numbers, followed by the implementation details of BST.

\vskip .10in \noindent \textbf{5.1. Selection Criteria for Bases Numbers} \vskip .10in
\label{SUBSEC:PVE_PAVE}

Selecting the leading principal components (PC) is a popular topic in FDA, particularly for functional data with univariate indexes. To this end, two methods have gained favor in the literature: ranking PC based on the percentage of variance explained \citep[PVE;][]{Kong:Staicu:Maity:16}, and the percentage of association–variation explained \citep[PAVE;][]{Su:Di:Hsu:17}, due to their high testing power and computational efficiency. In this study, we adopt both PVE and PAVE criteria to select the leading PC bases, similar to the univariate functional linear model setting. Specifically, for a given threshold $\alpha$, we define the eigenvalues $\widehat{\lambda}_{nk}$, $k=1,\ldots,K_n$ as in (\ref{EQN:lambdahat}). Then
\begin{itemize}
	\item PVE selects the number of bases $K_n$ such that
	\[
		\frac{\sum_{k=1}^{K_n}\widehat{\lambda}_{nk}}{\sum_{k=1}^{\infty}\widehat{\lambda}_{nk}} \geq \alpha ~~\text{ and }~~
		\frac{\sum_{k=1}^{K_n-1}\widehat{\lambda}_{nk}}{\sum_{k=1}^{\infty}\widehat{\lambda}_{nk}} < \alpha;
	\]
	\item PAVE selects the number of bases $K_n$ such that
	\[
		\frac{\sum_{k=1}^{K_n}\widehat{\lambda}_{nk}\widehat{\gamma}_k^2}{\sum_{k=1}^{K^{\ast}_n}\widehat{\lambda}_{nk}\widehat{\gamma}_k^2} \geq \alpha ~~\text{ and }~~
		\frac{\sum_{k=1}^{K_n-1}\widehat{\lambda}_{nk}\widehat{\gamma}_k^2}{\sum_{k=1}^{K^{\ast}_n}\widehat{\lambda}_{nk}\widehat{\gamma}_k^2} < \alpha,
	\]
where $K^{\ast}_n$ is selected by pre-fitting the truncated model with a high threshold of PVE, i.e., somewhere in the range $[0.95, 0.99]$, and where $\{\widehat{\gamma}_k\}_{k=1}^{K^{\ast}_n}$ are the corresponding PC basis coefficients. 
\end{itemize}
The threshold $\alpha$ is usually chosen to be within $[0.95, 0.99]$ in practice. In our work, we set $\alpha=0.99$ for PVE, PAVE, and the threshold of PVE in the pre-fitting step for PAVE.

In the following, we adopt both PVE and PAVE as selection criteria for bases numbers, denoting the results obtained by the selection criteria PVE and PAVE with the superscripts ``PVE'' and ``PAVE'', respectively.

\vskip .10in \noindent \textbf{5.2. Construction of bivariate spline bases over triangulation} \vskip .10in

The proposed 2D-FPC bases are constructed using the BST as the initial basis. The construction of BST involves the input of parameters for triangulation and spline bases. To investigate the impact of BST on the performance of the proposed 2D-FPC bases, we conducted the simulation study in various settings, such as coarse or fine grids for triangulation, and different smoothness conditions for the splines, with larger degree $d = 5$ or smaller degree $d = 3$. The simulation results indicate that the proposed method is robust regarding the initial bases. 
We briefly discuss the implementation of BST in this section. 

\textit{Triangulation.} Optimal triangulation involves determining both the number and shape of the triangles. According to the literature \citep{Mu:Wang:Wang:18, Yu:etal:21,Li:Wang:Wang:21}, BST performs consistently well when an adequate number of triangles is used. In practice, it is suggested that various numbers of triangles with coarse or fine grids can be attempted to select the optimal number of triangles. Once the number of triangles is determined, triangulated meshes can be constructed using typical triangulation methods, such as Delaunay Triangulation \citep{Bern:Eppstein:95}, which is implemented in the \textsf{R} package \textsf{Triangulation} \citep{Triangulation}.

\textit{BST parameters.} BST with a higher degree $d$ is expected to provide a more accurate approximation and requires more computational power. Throughout, we use $(d,r)=(5,1)$ for BST, as it attains full approximation power asymptotically \citep{Lai:Schumaker:07}. After the triangulation is constructed, BST can be generated using the \textsf{R} package \textsf{BPST} \citep{BPST}.

\vskip .20in \noindent \textbf{6. Empirical study} \vskip .10in
\renewcommand{\thetable}{6.\arabic{table}} \setcounter{table}{0} 
\renewcommand{\thefigure}{6.\arabic{figure}} \setcounter{figure}{0}
\renewcommand{\theequation}{6.\arabic{equation}} \setcounter{equation}{0} 
\label{SEC:simulation}

In this section, we investigate the finite sample performance of the proposed method. 
The experimental data is generated from the underlying model:
\[
	Y_i = \bs{\alpha}_1^{\top}\mathbf{X}_i + \int_{\V} \beta_1(\bs{s})Z_i(\bs{s})\mathrm{d}\bs{s} + A_i\left\{\bs{\alpha}_2^{\top}\mathbf{X}_i + \int_{\V} \beta_2(\bs{s})Z_i(\bs{s})\mathrm{d}\bs{s}\right\} + \varepsilon_i,
	~~i=1,\ldots,n.
\]
For each subject $i$, we generate $\mathbf{X}_i\in\mathbb{R}^q$ independently from $\text{MVN}(\bs{0}_q, \mathbf{\Omega}_q(r))$, where $\{\mathbf{\Omega}_q(r)\}_{\ell,\ell^{\prime}}=r^{|\ell-\ell^{\prime}|}$, and the dependence structure in $\mathbf{X}_i$ is indexed by the autocorrelation parameter $r$; the error $\varepsilon_i$ is independently generated from $N(0,1)$. To simulate the within-image dependence, we generate the imaging data $Z_i(\bs{s})=\sum_{k=1}^2\zeta_{ik}Z_k^{\zeta}(\bs{s})$ at a grid of $N_s=40\times40$ pixels, where $(\zeta_{i1},\zeta_{i2})^{\top}\sim \text{MVN}(\bs{0}_2,\mathbf{I}_2)$, and $Z_1^{\zeta}(\cdot)$ and $Z_2^{\zeta}(\cdot)$ are quadratic and exponential functions, respectively, with forms $Z_1^{\zeta}(\bs{s})=20\{(s_1-0.5)^2+(s_2-0.5)^2\}$ and $Z_2^{\zeta}(\bs{s})=\exp[-15\{(s_1-0.5)^2+(s_2-0.5)^2\}]$. The contour plots of $Z_1^{\zeta}(\cdot)$ and $Z_2^{\zeta}(\cdot)$ are illustrated in Figure \ref{FIG:eg1-1}, left panel.
\begin{figure}[thbp]
\begin{center}
\begin{tabular}{ccc}
	\includegraphics[height=1.36in]{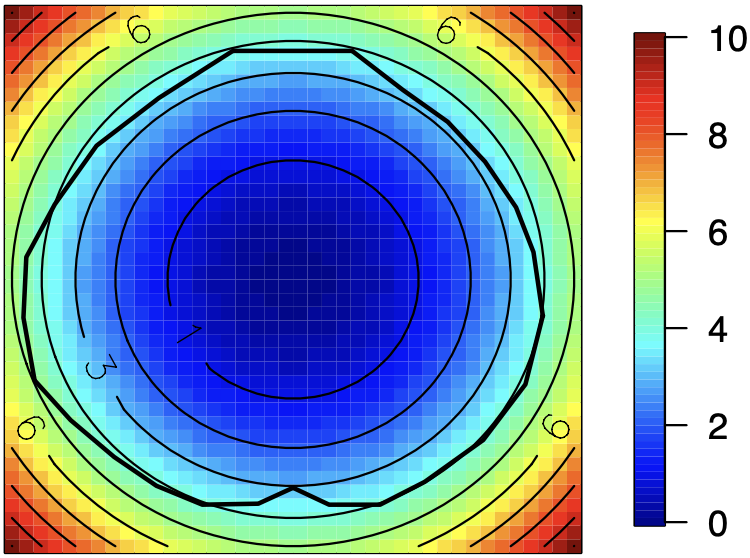} 
	& \includegraphics[height=1.36in]{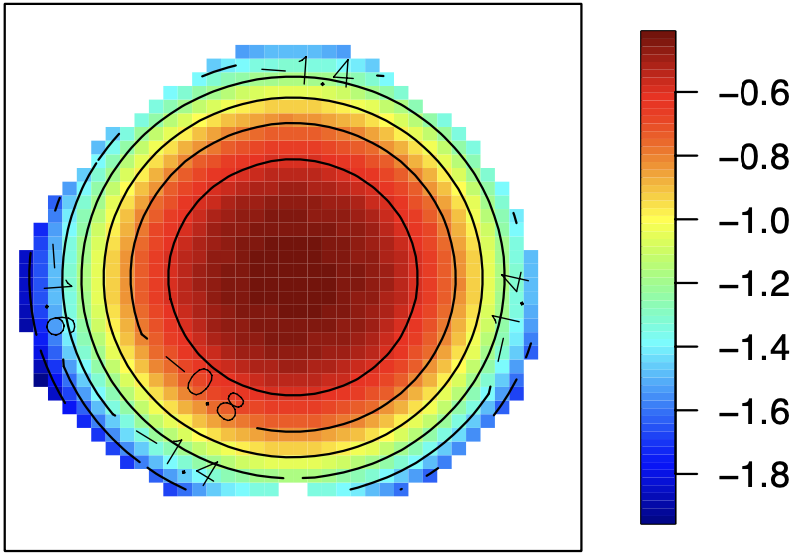}
	& \includegraphics[height=1.36in]{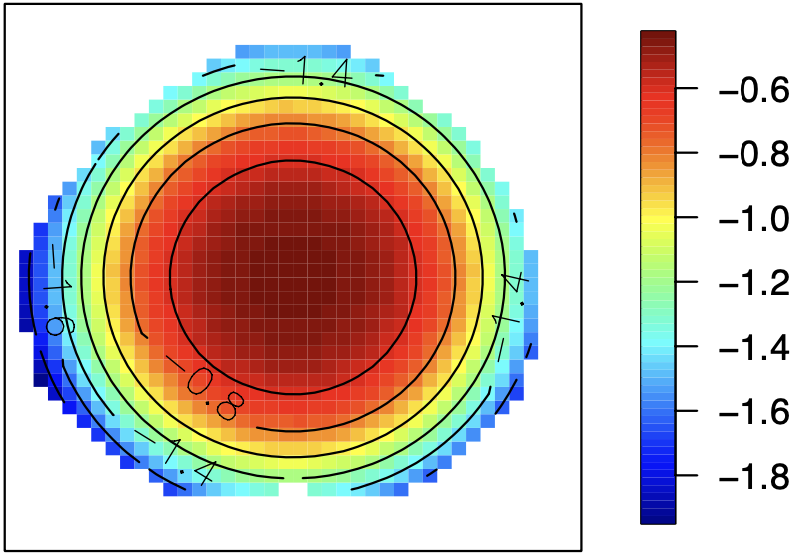} \vspace{-3pt}\\ 
	$Z_1^{\zeta}(\cdot)~~~~~~$ 
	& $\widehat{\phi}_{n1}(\cdot)$ for $Z(\cdot)~~~~~~~~~~~~$ 
	& $\widehat{\phi}_{n1}(\cdot)$ for $AZ(\cdot)~~~~~~~~~~~~$ \\
	\includegraphics[height=1.36in]{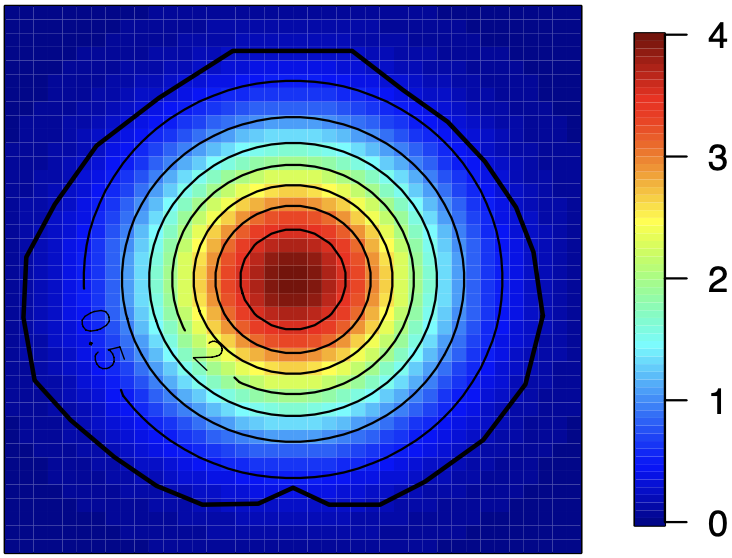} 
	& \includegraphics[height=1.36in]{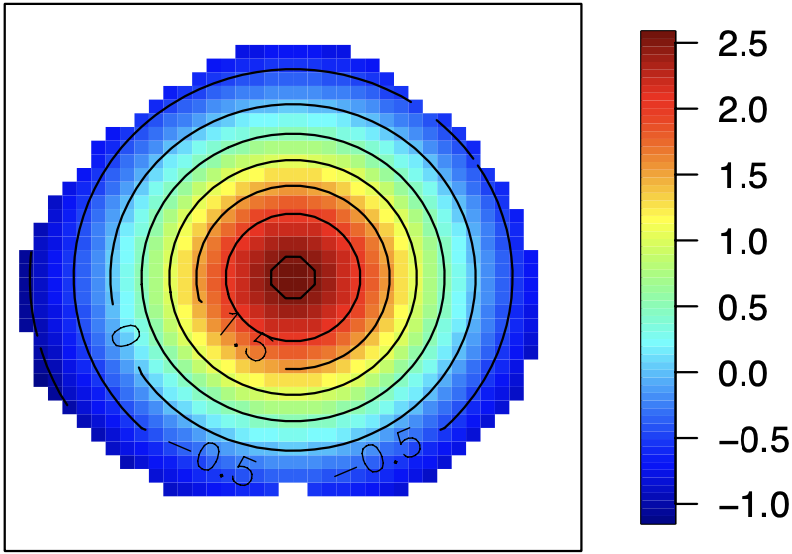}
	& \includegraphics[height=1.36in]{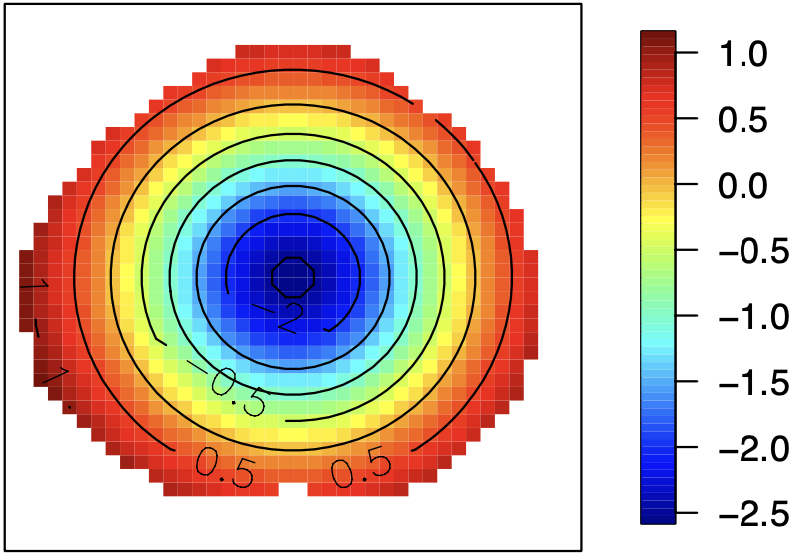} \vspace{-3pt}\\ 
	$Z_2^{\zeta}(\cdot)~~~~~~$ 
	& $\widehat{\phi}_{n2}(\cdot)$ for $Z(\cdot)~~~~~~~~~~~~$ 
	& $\widehat{\phi}_{n2}(\cdot)$ for $AZ(\cdot)~~~~~~~~~~~~$ \\
\end{tabular}
\end{center} \vspace*{-.6cm}
\caption{Contour plots of $Z_j^{\zeta}(\cdot)$ (left panel) and the corresponding bases functions $\widehat{\phi}_{nj}(\cdot)$ for $Z(\cdot)$  (middle panel) and $AZ(\cdot)$ (right panel), $j=1,2$.}
\label{FIG:eg1-1}
\end{figure}
Throughout, we take $q=5$, $\bs{\alpha}_1=\bs{\alpha}_2=\bs{1}_5^{\top}$, and the coefficient maps $\beta_1(\bs{s})=\beta_2(\bs{s})=1$, and we consider the autocorrelation $r=0$ and $r=0.5$, which corresponds to the independent and dependent covariate structures, respectively. 

We evaluate the methods on the accuracy of the estimated ITR with respect to the average marginal mean outcome $V(\widehat{\pi}^{\text{PVE}})$ and $V(\widehat{\pi}^{\text{PAVE}})$, for PVE and PAVE criteria, respectively. For comparison, we also compute the average optimal marginal mean outcome $V(\pi^{opt})$, which works as a benchmark and is only computable in simulations. We also evaluate the methods on the accuracy of coefficient estimation by computing the mean squared errors (MSEs). All the results are based on 100 Monte Carlo replications.

As illustrated in Figure \ref{FIG:eg1-1}, the generated bases functions are the normalized true function up to a sign difference, which validates the conclusion in Theorem \ref{THM:fHconv}. 
Table \ref{TAB:Ypi} presents the marginal mean outcome for the proposed methods. In all settings, the average marginal mean outcomes of PVE and PAVE criteria perform similarly, and both are close to those with optimal treatment regime, $\pi^{opt}$. 
Table \ref{TAB:mse} shows the MSEs for linear coefficients $\bs{\alpha}_1$ and $\bs{\alpha}_2$. With the increase of sample size, the MSEs of all estimates decrease significantly. And the estimates with bases selected by PVE and PAVE perform similarly, regardless of the settings.

\begin{table}[htbp]
\renewcommand{\arraystretch}{0.7}
\begin{center}
\caption{Estimated mean outcome for optimal $\pi$ and proposed method with PVE and PAVE criteria.}
\scalebox{1}{
\begin{tabular}{lcccccc}
    \hline\hline
    & \multicolumn{3}{c}{$r=0$} & \multicolumn{3}{c}{$r=0.5$} \\
	\cmidrule(lr){2-4}\cmidrule(lr){5-7}
    & \multicolumn{1}{c}{$n=100$} & \multicolumn{1}{c}{$n=200$} & \multicolumn{1}{c}{$n=500$} & \multicolumn{1}{c}{$n=100$} & \multicolumn{1}{c}{$n=200$} & \multicolumn{1}{c}{$n=500$} \\
    \hline
    $V(\widehat{\pi}^{opt})$ & 2.791 & 2.814 & 2.798 & 3.433 & 3.442 & 3.456 \\$V(\widehat{\pi}^{\text{PVE}})$ & 2.763 & 2.808 & 2.804 & 3.431 & 3.455 & 3.459 \\$V(\widehat{\pi}^{\text{PAVE}})$ & 2.763 & 2.808 & 2.804 & 3.431 & 3.455 & 3.459 \\
    \hline\hline
\end{tabular}}%
\label{TAB:Ypi}%
\end{center}
\end{table}%

\begin{table}[htbp]
\renewcommand{\arraystretch}{0.7}
\begin{center}
\caption{MSEs for linear coefficients $\bs{\alpha}_1$ and $\bs{\alpha}_2$.}
\label{TAB:mse}%
\begin{tabular}{ccccccccccccc}
    \hline\hline
    \multirow{2}[0]{*}{$r$} & \multirow{2}[0]{*}{$n$} & \multirow{2}[0]{*}{Criteria} & \multicolumn{5}{c}{MSE of $\bs{\alpha}_1$ ($\times 10^{-2}$)} & \multicolumn{5}{c}{MSE of $\bs{\alpha}_2$ ($\times 10^{-2}$)} \\
	\cmidrule(lr){4-8}\cmidrule(lr){9-13}
    & & & $\alpha_{11}$ & $\alpha_{12}$ & $\alpha_{13}$ & $\alpha_{14}$ & $\alpha_{15}$ & $\alpha_{21}$ & $\alpha_{22}$ & $\alpha_{23}$ & $\alpha_{24}$ & $\alpha_{25}$ \\
    \hline
    0     & 100   & PVE   & 2.53  & 1.29  & 1.30  & 1.19  & 1.42  & 6.85  & 1.08  & 1.35  & 1.31  & 1.47 \\
          &       & PAVE  & 2.53  & 1.29  & 1.30  & 1.19  & 1.42  & 6.85  & 1.08  & 1.35  & 1.31  & 1.47 \\
          & 200   & PVE   & 1.13  & 0.65  & 0.68  & 0.47  & 0.60  & 3.73  & 0.54  & 0.54  & 0.58  & 0.61 \\
          &       & PAVE  & 1.13  & 0.65  & 0.68  & 0.47  & 0.60  & 3.73  & 0.54  & 0.54  & 0.58  & 0.61 \\
          & 500   & PVE   & 0.46  & 0.19  & 0.20  & 0.22  & 0.22  & 1.48  & 0.23  & 0.21  & 0.17  & 0.20 \\
          &       & PAVE  & 0.46  & 0.19  & 0.20  & 0.22  & 0.22  & 1.48  & 0.23  & 0.21  & 0.17  & 0.20 \\
    0.5   & 100   & PVE   & 2.53  & 2.01  & 2.56  & 1.89  & 1.27  & 6.85  & 1.71  & 2.01  & 1.79  & 1.89 \\
          &       & PAVE  & 2.53  & 2.01  & 2.56  & 1.89  & 1.27  & 6.85  & 1.70  & 2.01  & 1.79  & 1.89 \\
          & 200   & PVE   & 1.13  & 0.67  & 1.03  & 1.03  & 1.02  & 3.73  & 0.67  & 0.78  & 1.08  & 0.78 \\
          &       & PAVE  & 1.13  & 0.67  & 1.03  & 1.03  & 1.02  & 3.73  & 0.67  & 0.78  & 1.08  & 0.78 \\
          & 500   & PVE   & 0.46  & 0.29  & 0.31  & 0.34  & 0.25  & 1.48  & 0.28  & 0.32  & 0.39  & 0.27 \\
          &       & PAVE  & 0.46  & 0.29  & 0.31  & 0.34  & 0.25  & 1.48  & 0.28  & 0.32  & 0.39  & 0.27 \\
    \hline\hline
\end{tabular}%
\end{center}
\end{table}%

\vskip .20in \noindent \textbf{7. Application to ADNI data} \vskip .10in
\renewcommand{\thetable}{7.\arabic{table}} \setcounter{table}{0} 
\renewcommand{\thefigure}{7.\arabic{figure}} \setcounter{figure}{0}
\renewcommand{\theequation}{7.\arabic{equation}} \setcounter{equation}{0} 
\label{SEC:ADNI}

The data that drives our research comes from the large neuroimaging datasets in the Alzheimer's Disease Neuroimaging Initiative (ADNI, \url{http://adni.loni.usc.edu}).
The longitudinal cohort study in ADNI, which has gone through three phases including ADNI1, ADNI GO, and ADNI2, is a comprehensive neuroimaging study that collected a variety of necessary phenotypic measures, including structural, functional, and molecular neuroimaging, biomarkers, clinical and neuropsychological variables, and genomic information \citep{Weiner:Veitch:15, Petersen:etal:10}. 
These data provide unprecedented resources for statistical methods development and scientific discovery.

We now analyze the records from 441 participants through the ADNI1 and ADNI GO phases. The data contains the following variables: 

\begin{itemize}
    \item Mini-mental state examination (MMSE) scores: response variable, ranging from 15 to 30, where lower values indicate a more severe AD status.
    \item Fludeoxyglucose positron emission tomography (PET) scans: neuroimaging representing brain metabolism activity level and can be used to make early diagnoses of AD, with $79\times95$ pixels,  with the measurements ranging from 0.013 to 2.149. The left panel in Figure \ref{FIG:ADNI_Z} shows the PET images for four randomly selected subjects.
    \item Age: the participants' ages, ranging from 55 to 89 years.
    \item Education: the participants' educational status, ranging from 4 to 20 years. 
    \item Gender: the participants' gender, with 169 female and 278 male. We created a dummy variable with value 1 representing female and 0 for male.
    \item Ethnicity: the participants' ethnic categories, with 12 Hispanic/Latino, 429 not Hispanic/Latino, and 6 unknown. We created a dummy variable with value 1 representing Hispanic/Latino and 0 for others.
    \item Race: the participants' racial categories, with 1 Indian/Alaskan, 7 Asians, 24 Blacks, 413 Whites, and 2 more than one category. We created a dummy variable with value 1 representing white and 0 for others.
    \item Marriage: the participants' marital status, with 35 divorced, 344 married, 12 never married, and 56 widowed. We created a dummy variable with value 1 representing married and 0 for others.
    \item Apolipoprotein (APOE) gene: the number of copies of APOE4 gene, the most prevalent genetic risk factor for AD \citep{Ashford:Mortimer:02}, ranging from 0 to 2. We created two dummy variables, APOE1 and APOE2, to denote those with one and two copies of APOE4 gene, respectively. 
    \item Treatment: During the ADNI1 and ADNI GO study periods, the US FDA-approved therapies for AD symptoms included cholinesterase inhibitors and the NMDA-partial receptor antagonist memantine. Cholinesterase inhibitors, including donepezil, galantamine, and rivastigmine, are prescribed for mild-to-moderate-stage AD. Memantine is prescribed for the treatment of AD either as monotherapy or in combination with one of the cholinesterase inhibitors for moderate-to-severe stage AD \citep{Schneider:11}. We denote by $A=1$ those participants taking one or more combinations of Donepezil (Aricept), Galantamine (Razadyne), Rivastigmine (Exelon), and Memantine (Namenda), while we use $A=0$ for those wihout concurrent medical records, or taking some other treatments or supplements. The distribution of participants with $A=1$ is illstruated in Figure \ref{FIG:trt_VD}.
\end{itemize}

\begin{figure}[htbp]
\begin{center}
	\includegraphics[scale=.1]{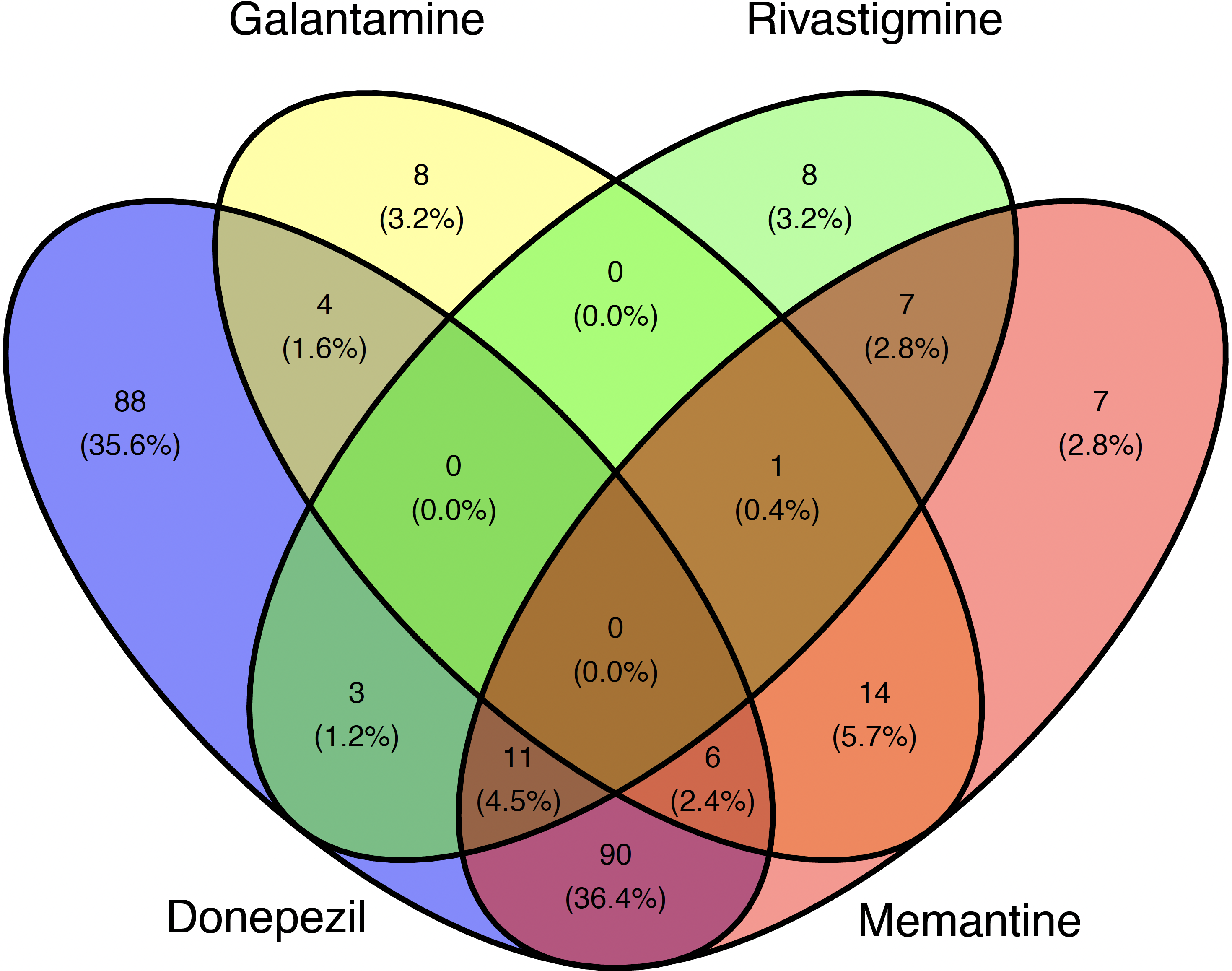} 
\end{center} \vspace*{-.6cm}
\caption{Distribution of patients by treatments (total sample size 441, with 247 $A=1$).}
\label{FIG:trt_VD}
\end{figure}

We apply the proposed method to those data using both PVE and PAVE criteria.
The estimated coefficient map for $\beta_1(\cdot)$ and $\beta_2(\cdot)$ are shown in the top line of Figure \ref{FIG:beta2hat}, in which PVE and PAVE criteria produce similar results, and the estimates of $\beta_2(\cdot)$ illustrate brain structures.
We also display visually the three leading principal component basis maps in the middle and bottom lines of Figure \ref{FIG:beta2hat}. All these estimated basis maps illustrate brain structures.

\begin{figure}[htbp]
\begin{center}
\begin{tabular}{cccc}
    \includegraphics[height=1.25in]{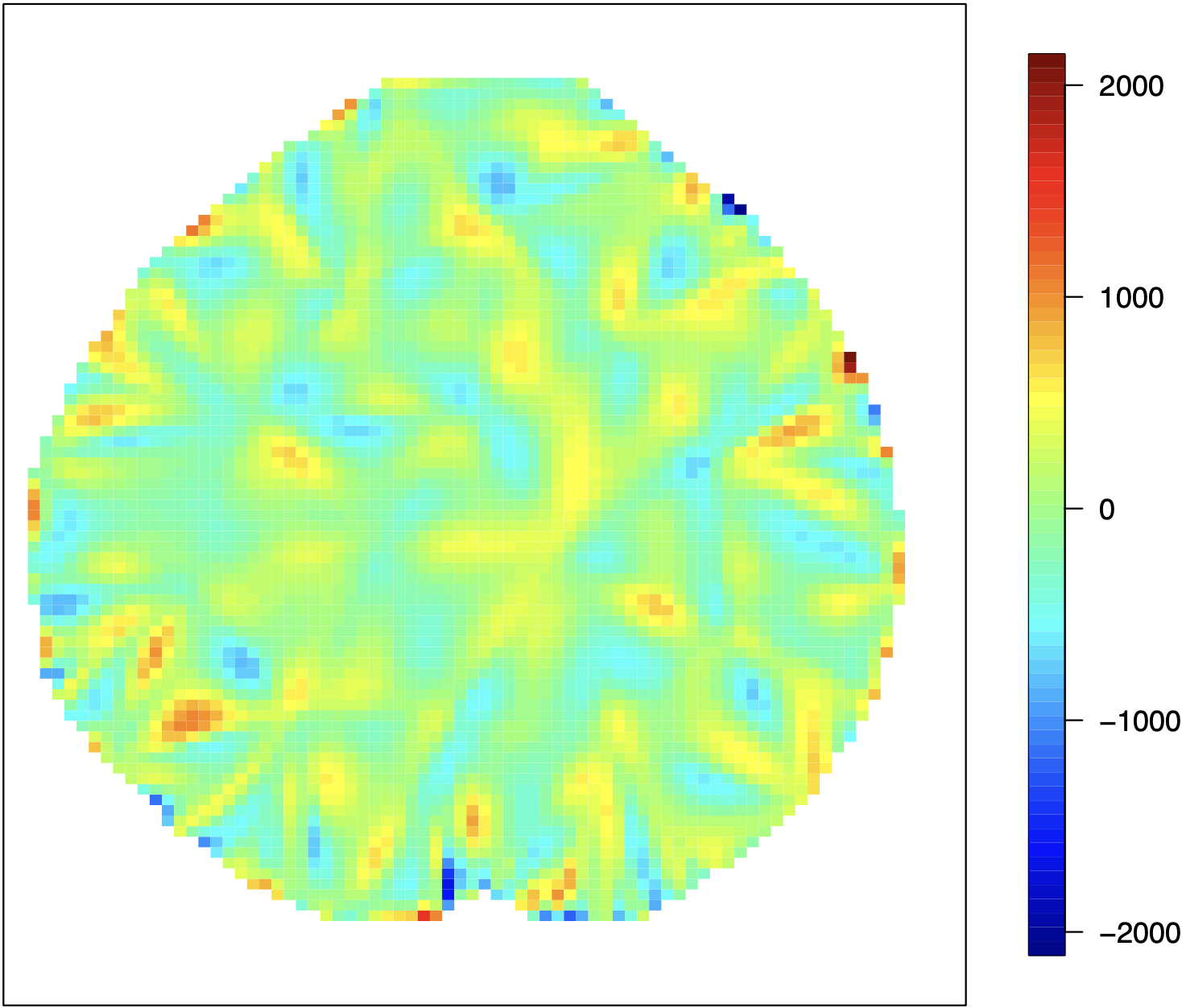}
	& \includegraphics[height=1.25in]{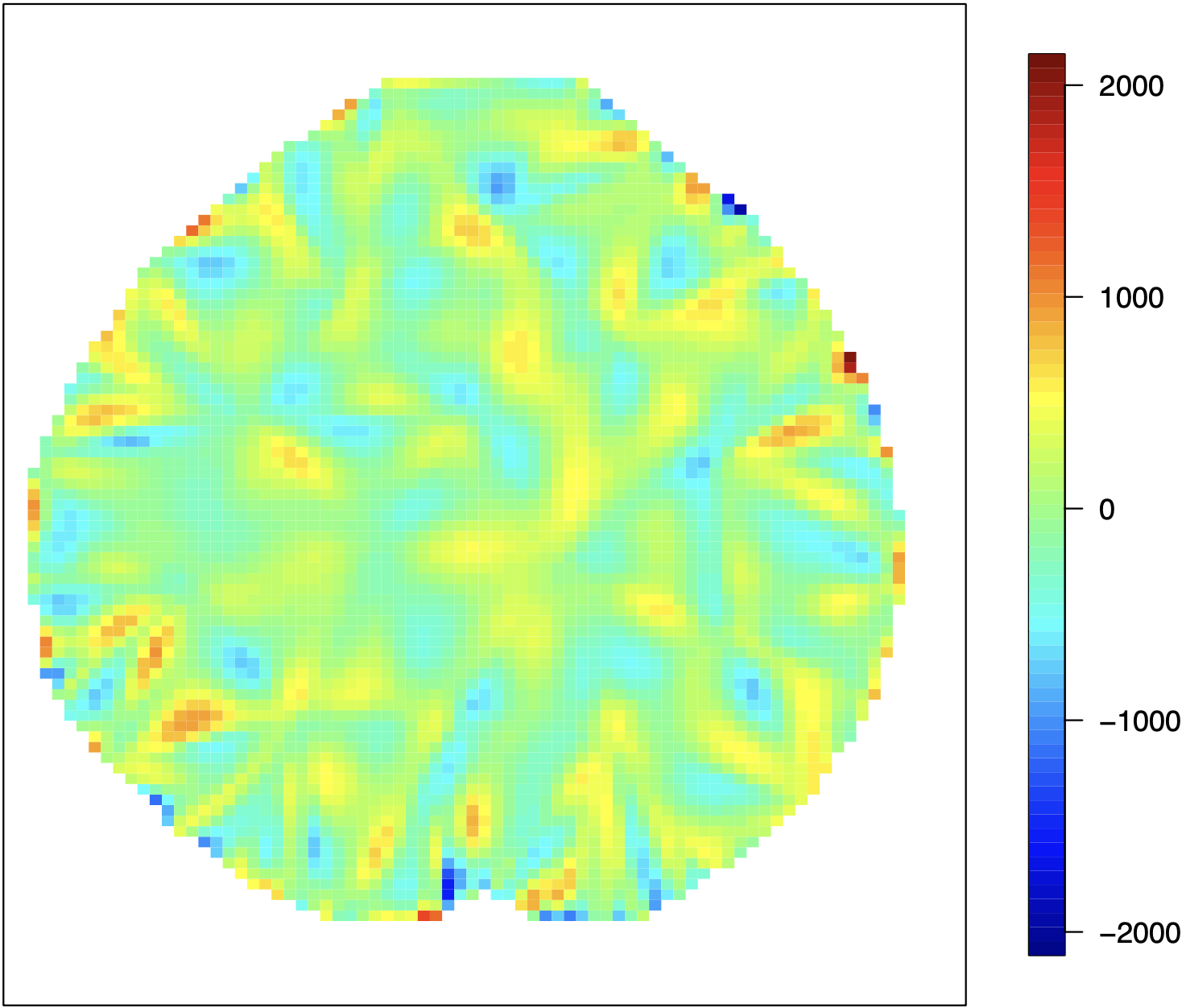}
	& \includegraphics[height=1.25in]{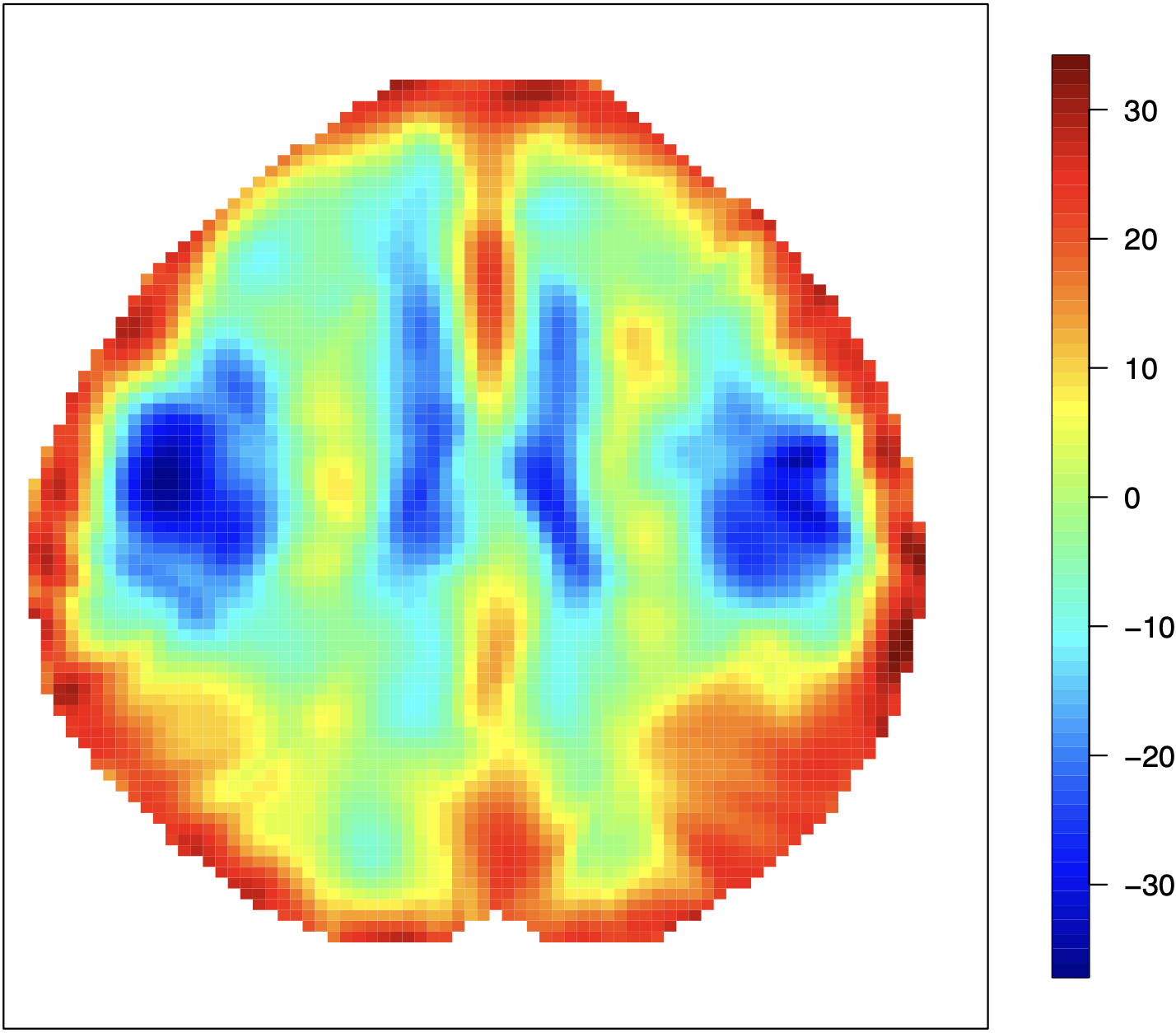}
	& \includegraphics[height=1.25in]{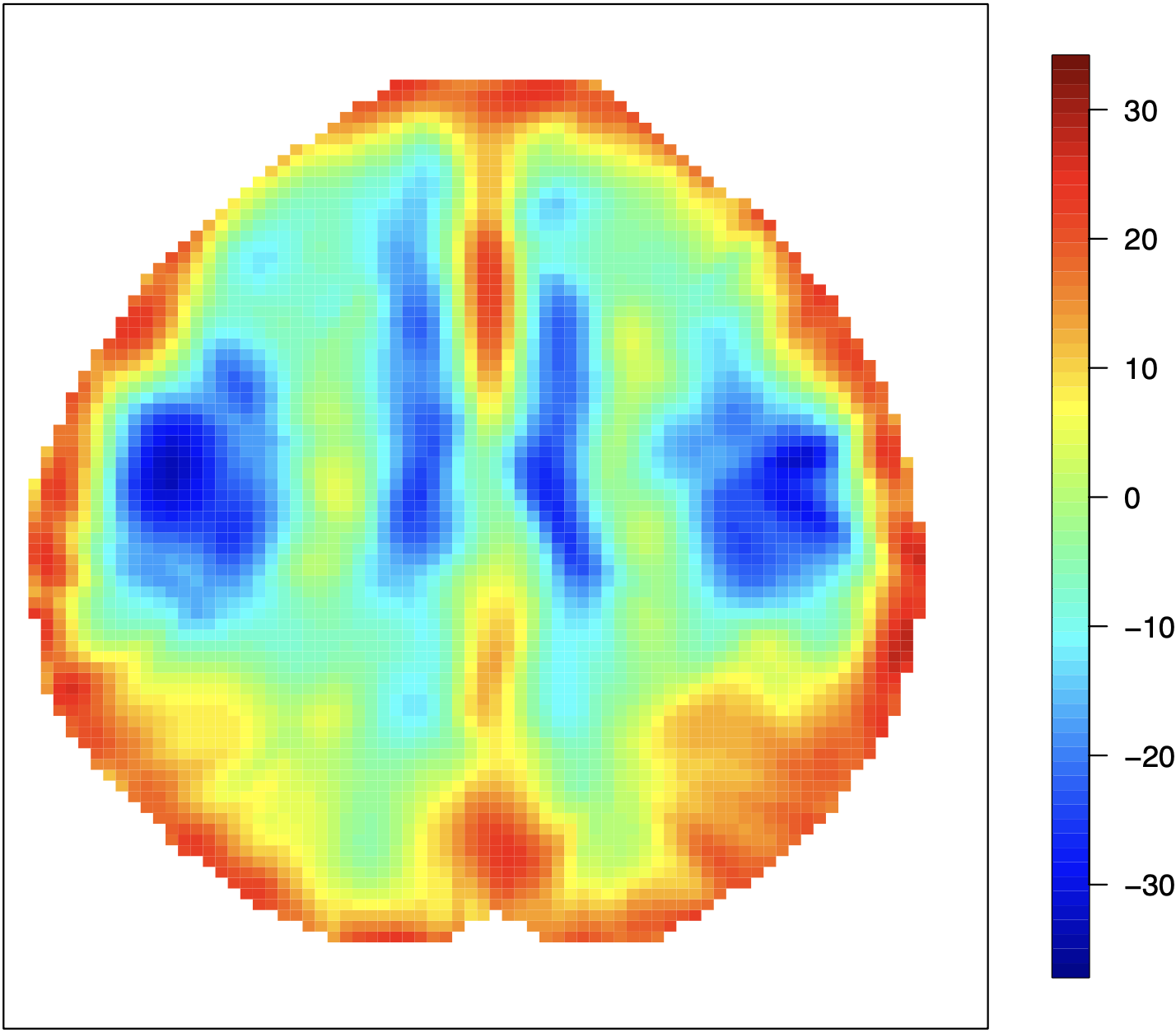} \vspace{-3pt} \\ 
	$\widehat{\beta}_1^{\text{PVE}}(\cdot)~~$
	& $\widehat{\beta}_1^{\text{PAVE}}(\cdot)~~$
	& $\widehat{\beta}_2^{\text{PVE}}(\cdot)~~$
	& $\widehat{\beta}_2^{\text{PAVE}}(\cdot)~~$ \\
    \includegraphics[height=1.25in]{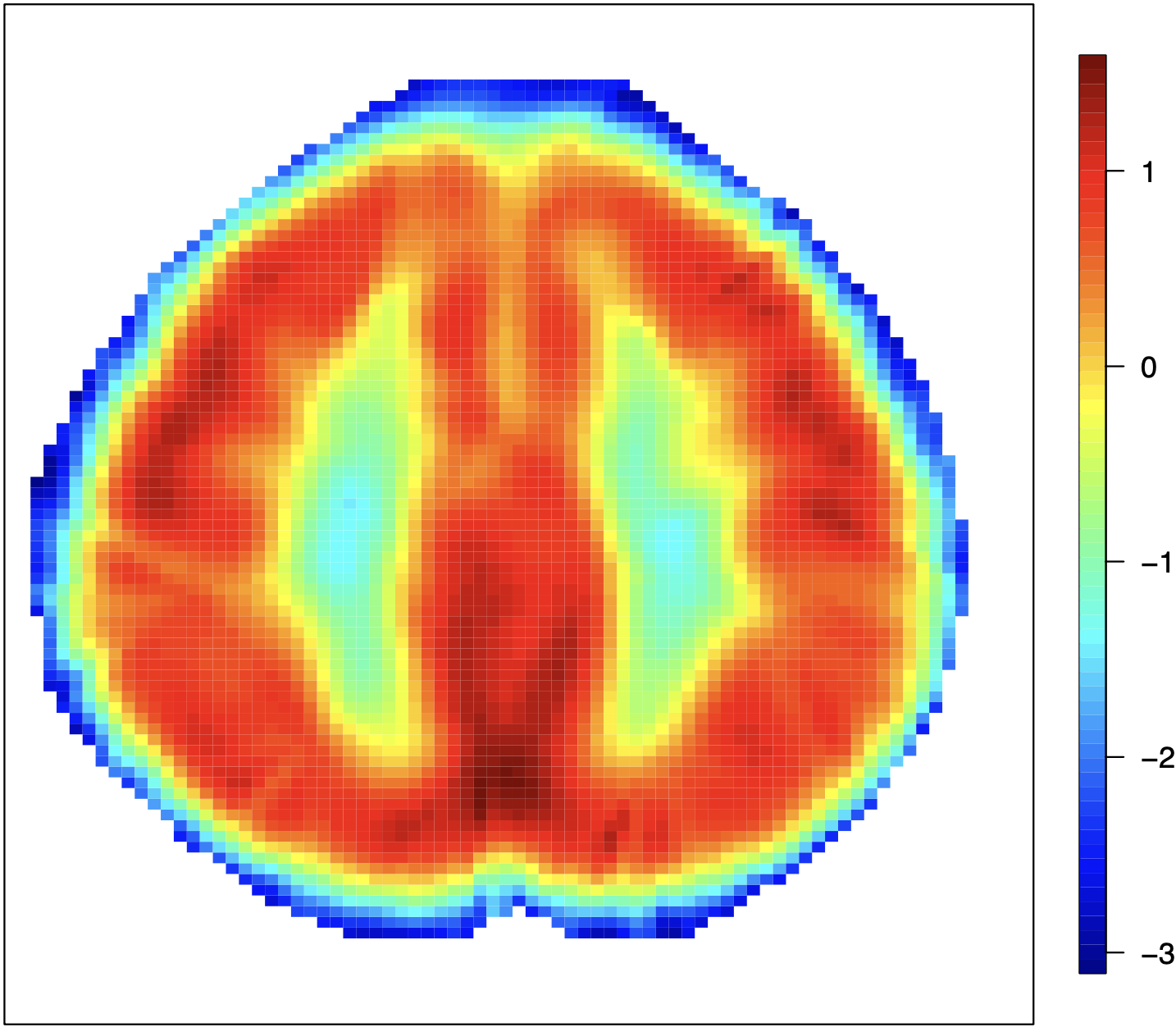}
	& \includegraphics[height=1.25in]{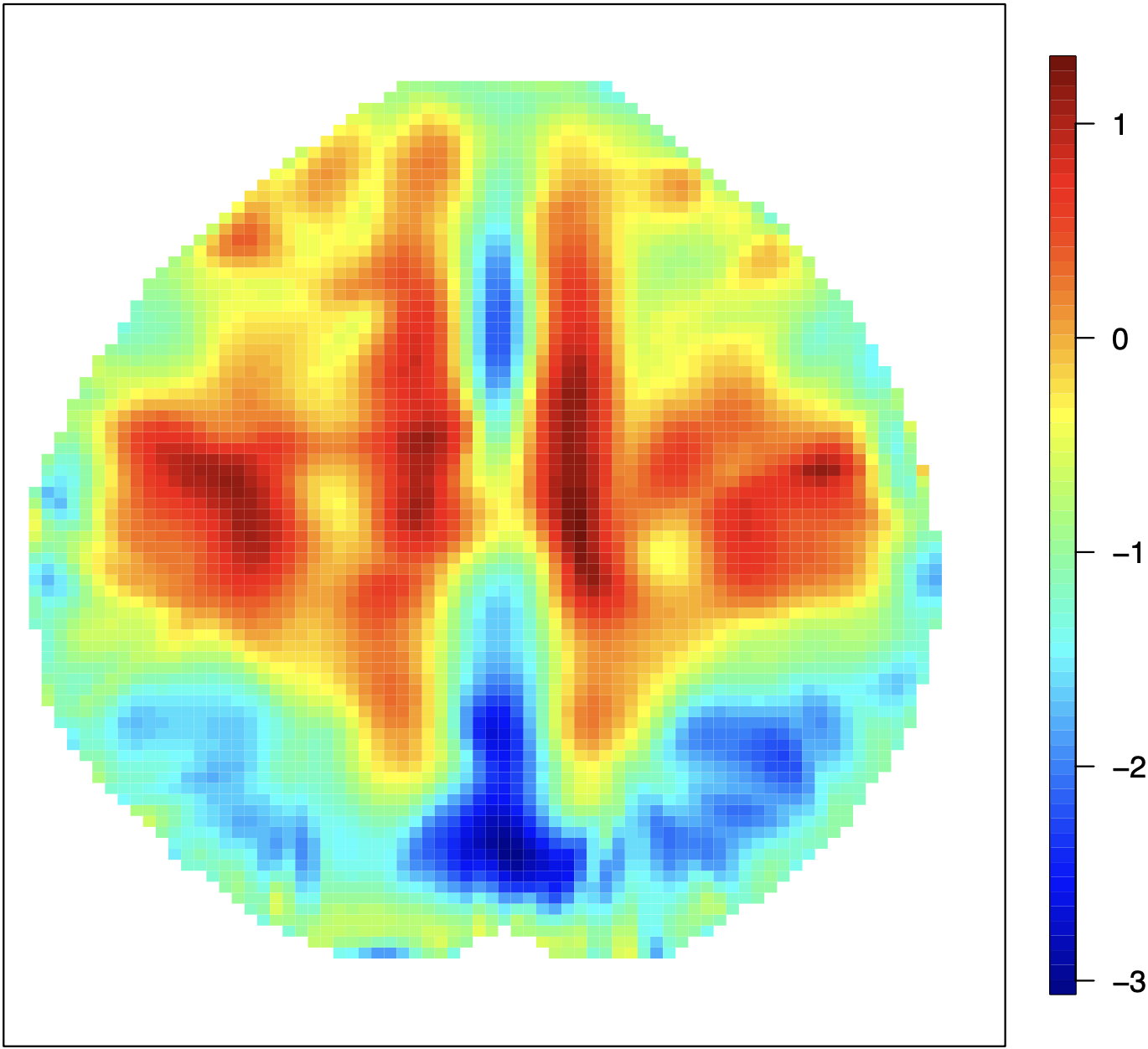} 
	& \includegraphics[height=1.25in]{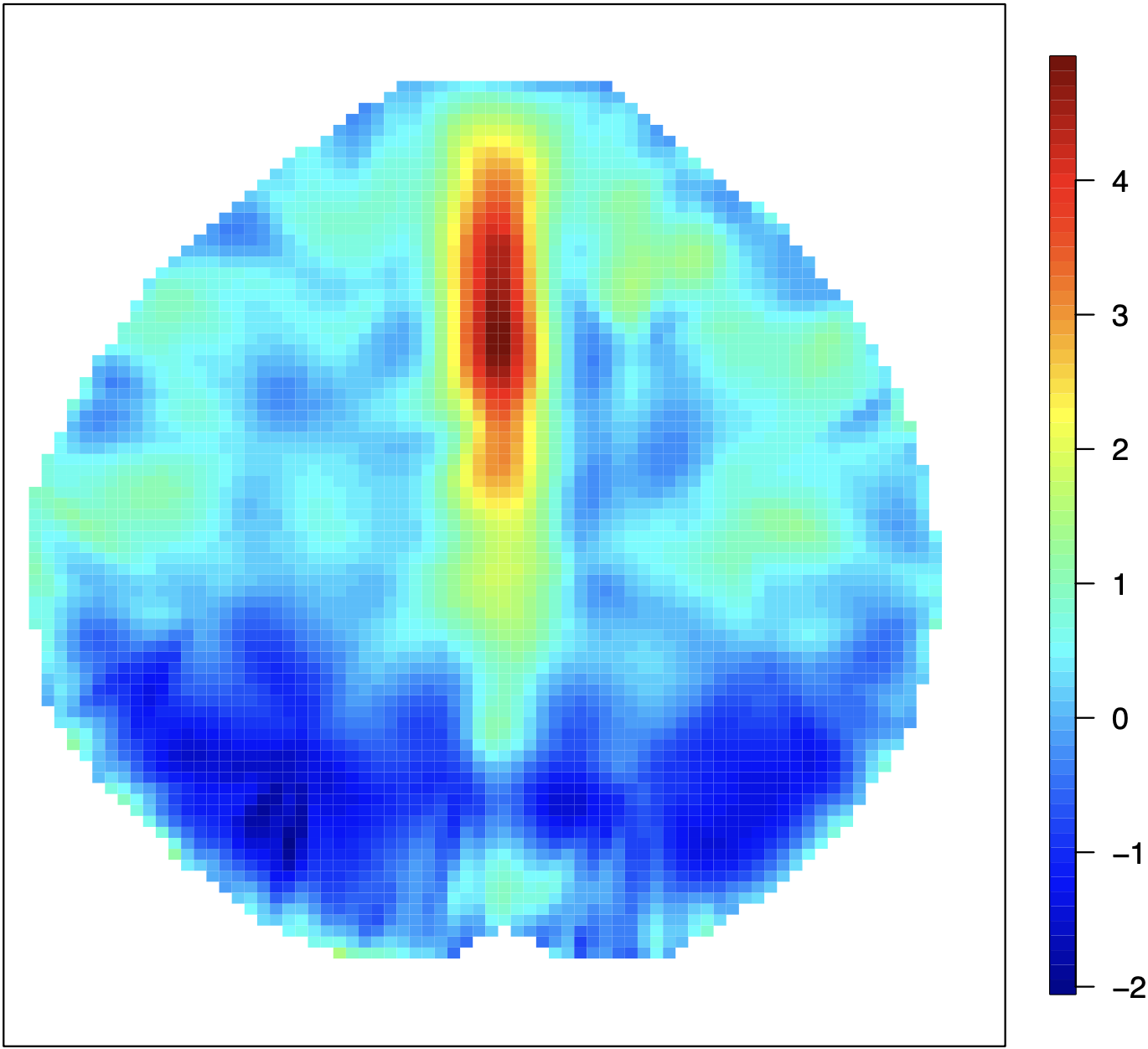} \vspace{-3pt} \\ 
	$\widehat{\phi}_{n1}(\cdot)$ for $Z(\cdot)~~~~~$ 
	& $\widehat{\phi}_{n2}(\cdot)$ for $Z(\cdot)~~~~~$ 
	& $\widehat{\phi}_{n3}(\cdot)$ for $Z(\cdot)~~~~~$ \\
    \includegraphics[height=1.25in]{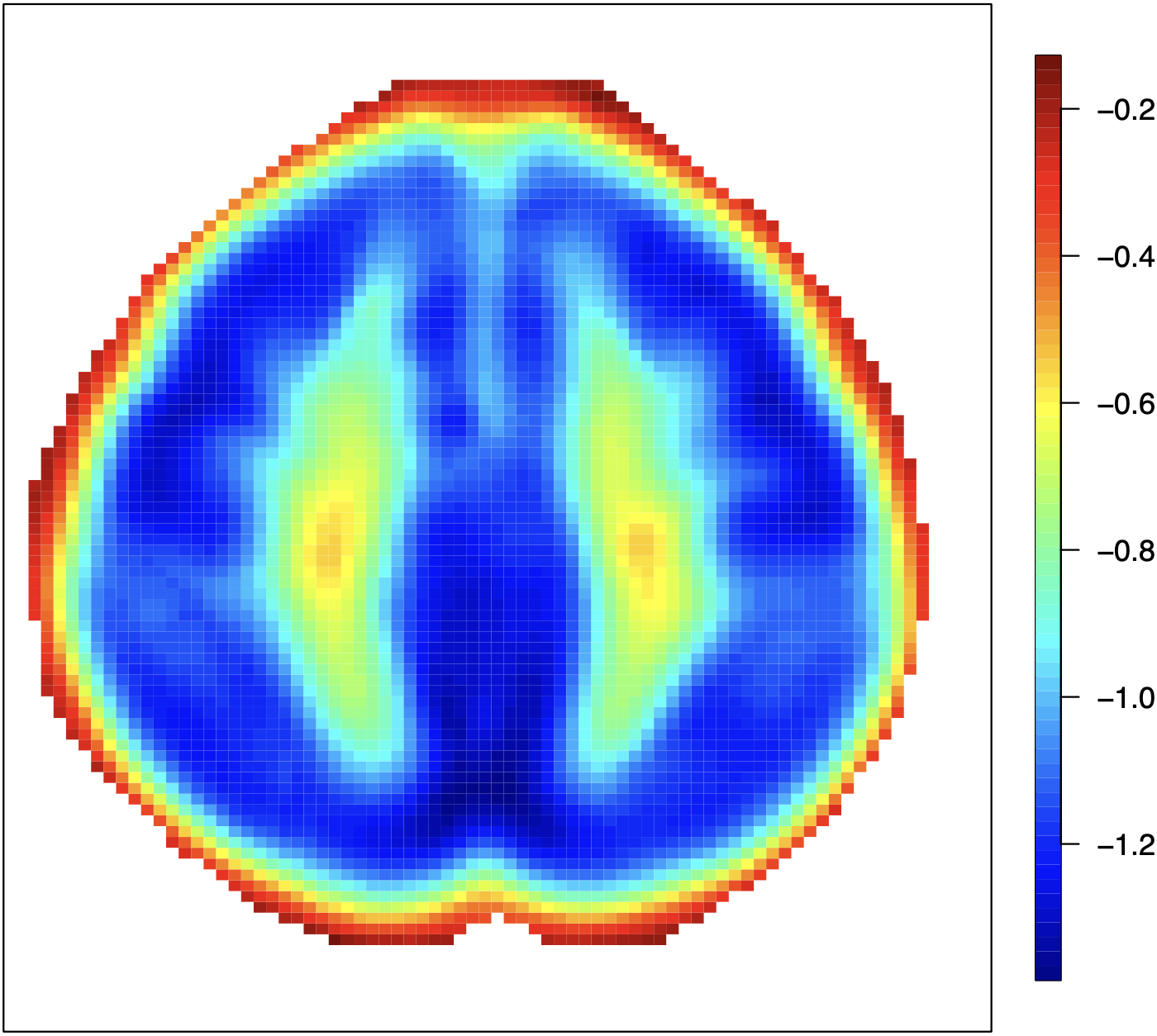}
	& \includegraphics[height=1.25in]{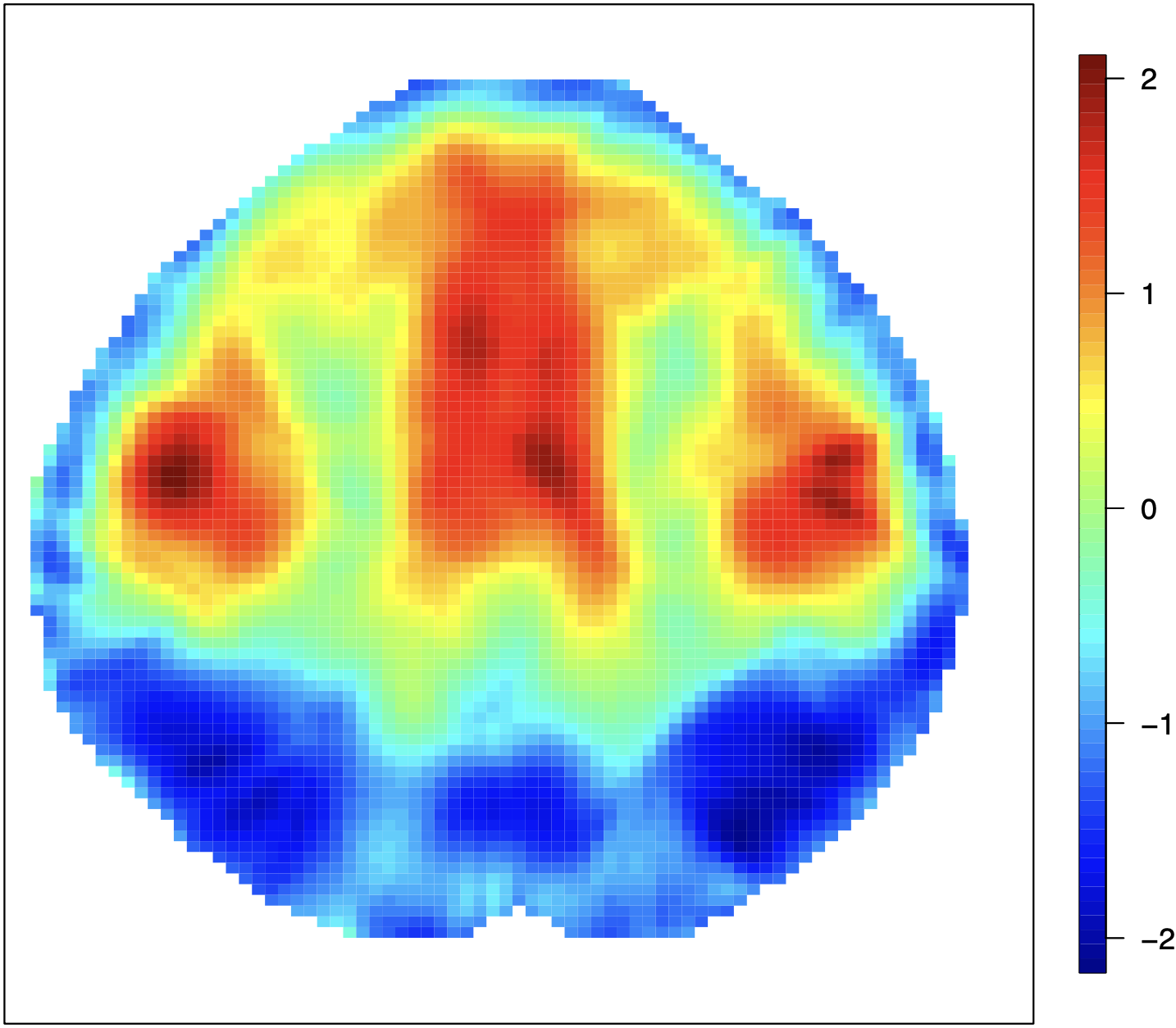} 
	& \includegraphics[height=1.25in]{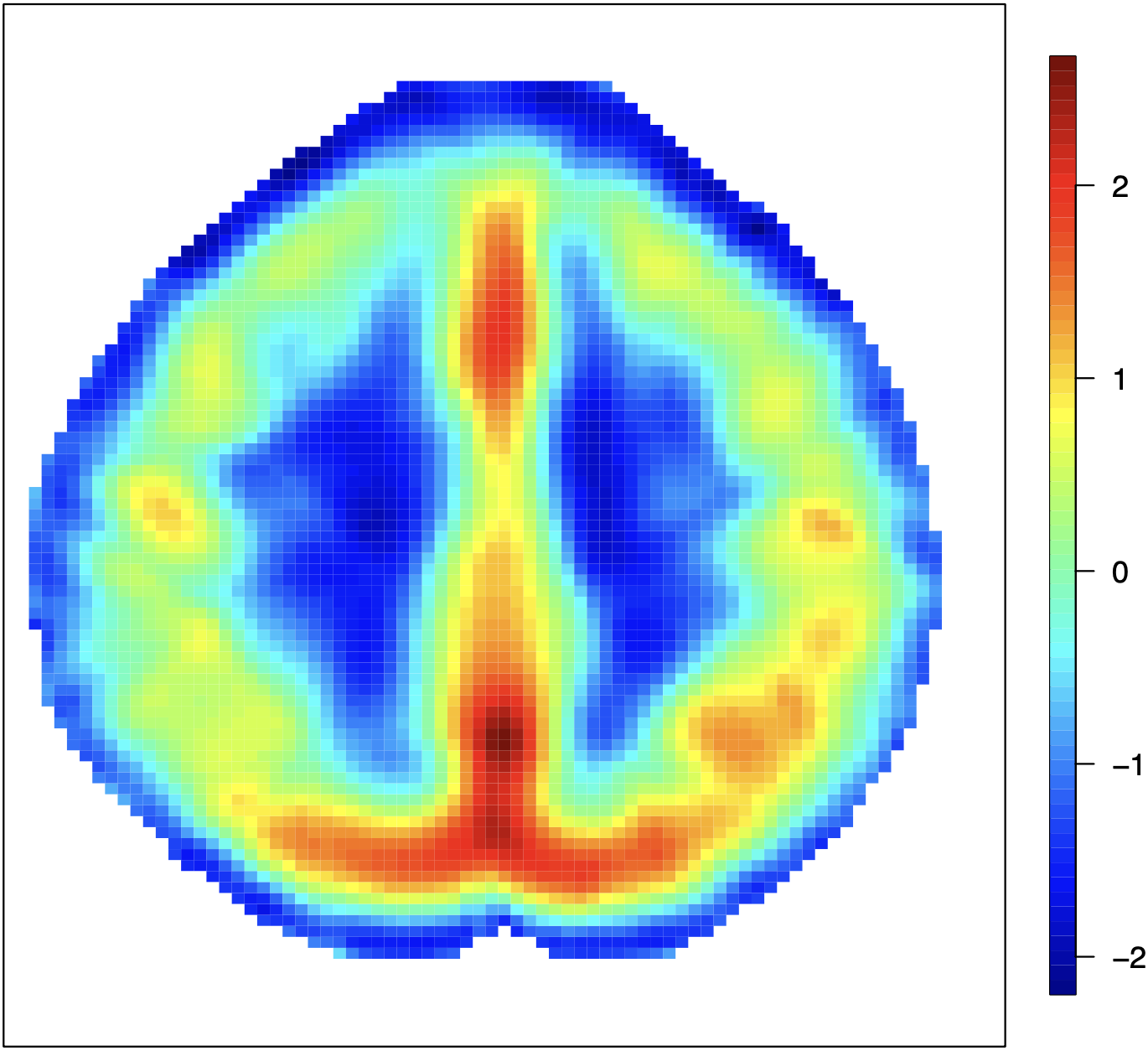} \vspace{-3pt} \\ 
	$\widehat{\phi}_{n1}(\cdot)$ for $AZ(\cdot)~~~~~$ 
	& $\widehat{\phi}_{n2}(\cdot)$ for $AZ(\cdot)~~~~~$ 
	& $\widehat{\phi}_{n3}(\cdot)$ for $AZ(\cdot)~~~~~$ \\
\end{tabular}
\end{center} \vspace*{-.6cm}
\caption{Top: Estimated coefficient maps for $\beta_1(\cdot)$ and $\beta_2(\cdot)$ using PVE and PAVE criteria, respectively. 
Middle: The three leading PC basis maps (top three ranked by both PVE and PAVE) for $Z(\cdot)$. 
Bottom: The three leading PC basis maps (top three ranked by both PVE and PAVE) for $AZ(\cdot)$. }
\label{FIG:beta2hat}
\end{figure}

Tables \ref{TAB:Est_BCI} presents the estimated coefficients for the nonfunctional predictors, along with the corresponding 95\% bootstrap confidence intervals. The main effect of the treatment can improve the performance in MMSE by around $4\sim6$ units on average. With the increase of age, the general MMSE scores will decrease; and with the increase of education level, the MMSE scores also increase in general: both are supported by studies on cognitive reserve in aging and AD  \citep{Fratiglioni:Winblad:vonStrauss:07,Stern:12}. As for the well-known risk genetic factor APOE gene, with more copies of epsilon 4 alleles in the APOE gene, the MMSE scores decrease, which means a higher risk for the onset of AD and agrees with other current studies \citep{Schneider:11}. The difference between females and males is not significant.

\begin{table}[htbp]
\renewcommand{\arraystretch}{0.95}{
\begin{center}
\caption{Estimated coefficients and 95\% bootstrap confidence intervals for the linear covariates using PVE and PAVE criteria, respectively.}
\label{TAB:Est_BCI}%
\begin{tabular}{lrcrc}
    \hline\hline
          & \multicolumn{2}{c}{PVE} & \multicolumn{2}{c}{PAVE} \\
	\cmidrule(lr){2-3}\cmidrule(lr){4-5}
    Term  & \multicolumn{1}{c}{Estimate} & \multicolumn{1}{c}{95\% Bootstrap CI} & \multicolumn{1}{c}{Estimate} & \multicolumn{1}{c}{95\% Bootstrap CI} \\
    \hline
    Intercept & 20.0650 & (8.655, 31.93) & 19.8176 & (8.663, 31.562) \\
    Age   & -0.0560 & (-0.113, -0.004) & -0.0568 & (-0.111, -0.005) \\
    Education & 0.2225 & (0.137, 0.340) & 0.2196 & (0.137, 0.334) \\
    Gender & -0.1998 & (-0.909, 0.707) & -0.1811 & (-0.896, 0.643) \\
    APOE1 & -0.3260 & (-1.079, 0.318) & -0.3201 & (-1.080, 0.272) \\
    APOE2 & -0.9185 & (-2.059, 0.321) & -0.9003 & (-2.022, 0.333) \\
    Ethinicity & 0.2363 & (-1.513, 2.580) & 0.2427 & (-1.536, 2.451) \\
    Race  & 1.2455 & (0.256, 2.664) & 1.3468 & (0.312, 2.527) \\
    Marriage & 0.1760 & (-0.711, 0.837) & 0.1451 & (-0.703, 0.801) \\
    Treatment & 4.1362 & (-6.893, 15.867) & 6.0930 & (-6.759, 12.850) \\
    Treatment$\times$Age & -0.0594 & (-0.107, -0.006) & -0.0643 & (-0.107, -0.014) \\
    Treatment$\times$Education & 0.0001 & (-0.104, 0.076) & 0.0056 & (-0.102, 0.072) \\
    Treatment$\times$Gender & -0.0302 & (-0.711, 0.704) & -0.0279 & (-0.620, 0.710) \\
    Treatment$\times$APOE1 & 0.4237 & (-0.312, 1.089) & 0.4050 & (-0.298, 1.036) \\
    Treatment$\times$APOE2 & -0.7102 & (-1.767, 0.576) & -0.8079 & (-1.764, 0.508) \\
    Treatment$\times$Ethinicity & -0.0115 & (-2.067, 2.216) & 0.0661 & (-1.957, 2.263) \\
    Treatment$\times$Race & 1.3543 & (0.368, 2.536) & 1.3859 & (0.381, 2.477) \\
    Treatment$\times$Marriage & -0.2662 & (-0.998, 0.552) & -0.3088 & (-0.963, 0.524) \\
    \hline\hline
\end{tabular}%
\end{center}}
\end{table}%

When one patient enters into the database with the imaging feature and other features, we can utilize the estimated coefficients to provide the optimal treatment regime based on Equation (\ref{EQN:pihat}). To be specific, for the imaging feature, for Subjects 27, 48, 55, and 160, the process can be visualized as depicted in the following Figure \ref{FIG:ADNI_Z}.
The estimated optimal ITRs for these four subjects are $A=1$, $A=0$, $A=0$, and $A=1$, respectively.
The differences in the images are subtle, note, for example, that the blue and orange parts are a little darker for those 
assigned $A=0$ versus those assigned $A=1$.

\begin{figure}[htbp]
\begin{center}
\begin{tabular}{lcccc}
	& $Z_i(\bs{s})$ & $\times~~~~\widehat{\beta}_2^{\text{PVE}}(\bs{s})$ & $\to Z_i(\bs{s})\widehat{\beta}_2^{\text{PVE}}(\bs{s})$ & {\small $\to \int_{\mathcal{V}} Z_i(\bs{s})\widehat{\beta}_2^{\text{PVE}}(\bs{s})\mathrm{d}\mu(\bs{s})$} \\
	\begin{tabular}{@{}c@{}}Subject \\ 27\end{tabular}& \includegraphics[align=c,height=0.95in]{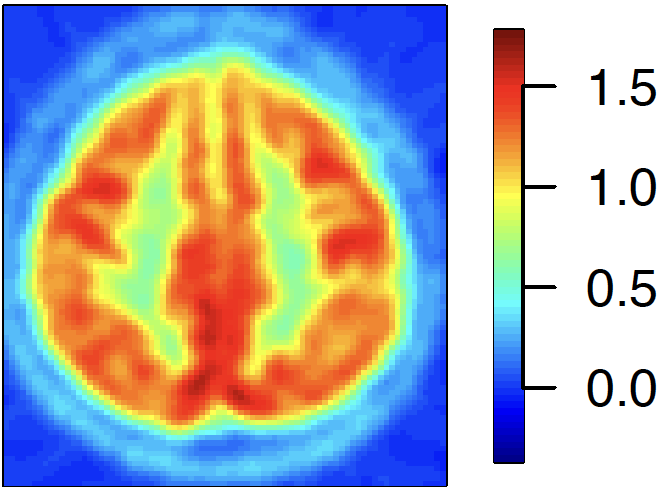}
		& \multirow{14}{*}{\includegraphics[align=c,height=1.01in]{figures/beta2hat_pve.png}}
		& \includegraphics[align=c,height=0.95in]{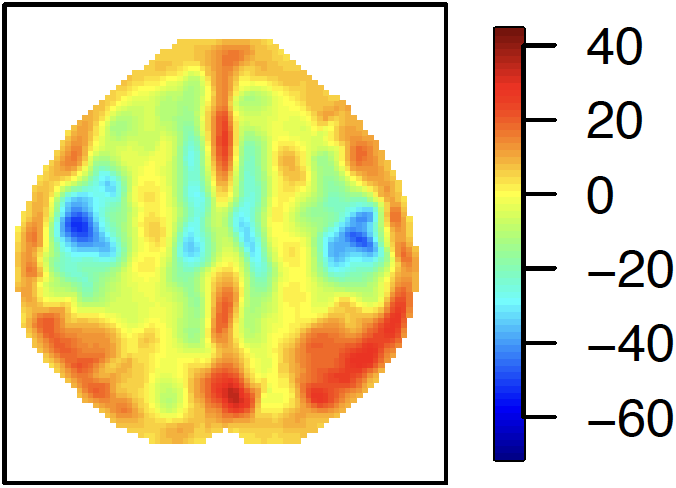}
		& -1.36 \\ 
	\begin{tabular}{@{}c@{}}Subject \\ 48\end{tabular} & \includegraphics[align=c,height=0.95in]{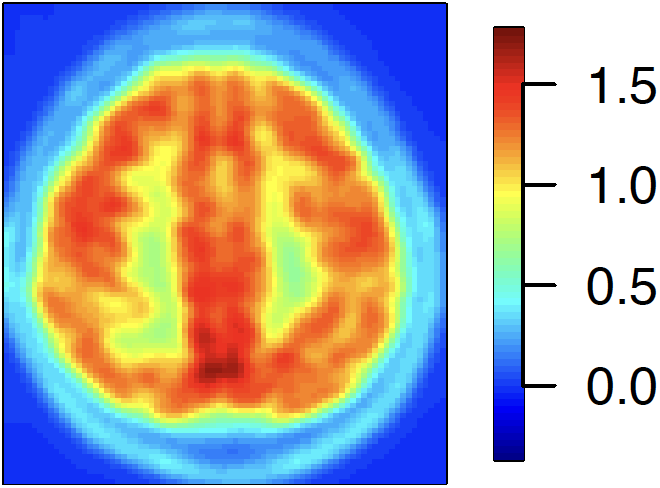}
		& & \includegraphics[align=c,height=0.95in]{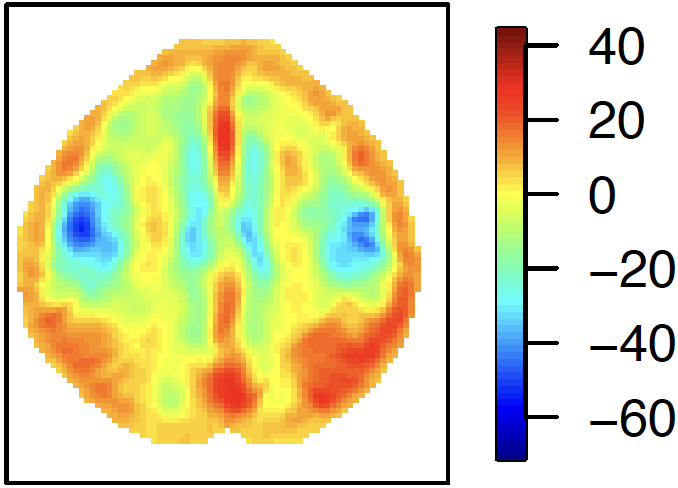}
		& -1.68 \\ 
	\begin{tabular}{@{}c@{}}Subject \\ 160\end{tabular} & \includegraphics[align=c,height=0.95in]{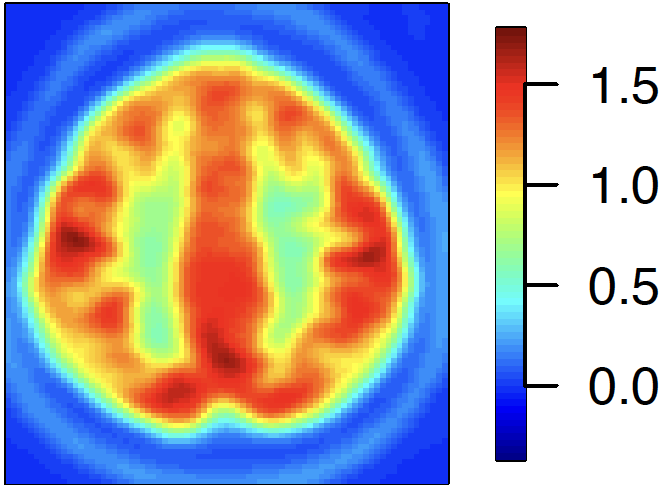}
		& & \includegraphics[align=c,height=0.95in]{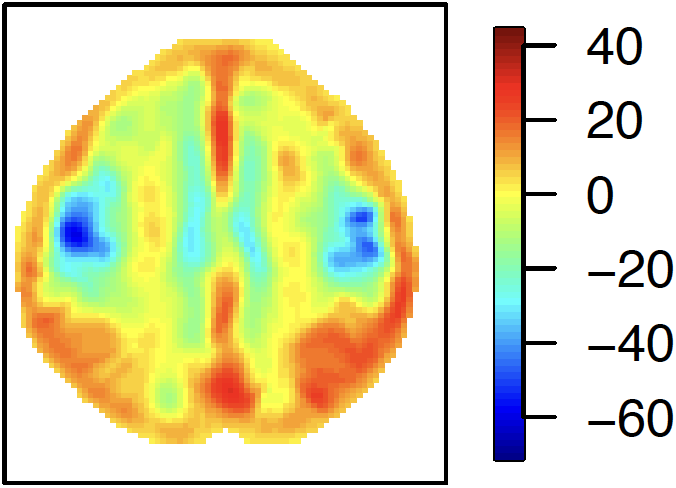}
		& -1.66 \\ 
	\begin{tabular}{@{}c@{}}Subject \\ 433\end{tabular} & \includegraphics[align=c,height=0.95in]{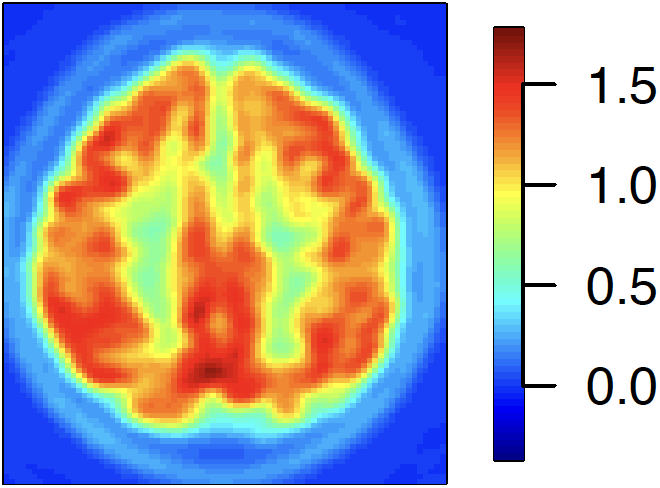}
		& & \includegraphics[align=c,height=0.95in]{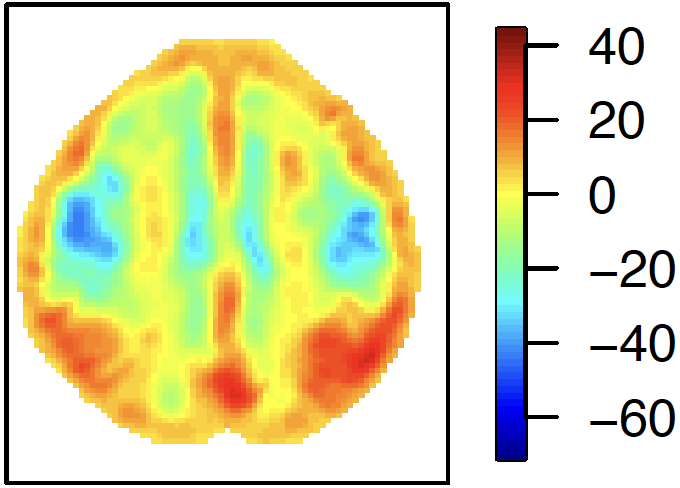}
		& -1.61 \\
\end{tabular}
\end{center} \vspace*{-.6cm}
\caption{An illustration of procedures for dealing with imaging features to obtain ITRs. The left panel represents the imaging features of four randomly selected subjects, Subjects $27$, $48$, $160$, and $433$, with MMSE scores of $30$, $27$, $15$, and $23$, respectively. 
The multiplication of the imaging feature $Z_i(\bs{s})$ and the coefficient map $\widehat{\beta}_2^{\text{PVE}}(\bs{s})$, as shown in the third column, represents the data-driven summary of the imaging features, which differs in the activity level in the blue region and orange region in the bottom.
The integral of the multiplication gives the image contribution of the interaction term; adding $\mathbf{X}\widehat{\alpha}$ and taking the positivity indicator will result in the optimal ITR. 
}
\label{FIG:ADNI_Z}
\end{figure}

\newpage
\vskip 0.1in  \noindent \textbf{8. Discussion} \vskip 0.1in
\label{SEC:discussion}

We proposed semiparametric functional learning with imaging features to estimate the optimal ITRs. 
Overall, the proposed approach can efficiently and precisely estimate the optimal treatment regime with abundant features. It can overcome the ``leakage'' problem in handling the complex domain of imaging data and is also computationally efficient. This approach provides an efficient and powerful tool to estimate optimal treatment regimes by incorporating abundant features into the precision medicine framework.

There are some natural extensions based on the proposed work. One extension is to extend the current regression model to a generalized regression model, in which the response is allowed to be discrete or categorical. For example, for the study of Alzheimer's disease, one could take the disease stage as the response. Another extension could be to generalize the current single-decision setting to the multi-stage decision setting, which is more common in chronic disease settings such as Alzheimer's disease.


\vskip 0.1in  \noindent \textbf{Acknowledgements} \vskip 0.1in

Research reported in this publication was supported by the National Institute Of General Medical Sciences of the National Institutes of Health under Award Number P20GM139769 (Xinyi Li), National Science Foundation awards DMS-2210658 (Xinyi Li) and DMS-2210659 (Michael Kosorok). 
The content is solely the responsibility of the authors and does not necessarily represent the official views of the National Institutes of Health.
The investigators within the ADNI contributed to the design and implementation of ADNI and/or provided data but did not participate in analysis or writing of this report. A complete listing of ADNI investigators can be found at \url{http://adni. loni.usc.edu/wp-content/uploads/how_to_apply/ADNI_Acknowledgement_List.pdf}.

\newpage

\vspace{0.8pc} \centerline{\large Supplemental Materials for \bf ``Functional Individualized Treatment Regimes with Imaging Features''}

\bigskip

\normalsize In this document, we provide the technical details of the conclusions presented in the main paper.
Specifically, we give detailed proofs of Theorems \ref{THM:fHconv} and \ref{THM:Conv} in the paper, and present a number of technical lemmas and other supporting results used in the proofs.

\fontsize{12}{14pt plus.8pt minus .6pt}\selectfont
\vskip 0.1in  \noindent \textbf{A. Proof of 2D-FPC Bases Properties} \vskip 0.1in 
\renewcommand{\thetheorem}{{\sc A.\arabic{theorem}}}
\renewcommand{\thelemma}{{\sc A.\arabic{lemma}}} %
\renewcommand{\thecorollary}{{\sc A.\arabic{corollary}}}
\renewcommand{\theequation}{A.\arabic{equation}} 
\renewcommand{\theproposition}{A.\arabic{proposition}} 
\renewcommand{\thefigure}{A.\arabic{figure}} 
\renewcommand{\thetable}{A.\arabic{table}} 
\renewcommand{\thedefinition}{A.\arabic{definition}} 
\renewcommand{\theremark}{A.\arabic{remark}} 
\renewcommand{\theexample}{A.\arabic{example}} 
\renewcommand{\thesubsection}{A.\arabic{subsection}}
\setcounter{equation}{0}
\setcounter{theorem}{0}
\setcounter{lemma}{0}
\setcounter{figure}{0}
\setcounter{remark}{0}
\setcounter{proposition}{0}
\setcounter{subsection}{0}
\label{SUBSEC:proof_2DFPCA}

We first develop the properties of the proposed 2D-FPC bases. We generally follow the sketch proof of Theorem \ref{THM:fHconv} in Section 4.1 in the paper, and provide a detailed proof for each conclusion, with corresponding supporting lemmas.

\begin{lemma}
\label{LEM:var0}
For any $b(\bs{s})\in\mathcal{H}_0^{\perp}$, $\mathrm{Var}\{\int_{\mathcal{V}}b(\bs{s})Z(\bs{s})\mathrm{d}\mu(\bs{s})\}=0$.
\end{lemma}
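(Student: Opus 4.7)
The plan is to exploit the Karhunen--Lo\'{e}ve expansion noted in the remark following Assumption (A3), together with the interpretation of the orthogonal complement in $L_2(\mathcal{V},\mu)$. Reading the statement in context, $b\in\mathcal{H}_1^{\perp}$ in the sense of orthocomplement relative to $\mathcal{H}_0=L_2(\mathcal{V},\mu)$, so that $\int_{\mathcal{V}} b(\bs{s})\phi_k(\bs{s})\mathrm{d}\mu(\bs{s})=0$ for every $k\geq 1$.

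First, I would write, via the Karhunen--Lo\'{e}ve expansion,
\[
  Z(\bs{s}) \;=\; \mathrm{E}Z(\bs{s}) + \sum_{k=1}^{\infty}\xi_k\phi_k(\bs{s}),
\]
with convergence in $L_2(\mathcal{V},\mu)$ almost surely. Multiplying by $b(\bs{s})$ and integrating against $\mu$, the Cauchy--Schwarz inequality together with $\|b\|_{\mu,2}<\infty$ (since $b\in\mathcal{H}_0$) justifies exchanging the infinite sum with the integral, yielding
\[
  \int_{\mathcal{V}} b(\bs{s})Z(\bs{s})\mathrm{d}\mu(\bs{s})
  \;=\; \int_{\mathcal{V}} b(\bs{s})\mathrm{E}Z(\bs{s})\mathrm{d}\mu(\bs{s})
  \;+\; \sum_{k=1}^{\infty}\xi_k\!\int_{\mathcal{V}} b(\bs{s})\phi_k(\bs{s})\mathrm{d}\mu(\bs{s}).
\]
By orthogonality each integral in the sum is zero, so the right-hand side reduces to the deterministic quantity $\int_{\mathcal{V}} b(\bs{s})\mathrm{E}Z(\bs{s})\mathrm{d}\mu(\bs{s})$, whose variance is obviously $0$.

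The main technical obstacle is the justification of swapping the infinite sum and the integral. I would control the remainder term using Assumption (A3): the partial sums $S_K(\bs{s})=\sum_{k=1}^{K}\xi_k\phi_k(\bs{s})$ satisfy $\mathrm{E}\|Z-\mathrm{E}Z-S_K\|_{\mu,2}^2=\sum_{k>K}\lambda_k\to 0$, so by Cauchy--Schwarz and dominated convergence, $\int b(\bs{s})S_K(\bs{s})\mathrm{d}\mu(\bs{s})\to \int b(\bs{s})\{Z(\bs{s})-\mathrm{E}Z(\bs{s})\}\mathrm{d}\mu(\bs{s})$ in $L_2(\mathrm{P})$, hence the conclusion follows. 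As an alternative, one can observe that $\Pr(Z-\mathrm{E}Z\in\mathcal{H}_1)=1$ from the remark, so that orthogonality of $b$ to $\mathcal{H}_1$ immediately gives $\int b(\bs{s})\{Z(\bs{s})-\mathrm{E}Z(\bs{s})\}\mathrm{d}\mu(\bs{s})=0$ a.s., which is an even cleaner route and bypasses the interchange entirely.
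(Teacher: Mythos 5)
Your proof is correct and takes essentially the same route as the paper's: both expand $Z-\mathrm{E}Z$ in the Karhunen--Lo\'{e}ve series and use orthogonality of $b$ to every eigenfunction $\phi_k$ to annihilate each term, leaving a deterministic quantity whose variance is zero. You go slightly beyond the paper by explicitly justifying the sum--integral interchange (which the paper performs silently) and by correctly reading the hypothesis as orthogonality to $\mathcal{H}_1$, the closed span of the $\phi_k$ --- a reading forced by the fact that the literal orthocomplement of $\mathcal{H}_0=L_2(\mathcal{V},\mu)$ in itself is trivial.
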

\begin{proof}
For any $b(\bs{s})\in\mathcal{H}_0^{\perp}$, 
\begin{align*}
	\mathrm{Var}\left\{\int_{\mathcal{V}}b(\bs{s})Z(\bs{s})\mathrm{d}\mu(\bs{s})\right\}
	&= \mathrm{E}\left[\int_{\mathcal{V}}b(\bs{s})\{Z(\bs{s})-\mathrm{E}Z(\bs{s})\}\mathrm{d}\mu(\bs{s})\right]^2 \\
	&=\mathrm{E}\left\{\sum_{k=1}^{\infty}\xi_k\int_{\mathcal{V}}b(\bs{s})\phi_k(\bs{s})\mathrm{d}\mu(\bs{s})\right\}^2
	=0.
\end{align*}
\end{proof}
\begin{remark}
From Lemma \ref{LEM:var0}, for any linear functional $\int_{\mathcal{V}}b(\bs{s})Z(\bs{s})\mathrm{d}\mu(\bs{s})$, we can assume without loss of generality that $b\in\mathcal{H}_0$. 
\end{remark}

In the following, we denote the theoretical and empirical expectation operators 
as $P$ and $\mathbb{P}_n$, respectively.
The following Theorem \ref{THM:GC} 
shows that both $\mathcal{F}_1$ and $\mathcal{F}_2$ are Glivenko-Cantelli classes, which can be further applied for consistency proofs.
\begin{theorem}
\label{THM:GC}
Suppose Assumption (A3) holds.
Then the following two conclusions hold:
\begin{enumerate}[(i)]
    \item $\mathcal{F}_1$ is a Glivenko-Cantelli class with envelope 
    \begin{align}
    \label{EQN:F1Z}
    	F_1(Z)=\left(\sum_{k=1}^{\infty}\left[\int_{\mathcal{V}}\phi_k(\bs{s})\left\{Z(\bs{s})-\mathrm{E}Z(\bs{s})\right\}\mathrm{d}\mu(\bs{s})\right]^2\right)^{1/2},
    \end{align}
    such that 
    $F_1\leq\|Z-\mathrm{E}Z\|_{\mu,2}$
    and 
    $\mathrm{E}F_1^2<\infty$.
    \item $\mathcal{F}_2$
is also a $P$-Glivenko-Cantelli class with envelope $F_1^2$, 
where $\mathrm{E}F_1^2<\infty$.
\end{enumerate}

\end{theorem}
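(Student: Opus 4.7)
The approach is to reparametrize the indexing by the Karhunen-Lo\'eve coefficients of $Z$, reducing each class to a bounded subset of an $\ell^2$-type ball where a truncation argument applies cleanly. By Lemma \ref{LEM:var0}, I can replace every $b \in \mathcal{B}_1$ by its projection onto $\mathcal{H}_1$ without changing the linear functional, so write $b = \sum_{k \geq 1} b_k \phi_k$ with $\sum_k b_k^2 \leq 1$ and set $\xi_k = \int_{\mathcal{V}} \phi_k(\bs{s})\{Z(\bs{s}) - \mathrm{E}Z(\bs{s})\} \mathrm{d}\mu(\bs{s})$. Then every $f \in \mathcal{F}_1$ equals $\sum_k b_k \xi_k$, and Cauchy-Schwarz in $\ell^2$ yields $|f| \leq F_1(Z)$, while Bessel's inequality gives $F_1(Z) \leq \|Z - \mathrm{E}Z\|_{\mu,2}$. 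The envelope is integrable in square: $\mathrm{E}F_1^2 = \sum_k \mathrm{Var}(\xi_k) = \sum_k \lambda_k < \infty$ by Assumption (A3). For $\mathcal{F}_2$, factoring each bilinear integral as a product of two $\mathcal{F}_1$-elements shows the envelope is $F_1^2$ with $\mathrm{E}F_1^2 < \infty$ already established.

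\textbf{Truncation step for $\mathcal{F}_1$.} Define $\mathcal{F}_1^{(K)} = \{\sum_{k=1}^{K} b_k \xi_k : \sum_{k=1}^{K} b_k^2 \leq 1\}$, a bounded subset of the $K$-dimensional vector space of measurable functions spanned by $\xi_1, \ldots, \xi_K$. Bounded subsets of finite-dimensional vector spaces are VC-subgraph (van der Vaart and Wellner, Lemma 2.6.15), and a VC-subgraph class with integrable envelope is $P$-Glivenko-Cantelli. The truncation remainder satisfies
\[
\sup_{b \in \mathcal{B}_1} \Bigl| \sum_{k \geq 1} b_k \xi_k - \sum_{k = 1}^{K} b_k \xi_k \Bigr| \leq R_K(Z), \qquad R_K(Z) := \Bigl(\sum_{k > K} \xi_k^2\Bigr)^{1/2},
\]
again by pointwise Cauchy-Schwarz, and $P R_K \leq (P R_K^2)^{1/2} = (\sum_{k > K} \lambda_k)^{1/2} \to 0$ as $K \to \infty$. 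The triangle inequality
\[
\sup_{f \in \mathcal{F}_1} |(\mathbb{P}_n - P)f| \;\leq\; \sup_{g \in \mathcal{F}_1^{(K)}} |(\mathbb{P}_n - P)g| + \mathbb{P}_n R_K + P R_K,
\]
combined with the strong law of large numbers applied to the nonnegative $R_K$ and the Glivenko-Cantelli property of $\mathcal{F}_1^{(K)}$, gives $\limsup_n \sup_{\mathcal{F}_1} |(\mathbb{P}_n - P)f| \leq 2 P R_K$ almost surely; letting $K \to \infty$ proves (i).

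\textbf{Extension to $\mathcal{F}_2$ and the main obstacle.} Part (ii) follows the same template on the bilinear class: truncating both coefficient sequences to their first $K$ entries yields $\mathcal{F}_2^{(K)}$, a bounded subset of the finite-dimensional space of quadratic forms in $\xi_1, \ldots, \xi_K$, again VC-subgraph and hence Glivenko-Cantelli. Writing the remainder as the sum of terms for which at least one index exceeds $K$ and applying Cauchy-Schwarz pointwise in $Z$ yields a uniform bound of $2 R_K(Z) F_1(Z)$; a further Cauchy-Schwarz in expectation gives $P(R_K F_1) \leq (P R_K^2 \cdot P F_1^2)^{1/2} \to 0$, after which the triangle inequality above closes the argument. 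The main obstacle in both parts is obtaining tail control that is \emph{uniform} over $\mathcal{B}_1$; this is handled by invoking Cauchy-Schwarz pointwise in $Z$ so that the supremum over $b$ passes inside the expectation, after which Assumption (A3) does all the work through $\sum_k \lambda_k < \infty$, which simultaneously bounds the envelope and drives the truncation remainder to zero.
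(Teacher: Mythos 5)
Your proof is correct, and it shares the paper's central skeleton --- truncate the $\ell^2$ index at level $K$, use pointwise Cauchy--Schwarz so the supremum over $\mathcal{B}_1$ passes inside, and let Assumption (A3) kill the tail $\sum_{k>K}\lambda_k$ --- but it differs from the paper in the two key lemmas it plugs into that skeleton. For the truncated head, the paper argues from first principles: it covers the compact ball $\{\bs{a}\in\mathbb{R}^K:\|\bs{a}\|_2\leq 1\}$ by a finite $\varepsilon$-net, applies the ordinary strong law at each net point, and bounds the covering error by another Cauchy--Schwarz step; you instead observe that $\mathcal{F}_1^{(K)}$ (and for part (ii) the quadratic-form class $\mathcal{F}_2^{(K)}$) sits in a finite-dimensional vector space of measurable functions, hence is VC-subgraph by van der Vaart--Wellner Lemma 2.6.15, and has integrable envelope, hence is $P$-Glivenko--Cantelli. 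Your route is shorter and would upgrade immediately to Donsker-type conclusions, at the cost of importing VC machinery and a (glossed, but standard for classes indexed continuously by a compact ball) pointwise-measurability check; the paper's net argument is fully elementary and self-contained. The difference is most pronounced in part (ii): the paper disposes of $\mathcal{F}_2$ in one line by citing a Glivenko--Cantelli preservation result for products (Corollary 9.27 of Kosorok, 2008), whereas you rerun the truncation template directly on the bilinear class, with the explicit remainder bound $2R_K(Z)F_1(Z)$ and $P(R_KF_1)\leq (PR_K^2\cdot PF_1^2)^{1/2}\to 0$. Your version is longer but quantitative and self-contained, making visible exactly how (A3) controls the bilinear tail; the paper's is more economical. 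One further small point in your favor: your explicit appeal to Lemma \ref{LEM:var0} (really, to the event $\Pr(Z-\mathrm{E}Z\in\mathcal{H}_1)=1$ from Remark 1, which makes the reduction hold simultaneously over all $b$) justifies the restriction to coefficients in $\mathcal{H}_1$, a step the paper's proof performs implicitly via the ``unique Fourier representation'' of $b\in\mathcal{B}_1$.
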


\begin{proof} 
\begin{enumerate}[(i)]
    \item By Assumption (A3),
fix $\varepsilon>0$, $\exists \, K<\infty$, such that $\sum_{k=K+1}^{\infty}\lambda_k\leq\varepsilon^2$. 
Since $\mathcal{B}_1\subset\mathcal{H}_0$, $b\in\mathcal{B}_1$ has a unique Fourier representation $b(\bs{s})=\sum_{k=1}^{\infty}a_k\phi_k(\bs{s})$, where $\int_{\mathcal{V}}b^2(\bs{s})\mathrm{d}\mu(\bs{s})=\sum_{k=1}^{\infty}a_k^2\leq1$ by definition of $\mathcal{B}_1$ and projection of a basis. Then, letting $\ell_2=\{\bs{a}: \|\bs{a}\|_2\leq \infty\}$, we can write $\mathcal{F}_1$ as 
\[
	\mathcal{F}_1=\left\{\sum_{k=1}^{\infty}a_k\int_{\mathcal{V}}\phi_k(\bs{s})\{Z(\bs{s})-\mathrm{E}Z(\bs{s})\}\mathrm{d}\mu(\bs{s}): \bs{a}\in \ell_2 \text{ and } \|\bs{a}\|_2\leq 1\right\}.
\]
Let $\mathcal{A}_1=\{\bs{a}\in \ell_2 \text{ and } \|\bs{a}\|_2\leq 1\}$ and recall $\bar{Z}_n(\bs{s})=n^{-1}\sum_{i=1}^nZ_i(\bs{s})$. Thus, $\forall f\in\mathcal{F}_1$,
\begin{align*}
	&\sup_{f\in\mathcal{F}_1}\|\mathbb{P}_n f - P f\|
	=\sup_{a\in\mathcal{A}_1}\left|\sum_{k=1}^{\infty}a_k\int_{\mathcal{V}}\phi_k(\bs{s})\{\bar{Z}_n(\bs{s})-\mathrm{E}Z(\bs{s})\}\mathrm{d}\mu(\bs{s})\right| \\
	\leq & \sup_{a\in\mathcal{A}_1}\left|\sum_{k=K+1}^{\infty}a_k\int_{\mathcal{V}}\phi_k(\bs{s})\{\bar{Z}_n(\bs{s})-\mathrm{E}Z(\bs{s})\}\mathrm{d}\mu(\bs{s})\right|  + \sup_{a\in\mathcal{A}_1}\left|\sum_{k=1}^{K}a_k\int_{\mathcal{V}}\phi_k(\bs{s})\{\bar{Z}_n(\bs{s})-\mathrm{E}Z(\bs{s})\}\mathrm{d}\mu(\bs{s})\right| \\
	\equiv & A_n+B_n.
\end{align*}
By the Cauchy-Schwarz inequality,
\begin{align*}
	A_n
	=& \sup_{a\in\mathcal{A}_1}\left|n^{-1}\sum_{i=1}^n\sum_{k=K+1}^{\infty}a_k\int_{\mathcal{V}}\phi_k(\bs{s})\{Z_i(\bs{s})-\mathrm{E}Z(\bs{s})\}\mathrm{d}\mu(\bs{s})\right| \\
	\leq & \sup_{a\in\mathcal{A}_1}\left|n^{-1}\sum_{i=1}^n\left(\sum_{k=K+1}^{\infty}a_k^2\right)^{1/2}\left(\sum_{k=K+1}^{\infty}\left[\int_{\mathcal{V}}\phi_k(\bs{s})\{Z_i(\bs{s})-\mathrm{E}Z(\bs{s})\}\mathrm{d}\mu(\bs{s})\right]^2\right)^{1/2}\right| \\
	\leq & n^{-1}\sum_{i=1}^n\left(\sum_{k=K+1}^{\infty}\left[\int_{\mathcal{V}}\phi_k(\bs{s})\{Z_i(\bs{s})-\mathrm{E}Z(\bs{s})\}\mathrm{d}\mu(\bs{s})\right]^2\right)^{1/2}.
\end{align*}
Let $U_i=(\sum_{k=K+1}^{\infty}[\int_{\mathcal{V}}\phi_k(\bs{s})\{Z_i(\bs{s})-\mathrm{E}Z(\bs{s})\}\mathrm{d}\mu(\bs{s})]^2)^{1/2}$. 
Note that $(\mathrm{E}U_i)^2\leq\mathrm{E}U_i^2=\sum_{k=K+1}^{\infty} \lambda_k$.
Since this is an iid sum, we have that ${\limsup}_{n\to\infty} A_n=\limsup_{n\to\infty}n^{-1}\sum_{i=1}^nU_i \leq \varepsilon$ almost surely. 
Let 
$\mathcal{A}_2=\{\bs{a}\in \mathbb{R}^K\text{ and } \|\bs{a}\|_2\leq 1\}$. 
Then 
\[
	B_n=\sup_{\bs{a}\in\mathcal{A}_2} \left|\sum_{k=1}^K a_k\int_{\mathcal{V}}\phi_k(\bs{s})\{\bar{Z}_n(\bs{s})-\mathrm{E}Z(\bs{s})\}\mathrm{d}\mu(\bs{s}) \right|.
\]
Since $\mathcal{A}_2$ is a compact ball, $\exists$ finite subset $\mathcal{M}_2\subset\mathcal{A}_2$, such that $\sup_{\bs{a}\in\mathcal{A}_2}\inf_{\widetilde{\bs{a}}\in\mathcal{M}_2} \|\bs{a}-\widetilde{\bs{a}}\|\leq \varepsilon$.
Accordingly,
\begin{align*}
	B_n 
	&\leq \max_{\widetilde{\bs{a}}\in\mathcal{M}_2} \left|\sum_{k=1}^K \widetilde{a}_k\int_{\mathcal{V}}\phi_k(\bs{s})\{\bar{Z}_n(\bs{s})-\mathrm{E}Z(\bs{s})\}\mathrm{d}\mu(\bs{s})\right| \\
	&~~~~~~~~~~~~~~~~~~~~~~~~+ \sup_{\bs{a}\in\mathcal{A}_2,\widetilde{\bs{a}}\in\mathcal{M}_2: \|\bs{a}-\widetilde{\bs{a}}\|\leq \varepsilon} \left|\sum_{k=1}^K (a_k-\widetilde{a}_k)\int_{\mathcal{V}}\phi_k(\bs{s})\{\bar{Z}_n(\bs{s})-\mathrm{E}Z(\bs{s})\}\mathrm{d}\mu(\bs{s})\right| \\
	&\equiv B_{1n} + B_{2n}.
\end{align*}
By the standard strong law of large numbers, 
$B_{1n}\xrightarrow{\as}0$ as $n\to\infty$. As for $B_{2n}$, by the Cauchy-Schwarz inequality,
\begin{eqnarray*}
	&&\sup_{\bs{a}\in\mathcal{A}_2,\widetilde{\bs{a}}\in\mathcal{M}_2: \|\bs{a}-\widetilde{\bs{a}}\|\leq \varepsilon} \left|\sum_{k=1}^K (a_k-\widetilde{a}_k)\int_{\mathcal{V}}\phi_k(\bs{s})\{\bar{Z}_n(\bs{s})-\mathrm{E}Z(\bs{s})\}\mathrm{d}\mu(\bs{s})\right| \\
	&=&\sup_{\bs{a}\in\mathcal{A}_2,\widetilde{\bs{a}}\in\mathcal{M}_2: \|\bs{a}-\widetilde{\bs{a}}\|\leq \varepsilon} \left|n^{-1}\sum_{i=1}^n\sum_{k=1}^K (a_k-\widetilde{a}_k)\int_{\mathcal{V}}\phi_k(\bs{s})\{Z_i(\bs{s})-\mathrm{E}Z(\bs{s})\}\mathrm{d}\mu(\bs{s})\right| \\
	&\leq& \varepsilon\left|n^{-1}\sum_{i=1}^n\left(\sum_{k=1}^K\left[\int_{\mathcal{V}}\phi_k(\bs{s})\{Z_i(\bs{s})-\mathrm{E}Z(\bs{s})\}\mathrm{d}\mu(\bs{s})\right]^2\right)^{1/2}\right| \\
	&\leq & \varepsilon\left|n^{-1}\sum_{i=1}^n\sum_{k=1}^{\infty}\left[\int_{\mathcal{V}}\phi_k(\bs{s})\{Z_i(\bs{s})-\mathrm{E}Z(\bs{s})\}\mathrm{d}\mu(\bs{s})\right]^2\right|^{1/2}.
\end{eqnarray*}
Let $V_i=\sum_{k=1}^{\infty}\left[\int_{\mathcal{V}}\phi_k(\bs{s})\{Z_i(\bs{s})-\mathrm{E}Z(\bs{s})\}\mathrm{d}\mu(\bs{s})\right]^2$. Note that $V_i$ does not depend on $\varepsilon$, and 
\[
	\mathrm{E}V_i=\sum_{k=1}^{\infty}\lambda_k<\infty
	\Rightarrow \limsup_{n\to\infty}B_{2n}\leq\varepsilon C_0~~ \as,
\]
where $C_0=|\sum_{k=1}^{\infty}\lambda_k|^{1/2}$ does not depend on $\varepsilon$. Therefore,
$\sup_{f\in\mathcal{F}_1}\|\mathbb{P}_n f - Pf\|\xrightarrow{\as}0$, thus, $\mathcal{F}_1$  
is a $P$-Glivenko-Cantelli class by definition. Also note that $\mathcal{F}_1$ is enveloped by 
\begin{align*}
	F_1(Z)&=
	\left(\sum_{k=1}^{\infty}\left[\int_{\mathcal{V}}\phi_k(\bs{s})\{Z(\bs{s})-\mathrm{E}Z(\bs{s})\}\mathrm{d}\mu(\bs{s})\right]^2\right)^{1/2} \\
	&\leq \left(\sum_{k=1}^{\infty}\left\{\int_{\mathcal{V}}\phi_k^2(\bs{s})\mathrm{d}\mu(\bs{s})\right\}\left[\int_{\mathcal{V}}\{Z(\bs{s})-\mathrm{E}Z(\bs{s})\}^2\mathrm{d}\mu(\bs{s})\right]\right)^{1/2}
	=\|Z-\mathrm{E}Z\|_{\mu,2},
\end{align*}
and
\[
	\mathrm{E}F_1^2(Z) \leq 
	\mathrm{E}\left(\sum_{k=1}^{\infty}\left[\int_{\mathcal{V}}\phi_k(\bs{s})\{Z_i(\bs{s})-\mathrm{E}Z(\bs{s})\}\mathrm{d}\mu(\bs{s})\right]^2\right)
	=\sum_{k=1}^{\infty}\lambda_k
	<\infty.
\]
    \item By (i) and Corollary 9.27 of \cite{Kosorok:08}, we can conclude that $\mathcal{F}_2$ is also a $P$-Glivenko-Cantelli class, with envelope $F_1^2$.
\end{enumerate}
\end{proof}


Given the conclusions in Theorem \ref{THM:GC}, we can therefore develop the convergence results for the bases $\{\widehat{\phi}_{nk}\}_k$ through the following Theorems \ref{THM:fn1hat} -- \ref{THM:fnk+1hat}.

\begin{theorem}
\label{THM:fn1hat}
Let 
$
	\widehat{\phi}_{n1}
	=\arg\max_{f\in\mathcal{B}_1\cap \mathcal{S}}\int_{\mathcal{V}\times\mathcal{V}}f(\bs{s})f(\bs{s}^{\prime})V_n(\bs{s},\bs{s}^{\prime})\mathrm{d}\mu(\bs{s})\mathrm{d}\mu(\bs{s}^{\prime})
$. 
Suppose Assumption (A3) holds. Then $\exists$ a sign sequence $\{S_{n1}: \forall n, S_{n1}\in\{-1,1\}\}$ such that
\[
	\|S_{n1}\widehat{\phi}_{n1}-\phi_1\|_{\mu,2} \xrightarrow{\as}0, ~~~~\text{ as } n\to\infty. 
\]
\end{theorem}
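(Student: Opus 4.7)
The plan is to cast this as an $M$-estimation argmax-consistency result. Define the population and empirical criteria $M(f) = \int_{\mathcal{V}\times\mathcal{V}} f(\bs{s})f(\bs{s}')V_0(\bs{s},\bs{s}')\mathrm{d}\mu(\bs{s})\mathrm{d}\mu(\bs{s}')$ and $M_n(f)$ defined analogously with $V_n$ in place of $V_0$. Expanding any $f\in\mathcal{B}_1\subset\mathcal{H}_0$ orthogonally as $f=\sum_k a_k\phi_k + r$ with $r\in\mathcal{H}_1^{\perp}\cap\mathcal{H}_0$, Lemma \ref{LEM:var0} and the Karhunen-Lo\`eve decomposition give $M(f)=\sum_k a_k^2\lambda_k$, so that $\phi_1$ is the unique maximizer (up to sign) of $M$ over $\mathcal{B}_1$ with $\max_{\mathcal{B}_1}M=\lambda_1$; the spectral gap $\lambda_1>\lambda_2$ guaranteed by Assumption (A2) ensures that this maximizer is well separated.

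Next, I would establish the uniform convergence $\sup_{f\in\mathcal{B}_1}|M_n(f)-M(f)|\xrightarrow{\as}0$. Writing $\eta(Z;f)=\int f(\bs{s})\{Z(\bs{s})-\mathrm{E}Z(\bs{s})\}\mathrm{d}\mu(\bs{s})$ and expanding $V_n$ yields $M_n(f) = \mathbb{P}_n\eta^2(f) - (\mathbb{P}_n\eta(f))^2$, while $M(f)=P\eta^2(f)$ and $P\eta(f)=0$. Theorem \ref{THM:GC} shows that $\mathcal{F}_1$ and $\mathcal{F}_2$ are $P$-Glivenko-Cantelli with integrable envelopes, which simultaneously controls $\sup_{\mathcal{B}_1}|\mathbb{P}_n\eta^2(f)-P\eta^2(f)|$ and $\sup_{\mathcal{B}_1}|\mathbb{P}_n\eta(f)|$, yielding the required uniform convergence almost surely.

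The third step marries the spline-approximation property to the uniform convergence. Since $\phi_1\in\mathcal{H}_1$ with unit $\mu,2$-norm, Assumption (A4) produces $\tilde f_n\in\mathcal{S}$ with $\|\tilde f_n-\phi_1\|_{\mu,2}\to 0$; renormalizing to unit norm (valid since $\|\tilde f_n\|_{\mu,2}\to 1$) places $\tilde f_n\in\mathcal{B}_1\cap\mathcal{S}$ eventually, and continuity of $M$ gives $M(\tilde f_n)\to\lambda_1$, hence $M_n(\tilde f_n)\to\lambda_1$ a.s. by the uniform convergence. Since $\widehat\phi_{n1}$ maximizes $M_n$ over $\mathcal{B}_1\cap\mathcal{S}$ and lies in $\mathcal{B}_1$, the sandwich $M_n(\tilde f_n)\le M_n(\widehat\phi_{n1})\le\sup_{\mathcal{B}_1}M_n$ combined with uniform convergence forces $M_n(\widehat\phi_{n1})\to\lambda_1$ and therefore $M(\widehat\phi_{n1})\to\lambda_1$ a.s. Decomposing $\widehat\phi_{n1}=\sum_k a_{nk}\phi_k+r_n$ orthogonally with $\sum_k a_{nk}^2+\|r_n\|_{\mu,2}^2=1$, the bound $M(\widehat\phi_{n1})\le a_{n1}^2\lambda_1+(1-a_{n1}^2)\lambda_2$ together with $M(\widehat\phi_{n1})\to\lambda_1$ and the spectral gap forces $a_{n1}^2\to 1$, so setting $S_{n1}=\mathrm{sign}(a_{n1})$ gives $\|S_{n1}\widehat\phi_{n1}-\phi_1\|_{\mu,2}^2 = 2-2|a_{n1}|\to 0$.

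I expect the main obstacle to be the interplay between the spline restriction and the argmax identification. The Glivenko-Cantelli step dispatches the empirical-versus-population gap uniformly, but one must then verify that $\widehat\phi_{n1}$, although optimized only over the $n$-dependent subspace $\mathcal{S}$, still tracks the theoretical maximizer living in the larger $\mathcal{H}_0$; this rides entirely on Assumption (A4), and requires checking that the renormalization of the spline approximant neither leaves $\mathcal{B}_1\cap\mathcal{S}$ nor degrades its proximity to $\phi_1$. Once uniform convergence and spline density are in place, the classical spectral-gap argument that converts objective-value closeness into argmax closeness is routine.
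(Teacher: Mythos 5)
Your proposal is correct and takes essentially the same route as the paper's proof: uniform almost-sure convergence of the quadratic form over $\mathcal{B}_1$ via the Glivenko--Cantelli classes $\mathcal{F}_1$ and $\mathcal{F}_2$ (the paper's $B_{1n}\xrightarrow{\mathrm{a.s.}}0$), a sandwich argument using a spline approximant of $\phi_1$ in $\mathcal{B}_1\cap\mathcal{S}$ (the paper uses the projection $\widetilde{\phi}_{n1}$, with Assumption (A4) invoked only implicitly), and the spectral-gap Fourier-coefficient argument forcing $\widehat{a}_{n1}^2\to1$ and defining $S_{n1}=\mathrm{sign}(\widehat{a}_{n1})$. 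If anything, you are more explicit than the paper on two points it leaves tacit: the identity $M_n(f)=\mathbb{P}_n\eta^2(f)-\{\mathbb{P}_n\eta(f)\}^2$ that handles the empirical centering at $\bar{Z}_n$ inside $V_n$, and the dependence of the lower-bound step on (A4) beyond the stated hypothesis (A3).
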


\begin{proof} [Proof of Theorem \ref{THM:fn1hat}]
Let 
\begin{align}
\label{EQN:B1n}
	B_{1n}=\sup_{f\in\mathcal{B}_1}\left|\int_{\mathcal{V}\times\mathcal{V}}f(\bs{s})f(\bs{s}^{\prime})\{V_n(\bs{s},\bs{s}^{\prime})-V_0(\bs{s},\bs{s}^{\prime})\}\mathrm{d}\mu(\bs{s})\mathrm{d}\mu(\bs{s}^{\prime})\right|.
\end{align} 
By Theorem \ref{THM:GC} (ii), $B_{1n}\xrightarrow{\as}0$ as $n\to\infty$. Then
\begin{align*}
	&\int_{\mathcal{V}\times\mathcal{V}}\widehat{\phi}_{n1}(\bs{s})\widehat{\phi}_{n1}(\bs{s}^{\prime})V_n(\bs{s},\bs{s}^{\prime})\mathrm{d}\mu(\bs{s})\mathrm{d}\mu(\bs{s}^{\prime}) 
	\leq \int_{\mathcal{V}\times\mathcal{V}}\widehat{\phi}_{n1}(\bs{s})\widehat{\phi}_{n1}(\bs{s}^{\prime})V_0(\bs{s},\bs{s}^{\prime})\mathrm{d}\mu(\bs{s})\mathrm{d}\mu(\bs{s}^{\prime}) + B_{1n}\\
	& ~~~~~~~~~~~~~~~~~~~~~~~~~~~~~~~~ \leq \sup_{f\in\mathcal{B}_1}\int_{\mathcal{V}\times\mathcal{V}}f(\bs{s})f(\bs{s}^{\prime})V_0(\bs{s},\bs{s}^{\prime})\mathrm{d}\mu(\bs{s})\mathrm{d}\mu(\bs{s}^{\prime}) + B_{1n}
	\xrightarrow{\as}\lambda_1, ~~~~\text{ as } n\to\infty.
\end{align*}
Let $\widetilde{\phi}_{n1}(\bs{s})$ be the projection of $\phi_1(\bs{s})$ onto $\mathcal{B}_1\cap \mathcal{S}$. Then on the other side, as $n\to\infty$,
\begin{align*}
	&\int_{\mathcal{V}\times\mathcal{V}}\widehat{\phi}_{n1}(\bs{s})\widehat{\phi}_{n1}(\bs{s}^{\prime})V_n(\bs{s},\bs{s}^{\prime})\mathrm{d}\mu(\bs{s})\mathrm{d}\mu(\bs{s}^{\prime}) 
	\geq \int_{\mathcal{V}\times\mathcal{V}}\widetilde{\phi}_{n1}(\bs{s})\widetilde{\phi}_{n1}(\bs{s}^{\prime})V_n(\bs{s},\bs{s}^{\prime})\mathrm{d}\mu(\bs{s})\mathrm{d}\mu(\bs{s}^{\prime}) \\
	& \geq \int_{\mathcal{V}\times\mathcal{V}}\widetilde{\phi}_{n1}(\bs{s})\widetilde{\phi}_{n1}(\bs{s}^{\prime})V_0(\bs{s},\bs{s}^{\prime})\mathrm{d}\mu(\bs{s})\mathrm{d}\mu(\bs{s}^{\prime}) - B_{1n} 
	\xrightarrow{\as} \int_{\mathcal{V}\times\mathcal{V}}\phi_1(\bs{s})\phi_1(\bs{s}^{\prime})V_0(\bs{s},\bs{s}^{\prime})\mathrm{d}\mu(\bs{s})\mathrm{d}\mu(\bs{s}^{\prime}) 
	= \lambda_1.
\end{align*}
Therefore, if we define $\widehat{a}_{nk}=\int_{\mathcal{V}}\widehat{\phi}_{n1}(\bs{s})\phi_k(\bs{s})\mathrm{d}\mu(\bs{s})$, we can conclude that as $n\to\infty$,
\begin{align*}
	\int_{\mathcal{V}\times\mathcal{V}}\widehat{\phi}_{n1}(\bs{s})\widehat{\phi}_{n1}(\bs{s}^{\prime})V_0(\bs{s},\bs{s}^{\prime})\mathrm{d}\mu(\bs{s})\mathrm{d}\mu(\bs{s}^{\prime})
	&=\int_{\mathcal{V}\times\mathcal{V}} \sum_{k=1}^{\infty}\widehat{\phi}_{n1}(\bs{s})\widehat{\phi}_{n1}(\bs{s}^{\prime})\lambda_k\phi_k(\bs{s})\phi_k(\bs{s}^{\prime})\mathrm{d}\mu(\bs{s})\mathrm{d}\mu(\bs{s}^{\prime}) \\
	&=\sum_{k=1}^{\infty}\widehat{a}_{nk}^2\lambda_k
	\xrightarrow{\as}\lambda_1.
\end{align*}
Note that $\sum_{k=1}^{\infty}\widehat{a}_{nk}^2\leq 1$ since $\widehat{\phi}_{n1}\in\mathcal{B}_1$, and since $\infty>\lambda_1>\lambda_2>\ldots$: this now leads to $\widehat{a}_{n1}^2\xrightarrow{\as}1$ and $\sum_{j=2}^{\infty}\widehat{a}_{nk}^2\xrightarrow{\as}0$. 
Hence there exists a sign sequence $\{S_{n1}\in\{-1,1\}\}_n$ such that
$
	\|S_{n1}\widehat{\phi}_{n1}-\phi_1\|_{\mu,2} \xrightarrow{\as}0
$ as $n\to\infty$.
\end{proof}

\begin{theorem}
\label{THM:fnkhat}
Suppose for some $1\leq K<\infty$, $\exists$ sign sequences $\{S_{n1}, S_{n2}, \ldots, S_{nK}: S_{nk}\in\{-1,1\}, k=1,\ldots,K\}_n$ such that 
\[
	\max_{1\leq k\leq K} \|S_{nk}\widehat{\phi}_{nk}-\phi_k\|_{\mu,2}  \xrightarrow{\as}0, ~~~~\text{ as } n\to\infty,
\]
where $\widehat{\phi}_{nk}\in\mathcal{B}_1\cap \mathcal{S}$, $\forall 1\leq k\leq K$ and $\{\widehat{\phi}_{n1},\ldots,\widehat{\phi}_{nK}\}$ are orthogonal in 
$\mathcal{H}_0$.
Recall the definitions of $\mathcal{H}_{0K}$ and $\widehat{\mathcal{H}}_{nK}$ in (\ref{DEF:HK}),
and let $\widehat{\mathcal{H}}_{nK}^{\perp}$ and $\mathcal{H}_{0K}^{\perp}$ denote the respective closed orthocomplements in $\mathcal{H}_0$.
Let $\{g_n\}\in\mathcal{H}_0$ be a sequence satisfying $\limsup_{n\to\infty}\|g_n\|_{\mu,2}<\infty$. Suppose Assumption (A3) holds, then both 
\[
	\|\widehat{\mathcal{H}}_{nK}g_n - \mathcal{H}_{0K}g_n\|_{\mu,2}\xrightarrow{\as}0 ~~~~\text{ and }~~~~
	\|\widehat{\mathcal{H}}_{nK}^{\perp}g_n - \mathcal{H}_{0K}^{\perp}g_n\|_{\mu,2}\xrightarrow{\as}0, ~~~~\text{ as } n\to\infty.
\]
\end{theorem}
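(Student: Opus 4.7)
The plan is to exploit finite-dimensionality. By hypothesis, $\widehat{\mathcal{H}}_{nK}$ is the span of $K$ orthonormal elements $\{\widehat{\phi}_{n1},\ldots,\widehat{\phi}_{nK}\}$ in $\mathcal{H}_0$, and $\mathcal{H}_{0K}$ is the span of the $K$ orthonormal eigenfunctions $\{\phi_1,\ldots,\phi_K\}$. Interpreting the symbols $\widehat{\mathcal{H}}_{nK}$ and $\mathcal{H}_{0K}$ as orthogonal projection operators onto the respective spaces, I can write explicitly
\[
    \widehat{\mathcal{H}}_{nK} g_n = \sum_{k=1}^K \left(\int_{\mathcal{V}} g_n(\bs{s})\widehat{\phi}_{nk}(\bs{s})\mathrm{d}\mu(\bs{s})\right)\widehat{\phi}_{nk},
    \qquad
    \mathcal{H}_{0K} g_n = \sum_{k=1}^K \left(\int_{\mathcal{V}} g_n(\bs{s})\phi_{k}(\bs{s})\mathrm{d}\mu(\bs{s})\right)\phi_{k}.
\]

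For the first claim, set $\widetilde{\phi}_{nk} = S_{nk}\widehat{\phi}_{nk}$. Since $S_{nk}^2 = 1$, the rank-one operator $\widehat{\phi}_{nk}\otimes\widehat{\phi}_{nk}$ equals $\widetilde{\phi}_{nk}\otimes\widetilde{\phi}_{nk}$, so I may replace $\widehat{\phi}_{nk}$ by $\widetilde{\phi}_{nk}$ throughout. Then I would apply the standard telescoping identity
\[
    \langle g_n,\widetilde{\phi}_{nk}\rangle_{\mu}\widetilde{\phi}_{nk} - \langle g_n,\phi_k\rangle_{\mu}\phi_k
    = \langle g_n,\widetilde{\phi}_{nk}-\phi_k\rangle_{\mu}\widetilde{\phi}_{nk} + \langle g_n,\phi_k\rangle_{\mu}(\widetilde{\phi}_{nk}-\phi_k)
\]
and bound the $L_2(\mathcal{V},\mu)$-norm of each summand by the Cauchy--Schwarz inequality, using $\|\widetilde{\phi}_{nk}\|_{\mu,2}=\|\phi_k\|_{\mu,2}=1$ and $\limsup_n \|g_n\|_{\mu,2} < \infty$. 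Each term is controlled by $C\|\widetilde{\phi}_{nk}-\phi_k\|_{\mu,2}$, which tends to zero almost surely by the hypothesis $\max_{1\leq k\leq K}\|S_{nk}\widehat{\phi}_{nk}-\phi_k\|_{\mu,2}\xrightarrow{\as}0$. Summing over the finite collection $k=1,\ldots,K$ and applying the triangle inequality then gives the first convergence.

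For the orthocomplement statement, I would note that the orthogonal decomposition $\mathcal{H}_0 = \widehat{\mathcal{H}}_{nK}\oplus\widehat{\mathcal{H}}_{nK}^{\perp} = \mathcal{H}_{0K}\oplus\mathcal{H}_{0K}^{\perp}$ yields, as projection operators, $\widehat{\mathcal{H}}_{nK}^{\perp} = I - \widehat{\mathcal{H}}_{nK}$ and $\mathcal{H}_{0K}^{\perp} = I - \mathcal{H}_{0K}$. Hence
\[
    \widehat{\mathcal{H}}_{nK}^{\perp}g_n - \mathcal{H}_{0K}^{\perp}g_n = \mathcal{H}_{0K}g_n - \widehat{\mathcal{H}}_{nK}g_n,
\]
which has already been shown to converge to zero a.s.

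The argument contains no real obstacle: because $K$ is fixed and both target spaces are $K$-dimensional with explicit orthonormal bases, the convergence reduces to bookkeeping with signs and Cauchy--Schwarz. The only subtle point to emphasize is that the sign sequence $\{S_{nk}\}$ cancels inside each rank-one projector $\widehat{\phi}_{nk}\otimes\widehat{\phi}_{nk}$, so the projection operators are themselves sign-invariant even though the individual basis vectors are not; this is what allows the hypothesis on signed basis convergence to translate into unambiguous operator-level convergence.
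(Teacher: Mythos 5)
Your proposal is correct and follows essentially the same route as the paper's own proof: the paper likewise exploits the sign cancellation $S_{nk}^2=1$ inside each rank-one projector, applies the identical telescoping decomposition (its terms $F_{1n}$ and $F_{2n}$ are exactly your two Cauchy--Schwarz summands, each bounded by $\|S_{nk}\widehat{\phi}_{nk}-\phi_k\|_{\mu,2}\|g_n\|_{\mu,2}$ and summed over the fixed finite $K$), and derives the orthocomplement statement from the same identity $\widehat{\mathcal{H}}_{nK}^{\perp}=\mathbb{I}-\widehat{\mathcal{H}}_{nK}$, $\mathcal{H}_{0K}^{\perp}=\mathbb{I}-\mathcal{H}_{0K}$. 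No gaps; your explicit remark on the sign-invariance of the projection operators is a nice clarification of a step the paper performs silently.
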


\begin{proof} [Proof of Theorem \ref{THM:fnkhat}]
\begin{align*}
	\widehat{\mathcal{H}}_{nK}g_n(\bs{s})
	=&\sum_{k=1}^K \widehat{\phi}_{nk}(\bs{s}) \int_{\mathcal{V}}\widehat{\phi}_{nk}(\bs{s}^{\prime})g_n(\bs{s}^{\prime})\mathrm{d}\mu(\bs{s}^{\prime})
	=\sum_{k=1}^K S_{nk}\widehat{\phi}_{nk}(\bs{s}) \int_{\mathcal{V}}S_{nk}\widehat{\phi}_{nk}(\bs{s}^{\prime})g_n(\bs{s}^{\prime})\mathrm{d}\mu(\bs{s}^{\prime}) \\
	=&H_{0K}g_n(\bs{s}) 
		+ \sum_{k=1}^K \{S_{nk}\widehat{\phi}_{nk}(\bs{s})-\phi_k(\bs{s})\} \int_{\mathcal{V}}\phi_k(\bs{s}^{\prime})g_n(\bs{s}^{\prime})\mathrm{d}\mu(\bs{s}^{\prime}) \\
	&\quad\quad\quad\quad\,+ \sum_{k=1}^K S_{nk}\widehat{\phi}_{nk}(\bs{s}) \int_{\mathcal{V}}\{S_{nk}\widehat{\phi}_{nk}(\bs{s}^{\prime})-\phi_k(\bs{s}^{\prime})\}g_n(\bs{s}^{\prime})\mathrm{d}\mu(\bs{s}^{\prime}) \\
	=&H_{0K}g_n(\bs{s}) + F_{1n}(\bs{s}) + F_{2n}(\bs{s}) .
\end{align*}
Note that as $n\to\infty$,
\[
	\|F_{1n}\|_{\mu,2}
	\leq\sum_{k=1}^K \|S_{nk}\widehat{\phi}_{nk}-\phi_k\|_{\mu,2}\|g_n\|_{\mu,2} \xrightarrow{\as} 0, ~~
	\|F_{2n}\|_{\mu,2}
	\leq\sum_{k=1}^K \|S_{nk}\widehat{\phi}_{nk}-\phi_k\|_{\mu,2}\|g_n\|_{\mu,2} \xrightarrow{\as} 0.
\]
Thus, $\|\widehat{\mathcal{H}}_{nK}g_n - \mathcal{H}_{0K}g_n\|_{\mu,2}\xrightarrow{\as}0$, as $n\to\infty$.
Since 
\[
	\|\widehat{\mathcal{H}}_{nK}^{\perp}g_n - \mathcal{H}_{0K}^{\perp}g_n\|_{\mu,2}
	=\|(\mathbb{I}-\widehat{\mathcal{H}}_{nK})g_n - (\mathbb{I}-\mathcal{H}_{0K})g_n\|_{\mu,2}
	=\|\widehat{\mathcal{H}}_{nK}g_n - \mathcal{H}_{0K}g_n\|_{\mu,2},
\]
where $\mathbb{I}$ is the identity operator, we now have the second conclusion of the theorem.
\end{proof}

\begin{theorem}
\label{THM:fnk+1hat}
Suppose Assumption (A3) holds.
Assume for some $1\leq K<\infty$, that $\{\widehat{\phi}_{n1},\ldots,\widehat{\phi}_{nK}\}$ form an orthonormal system on $\mathcal{S}\cap\mathcal{H}_0$ with
\[
	\max_{1\leq k\leq K} \|S_{nk}\widehat{\phi}_{nk}-\phi_k\|_{\mu,2} \xrightarrow{\as}0, ~~~~\text{ as } n\to\infty,
\]
for some sign sequences $\{S_{n1},\ldots,S_{nK}:S_{nk}\in\{-1,1\}, k=1,\ldots,K\}_n$. Let
\[
	\widehat{\phi}_{n(K+1)} = \argmax_{f\in\mathcal{B}_1\cap \mathcal{S}\cap\widehat{\mathcal{H}}_{nK}^{\perp}} 
	\int_{\mathcal{V}\times\mathcal{V}}f(\bs{s})f(\bs{s}^{\prime})V_n(\bs{s},\bs{s}^{\prime})\mathrm{d}\mu(\bs{s})\mathrm{d}\mu(\bs{s}^{\prime}).
\]
Then $\exists$ a sign sequence $\{S_{n(K+1)}\in\{-1,1\}\}_n$ such that 
\[
	\|S_{n(K+1)}\widehat{\phi}_{n(K+1)}-\phi_{(K+1)}\|_{\mu,2} \xrightarrow{\as}0, ~~~~\text{ as } n\to\infty.
\]
\end{theorem}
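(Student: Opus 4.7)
The plan is to adapt the one-basis argument from Theorem~\ref{THM:fn1hat} to the constrained setting $\mathcal{B}_1\cap\mathcal{S}\cap\widehat{\mathcal{H}}_{nK}^{\perp}$, leaning on the two tools already in hand: the Glivenko--Cantelli property of $\mathcal{F}_2$ from Theorem~\ref{THM:GC}(ii) and the projection-convergence statement of Theorem~\ref{THM:fnkhat}. Let $\widehat{v}_n=\int_{\mathcal{V}\times\mathcal{V}}\widehat{\phi}_{n(K+1)}(\bs{s})\widehat{\phi}_{n(K+1)}(\bs{s}^\prime)V_n(\bs{s},\bs{s}^\prime)\mathrm{d}\mu(\bs{s})\mathrm{d}\mu(\bs{s}^\prime)$ denote the attained value. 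First, the Glivenko--Cantelli statement controls $B_{1n}$ in (\ref{EQN:B1n}), so that $V_n$ can be replaced by $V_0$ uniformly over $\mathcal{B}_1$ with an $o(1)$ error a.s. This reduces everything to the deterministic quadratic form $\int\!\!\int f\,f\,V_0$, whose maximum over $\mathcal{B}_1\cap\mathcal{H}_{0K}^{\perp}$ is $\lambda_{K+1}$, achieved by $\phi_{K+1}$, by the Karhunen--Lo\`eve expansion of $V_0$.

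For the upper bound on $\widehat{v}_n$, apply Theorem~\ref{THM:fnkhat} with the bounded sequence $g_n=\widehat{\phi}_{n(K+1)}$: because $\widehat{\phi}_{n(K+1)}\in\widehat{\mathcal{H}}_{nK}^{\perp}$, $\widehat{\mathcal{H}}_{nK}\widehat{\phi}_{n(K+1)}=0$, and hence $\|\mathcal{H}_{0K}\widehat{\phi}_{n(K+1)}\|_{\mu,2}\xrightarrow{\as}0$. Writing $\widehat{a}_{nk}=\int_{\mathcal{V}}\widehat{\phi}_{n(K+1)}(\bs{s})\phi_k(\bs{s})\mathrm{d}\mu(\bs{s})$, this forces $\sum_{k=1}^{K}\widehat{a}_{nk}^2\xrightarrow{\as}0$, while $\sum_{k\ge 1}\widehat{a}_{nk}^2\le 1$. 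Expanding with $V_0$ and invoking the strict ordering $\lambda_1>\cdots>\lambda_{K+1}>\lambda_{K+2}>\cdots$ from Assumption~(A2) then yields $\widehat{v}_n\le\lambda_1\sum_{k=1}^{K}\widehat{a}_{nk}^2+\lambda_{K+1}\sum_{k\ge K+1}\widehat{a}_{nk}^2+o(1)\le\lambda_{K+1}+o(1)$ a.s.

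For the matching lower bound I need a competitor simultaneously in $\mathcal{S}$, in $\widehat{\mathcal{H}}_{nK}^{\perp}$, and close to $\phi_{K+1}$. Set $g_{N}=G_N\phi_{K+1}\in\mathcal{S}$ so that, by Assumption~(A4) applied along a suitable rate $N=N(n)\to\infty$, $\|g_N-\phi_{K+1}\|_{\mu,2}\to 0$. Define $h_n=g_N-\widehat{\mathcal{H}}_{nK}g_N\in\mathcal{S}\cap\widehat{\mathcal{H}}_{nK}^{\perp}$. Since $\phi_{K+1}\in\mathcal{H}_{0K}^{\perp}$, $\mathcal{H}_{0K}\phi_{K+1}=0$, so $\|\mathcal{H}_{0K}g_N\|_{\mu,2}=\|\mathcal{H}_{0K}(g_N-\phi_{K+1})\|_{\mu,2}\to 0$; Theorem~\ref{THM:fnkhat} then gives $\|\widehat{\mathcal{H}}_{nK}g_N\|_{\mu,2}\xrightarrow{\as}0$, hence $\|h_n-\phi_{K+1}\|_{\mu,2}\xrightarrow{\as}0$ and $\|h_n\|_{\mu,2}\to 1$. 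The normalization $\widetilde{\phi}_{n(K+1)}=h_n/\|h_n\|_{\mu,2}$ is a feasible competitor whose $V_0$-quadratic form converges to $\lambda_{K+1}$, so combining with Step~1 yields $\widehat{v}_n\ge\lambda_{K+1}+o(1)$ a.s. Pairing the bounds gives $\widehat{v}_n\xrightarrow{\as}\lambda_{K+1}$. Plugging back into the Fourier expansion of Step~2, the strict gap $\lambda_{K+1}>\lambda_{K+2}$ forces $\widehat{a}_{n,K+1}^2\to 1$ and $\sum_{k\ne K+1}\widehat{a}_{nk}^2\to 0$, which is exactly $\|S_{n(K+1)}\widehat{\phi}_{n(K+1)}-\phi_{K+1}\|_{\mu,2}\xrightarrow{\as}0$ with the sign $S_{n(K+1)}=\mathrm{sign}(\widehat{a}_{n,K+1})$.

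The main obstacle is the competitor construction in Step~3: it requires a single function that is simultaneously in the spline space $\mathcal{S}$, orthogonal to the empirical leading basis $\widehat{\mathcal{H}}_{nK}$, and $L_2$-close to $\phi_{K+1}$. Reconciling spline-feasibility with closeness to $\phi_{K+1}$ uses (A4); reconciling empirical orthogonality with closeness to $\phi_{K+1}$ uses the inductive hypothesis via Theorem~\ref{THM:fnkhat}; but these two vanishing errors must be synchronized through a joint choice of $N=N(n)$, which is the only place where the spline cardinality enters the rate and must be handled with care. Once this coupling is set, the remainder of the argument is a close parallel of the base case already treated in Theorem~\ref{THM:fn1hat}.
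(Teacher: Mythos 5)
Your proof is correct and follows essentially the same sandwich argument as the paper's: Theorem~\ref{THM:GC}(ii) to replace $V_n$ by $V_0$ uniformly over $\mathcal{B}_1$, Theorem~\ref{THM:fnkhat} to force $\sum_{k=1}^{K}\widehat{a}_{nk}^2\xrightarrow{\as}0$ (this is exactly the paper's $B_{2n}$ term), a spline-approximated copy of $\phi_{K+1}$ as the lower-bound competitor, and the eigenvalue-gap/Fourier-coefficient argument to extract the sign sequence. If anything, your lower bound is more careful than the paper's: the paper's competitor $\widetilde{\phi}_{n(K+1)}=\mathcal{B}_1\mathcal{S}\phi_{K+1}$ is never verified to lie in the constraint set $\widehat{\mathcal{H}}_{nK}^{\perp}$, whereas your explicit orthogonalization $h_n=g_N-\widehat{\mathcal{H}}_{nK}g_N$ guarantees feasibility and closes that small gap (and note that (A4) as stated already indexes $G_N$ by $n$, so the synchronization of $N(n)$ you flag as delicate is in fact built into the assumption, just as you also make explicit the reliance on (A4) that the paper's proof uses only implicitly).
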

\begin{proof}[Proof of Theorem \ref{THM:fnk+1hat}]
Let $V_{0K}(\bs{s},\bs{s}^{\prime})=\sum_{k=K+1}^{\infty}\lambda_k\phi_k(\bs{s})\phi_k(\bs{s}^{\prime})$. Recall the definition of $B_{1n}$ given in (\ref{EQN:B1n}) and $B_{1n}\xrightarrow{\as}0$ as shown in Theorem \ref{THM:fn1hat}. Again by Theorem \ref{THM:GC} (ii), we have as $n\to\infty$,
\begin{align}
	&\int_{\mathcal{V}\times\mathcal{V}}\widehat{\phi}_{n(K+1)}(\bs{s})\widehat{\phi}_{n(K+1)}(\bs{s}^{\prime})V_n(\bs{s},\bs{s}^{\prime})\mathrm{d}\mu(\bs{s})\mathrm{d}\mu(\bs{s}^{\prime}) \label{EQN:ffVK}\\
	&~~~~~~~~~~~~~~~~~~~~~~~~~~~~~~~~\leq \int_{\mathcal{V}\times\mathcal{V}}\widehat{\phi}_{n(K+1)}(\bs{s})\widehat{\phi}_{n(K+1)}(\bs{s}^{\prime})V_0(\bs{s},\bs{s}^{\prime})\mathrm{d}\mu(\bs{s})\mathrm{d}\mu(\bs{s}^{\prime}) + B_{1n} \nonumber\\
	&~~~~~~~~~~~~~~~~~~~~~~~~~~~~~~~~\leq \int_{\mathcal{V}\times\mathcal{V}}\widehat{\phi}_{n(K+1)}(\bs{s})\widehat{\phi}_{n(K+1)}(\bs{s}^{\prime})V_{0K}(\bs{s},\bs{s}^{\prime})\mathrm{d}\mu(\bs{s})\mathrm{d}\mu(\bs{s}^{\prime}) + B_{1n} + B_{2n}, \nonumber\\
	&~~~~~~~~~~~~~~~~~~~~~~~~~~~~~~~~\leq \sup_{f\in\mathcal{B}_1}\int_{\mathcal{V}\times\mathcal{V}}f(\bs{s})f(\bs{s}^{\prime})V_{0K}(\bs{s},\bs{s}^{\prime})\mathrm{d}\mu(\bs{s})\mathrm{d}\mu(\bs{s}^{\prime}) + B_{1n} + B_{2n} \xrightarrow{\as}\lambda_{K+1}, \nonumber
\end{align}
where 
$
	B_{2n} = |\sum_{k=1}^K\lambda_k\{\int_{\mathcal{V}}\widehat{\phi}_{n(K+1)}(\bs{s})\phi_k(\bs{s})\mathrm{d}\mu(\bs{s})\}^2 |
	\xrightarrow{\as} 0
$ 
by Theorem \ref{THM:fnkhat}.
Let $\widetilde{\phi}_{n(K+1)}=\mathcal{B}_1\mathcal{S}\phi_{K+1}$, then
\begin{align*}
	(\ref{EQN:ffVK}) 
	&\geq \int_{\mathcal{V}\times\mathcal{V}}\widetilde{\phi}_{n(K+1)}(\bs{s})\widetilde{\phi}_{n(K+1)}(\bs{s}^{\prime})V_n(\bs{s},\bs{s}^{\prime})\mathrm{d}\mu(\bs{s})\mathrm{d}\mu(\bs{s}^{\prime}) \\
	&\geq \int_{\mathcal{V}\times\mathcal{V}}\widetilde{\phi}_{n(K+1)}(\bs{s})\widetilde{\phi}_{n(K+1)}(\bs{s}^{\prime})V_0(\bs{s},\bs{s}^{\prime})\mathrm{d}\mu(\bs{s})\mathrm{d}\mu(\bs{s}^{\prime}) - B_{1n} 
	\xrightarrow{\as}\lambda_{K+1}, ~~~~\text{ as } n\to\infty.
\end{align*}
Thus, 
\begin{align}
\label{EQN:ffV0K}
	\int_{\mathcal{V}\times\mathcal{V}}\widehat{\phi}_{n(K+1)}(\bs{s})\widehat{\phi}_{n(K+1)}(\bs{s}^{\prime})V_{0K}(\bs{s},\bs{s}^{\prime})\mathrm{d}\mu(\bs{s})\mathrm{d}\mu(\bs{s}^{\prime})
	\xrightarrow{\as}\lambda_{K+1}, ~~~~\text{ as } n\to\infty. 
\end{align}
Let $\widehat{a}_{nk}=\int_{\mathcal{V}}\widehat{\phi}_{n(K+1)}(\bs{s})\phi_k(\bs{s})\mathrm{d}\mu(\bs{s})$. Using arguments similar to those in the proof of Theorem \ref{THM:fn1hat}, since $\sum_{k=1}^{\infty}\widehat{a}_{nk}^2\leq1$ and by $(\ref{EQN:ffV0K})$, we have $\sum_{k=K+1}^{\infty}\widehat{a}_{nk}^2\lambda_k\xrightarrow{\as}\lambda_{K+1}$, which forces $\widehat{a}_{n(K+1)}^2\xrightarrow{\as}1$ and $\sum_{k=1}^K\widehat{a}_{nk}^2+\sum_{k=K+2}^{\infty}\widehat{a}_{nk}^2\xrightarrow{\as}0$. Then there exists a sign sequence $\{S_{n(K+1)}\in\{-1,1\}\}_n$ such that $\|S_{n(K+1)}\widehat{\phi}_{n(K+1)}-\phi_{K+1}\|_{\mu,2}\xrightarrow{\as}0$ as $n\to\infty$.
\end{proof}

\begin{proof} [Proof of Theorem \ref{THM:fHconv}]
Based on these theorems and lemmas introduced above, we are ready to give the detailed proof of Theorem \ref{THM:fHconv}.
\begin{enumerate} [(i)]
    \item The conclusion of (i) follows by Theorems \ref{THM:fn1hat} -- \ref{THM:fnk+1hat} directly.
    \item The conclusion of (ii) has been shown in Theorem \ref{THM:fnkhat}.
    \item It is obvious that for $1\leq k\leq K_n\wedge p_n$,
\[
	\int_{\mathcal{V}\times\mathcal{V}} \widehat{\phi}_{nk}(\bs{s})\widehat{\phi}_{nk}(\bs{s}^{\prime})V_n(\bs{s},\bs{s}^{\prime})\mathrm{d}\mu(\bs{s})\mathrm{d}\mu(\bs{s}^{\prime})
	\geq \int_{\mathcal{V}\times\mathcal{V}} \widehat{\phi}^{\S}_{nk}(\bs{s})\widehat{\phi}^{\S}_{nk}(\bs{s}^{\prime})V_n(\bs{s},\bs{s}^{\prime})\mathrm{d}\mu(\bs{s})\mathrm{d}\mu(\bs{s}^{\prime}).
\]
On the other side, for any $f\in\mathcal{B}_1\cap \mathcal{S}$, $\exists \bs{a} \in\mathbb{R}^{J_n}\cap\mathcal{A}_1$, such that $f(\bs{s})=\bs{a}^{\top}\mathbf{H}^{-1/2}\mathbf{B}(\bs{s})$. Thus
\begin{align*}
	&\int_{\mathcal{V}\times\mathcal{V}} f(\bs{s})f(\bs{s}^{\prime})V_n(\bs{s},\bs{s}^{\prime})\mathrm{d}\mu(\bs{s})\mathrm{d}\mu(\bs{s}^{\prime}) \\
    &=\bs{a}^{\top}\mathbf{H}^{-1/2} \int_{\mathcal{V}\times\mathcal{V}} \mathbf{B}(\bs{s})V_n(\bs{s},\bs{s}^{\prime})\mathbf{B}^{\top}(\bs{s}^{\prime})\mathrm{d}\mu(\bs{s})\mathrm{d}\mu(\bs{s}^{\prime})\mathbf{H}^{-1/2}\bs{a} \\
	&=\bs{a}^{\top}\mathbf{K}_n\bs{a} \leq\widehat{\bs{\varphi}}_{n1}^{\top}\mathbf{K}_n\widehat{\bs{\varphi}}_{n1}
	=\int_{\mathcal{V}\times\mathcal{V}} \widehat{\phi}^{\S}_{n1}(\bs{s})\widehat{\phi}^{\S}_{n1}(\bs{s}^{\prime})V_n(\bs{s},\bs{s}^{\prime})\mathrm{d}\mu(\bs{s})\mathrm{d}\mu(\bs{s}^{\prime}).
\end{align*}
We therefore obtain $\widehat{\phi}^{\S}_{n1}(\bs{s})=\argmax_{f\in\mathcal{B}_1\cap \mathcal{S}} \int_{\mathcal{V}\times\mathcal{V}} f(\bs{s})f(\bs{s}^{\prime})V_n(\bs{s},\bs{s}^{\prime})\mathrm{d}\mu(\bs{s})\mathrm{d}\mu(\bs{s}^{\prime})$ by taking the supremum over $\mathcal{B}_1\cap \mathcal{S}$. Thus, $\widehat{\phi}_{n1}(\bs{s})$ is equal to $\widehat{\phi}^{\S}_{n1}(\bs{s})$ up to sign. 
Suppose $\forall k< K_n\wedge p_n$, we have $\widehat{\phi}_{n(k-1)}(\bs{s})=\widehat{\phi}^{\S}_{n(k-1)}(\bs{s})$. 
Similarly, we can obtain 
\[
    \widehat{\phi}^{\S}_{n_k}(\bs{s})
	=\argmax_{f\in\mathcal{B}_1\cap \mathcal{S}\cap \widehat{\mathcal{H}}_{n(k-1)}^{\perp}} \int_{\mathcal{V}\times\mathcal{V}} f(\bs{s})f(\bs{s}^{\prime})V_n(\bs{s},\bs{s}^{\prime})\mathrm{d}\mu(\bs{s})\mathrm{d}\mu(\bs{s}^{\prime})
\]
by noticing that $\widehat{\bs{\varphi}}_{nk}=\argmax_{\bs{a} \in\mathbb{R}^{J_n}\cap\mathcal{A}_1}\bs{a}^{\top}\widehat{\mathbf{K}}_{nk}\bs{a}$, where 
\[
	\widehat{\mathbf{K}}_{nk}=\mathbf{H}^{-1/2}\int_{\mathcal{V}\times\mathcal{V}}\mathbf{B}(\bs{s})\sum_{k^{\prime}\geq k}\widehat{\lambda}_{nk^{\prime}}\widehat{\phi}_{nk}(\bs{s})\widehat{\phi}_{nk}(\bs{s}^{\prime})\mathbf{B}^{\top}(\bs{s}^{\prime})\mathrm{d}\mu(\bs{s})\mathrm{d}\mu(\bs{s}^{\prime}) \mathbf{H}^{-1/2}.
\]
Hence, for $1\leq k\leq K_n\wedge p_n$, $\widehat{\phi}_{nk}(\bs{s})$ and $\widehat{\phi}^{\S}_{nk}(\bs{s})$ are equal up to sign.
    \item By the definitions of $\widehat{\lambda}_{nk}$ and $\lambda_k$, for $1\leq k\leq K_n\wedge p_n$, we have
\begin{align*}
	\left|\widehat{\lambda}_{nk}-\lambda_k\right|
	&=\left|\int_{\mathcal{V}\times\mathcal{V}} 
	\left\{\widehat{\phi}_{nk}(\bs{s})\widehat{\phi}_{nk}(\bs{s}^{\prime})V_n(\bs{s},\bs{s}^{\prime})
	-\phi_k(\bs{s})\phi_k(\bs{s}^{\prime})V_0(\bs{s},\bs{s}^{\prime})\right\}
	\mathrm{d}\mu(\bs{s})\mathrm{d}\mu(\bs{s}^{\prime}) \right| \\
	&\leq \left|\int_{\mathcal{V}\times\mathcal{V}} \widehat{\phi}_{nk}(\bs{s})\widehat{\phi}_{nk}(\bs{s}^{\prime})\left\{V_n(\bs{s},\bs{s}^{\prime})-V_0(\bs{s},\bs{s}^{\prime})\right\}\mathrm{d}\mu(\bs{s})\mathrm{d}\mu(\bs{s}^{\prime})\right|\\
	&~~~~~~~~~~~~~~~~~~ +\left|\int_{\mathcal{V}\times\mathcal{V}} \left\{\widehat{\phi}_{nk}(\bs{s})\widehat{\phi}_{nk}(\bs{s}^{\prime})-\phi_k(\bs{s})\phi_k(\bs{s}^{\prime})\right\} V_0(\bs{s},\bs{s}^{\prime})\mathrm{d}\mu(\bs{s})\mathrm{d}\mu(\bs{s}^{\prime})\right|. 
\end{align*}
From the proofs of Theorem \ref{THM:GC} and Theorem \ref{THM:fnk+1hat}, as $n\to\infty$,
\[
	\left|\int_{\mathcal{V}\times\mathcal{V}} \widehat{\phi}_{nk}(\bs{s})\widehat{\phi}_{nk}(\bs{s}^{\prime})\left\{V_n(\bs{s},\bs{s}^{\prime})-V_0(\bs{s},\bs{s}^{\prime})\right\}\mathrm{d}\mu(\bs{s})\mathrm{d}\mu(\bs{s}^{\prime})\right|\xrightarrow{\as}0.
\]
Since $\sum_{k=1}^{\infty}\lambda_k<\infty$ and $\lambda_1>\lambda_2>\ldots\geq0$, $\int_{\mathcal{V}\times\mathcal{V}} V_0^2(\bs{s},\bs{s}^{\prime})\mathrm{d}\mu(\bs{s})\mathrm{d}\mu(\bs{s}^{\prime})=\sum_{k=1}^{\infty}\lambda_k^2<\infty$. Thus,
\begin{align*}
	&\left|\int_{\mathcal{V}\times\mathcal{V}} \left\{S_{nk}\widehat{\phi}_{nk}(\bs{s})S_{nk}\widehat{\phi}_{nk}(\bs{s}^{\prime})-\phi_k(\bs{s})\phi_k(\bs{s}^{\prime})\right\} V_0(\bs{s},\bs{s}^{\prime})\mathrm{d}\mu(\bs{s})\mathrm{d}\mu(\bs{s}^{\prime})\right| \\
	\leq& \left|\int_{\mathcal{V}\times\mathcal{V}} S_{nk}\widehat{\phi}_{nk}(\bs{s})\left\{S_{nk}\widehat{\phi}_{nk}(\bs{s}^{\prime})-\phi_k(\bs{s}^{\prime})\right\} V_0(\bs{s},\bs{s}^{\prime})\mathrm{d}\mu(\bs{s})\mathrm{d}\mu(\bs{s}^{\prime})\right|  \\
	&~~~~~~~~~~~~~~~~~~~~~~~~~~~~~~~~~~~~+ \left|\int_{\mathcal{V}\times\mathcal{V}} \phi_k(\bs{s}^{\prime})\left\{S_{nk}\widehat{\phi}_{nk}(\bs{s})-\phi_k(\bs{s})\right\} V_0(\bs{s},\bs{s}^{\prime})\mathrm{d}\mu(\bs{s})\mathrm{d}\mu(\bs{s}^{\prime})\right|\\
	\leq & \|S_{nk}\widehat{\phi}_{nk}\|_{\mu,2}\|S_{nk}f_{nj}-\phi_k\|_{\mu,2} \left\{\int_{\mathcal{V}\times\mathcal{V}} V_0^2(\bs{s},\bs{s}^{\prime})\mathrm{d}\mu(\bs{s})\mathrm{d}\mu(\bs{s}^{\prime})\right\}^{1/2} \\
	&~~~~~~~~~~~~~~~~~~~~~~~~~~~~~~~~~~~~+ \|\phi_k\|_{\mu,2}\|S_{nk}f_{nj}-\phi_k\|_{\mu,2} \left\{\int_{\mathcal{V}\times\mathcal{V}} V_0^2(\bs{s},\bs{s}^{\prime})\mathrm{d}\mu(\bs{s})\mathrm{d}\mu(\bs{s}^{\prime})\right\}^{1/2} \\
	&
	\leq  2 \|S_{nk}f_{nj}-\phi_k\|_{\mu,2} \left\{\int_{\mathcal{V}\times\mathcal{V}} V_0^2(\bs{s},\bs{s}^{\prime})\mathrm{d}\mu(\bs{s})\mathrm{d}\mu(\bs{s}^{\prime})\right\}^{1/2}
	\to0
\end{align*}
almost sure as $n\to\infty$. The conclusion follows.
\end{enumerate}
\end{proof}

\fontsize{12}{14pt plus.8pt minus .6pt}\selectfont
\vskip 0.1in  \noindent \textbf{B. Proof of Convergence of Estimators} \vskip 0.1in 
\renewcommand{\theequation}{B.\arabic{equation}}
\renewcommand{\thesubsection}{B.\arabic{subsection}}
\renewcommand{\thetheorem}{B.\arabic{theorem}}
\renewcommand{\thelemma}{{\rm B.\arabic{lemma}}}
\renewcommand{\theproposition}{B.\arabic{proposition}}
\renewcommand{\thecorollary}{B.\arabic{corollary}}
\renewcommand{\thefigure}{B.\arabic{figure}}
\renewcommand{\thetable}{B.\arabic{table}}
\renewcommand{\theremark}{B.\arabic{remark}}
\setcounter{equation}{0}
\setcounter{theorem}{0}
\setcounter{lemma}{0}
\setcounter{figure}{0}
\setcounter{remark}{0}
\setcounter{proposition}{0}
\setcounter{subsection}{0}
\setcounter{subsubsection}{0}

\label{SUBSEC:proof_estimate}

In this section, we develop the convergence results of the coefficient estimator based on the proposed 2D-FPC bases. Following the sketch proof of Theorem \ref{THM:Conv} in Section 4.2 in the paper, we provide the supporting lemmas and the detailed proof.

We first define those coefficients and operators that will be employed in the proof. Recall that $\widetilde{\bs{\theta}}=(\bs{\alpha}_1^{\top},\bs{\alpha}_2^{\top},\bs{\gamma}_1^{\top},\bs{\gamma}_2^{\top})^{\top}\in\widetilde{\bs{\Theta}}_{K_1,K_2}$, where $\bs{\alpha}_{\ell}\in\mathbb{R}^q$, $\ell=1,2$, are the linear coefficients, and $\bs{\gamma}_{\ell}\in\mathbb{R}^{K_{\ell}}$, $\ell=1,2$, are the basis coefficients.
Let 
$\widetilde{\bs{\theta}}_n=\argmin_{\widetilde{\bs{\theta}}\in\widetilde{\bs{\Theta}}_{K_1,K_2}}n^{-1}\sum_{i=1}^n \{Y_i-\widetilde{\bs{\theta}}(\widehat{\mathbf{W}}_i)\}^2$.
We write $\widetilde{\bs{\theta}}_n=(\widetilde{\bs{\alpha}}_{1n},\widetilde{\bs{\alpha}}_{2n},\widetilde{\bs{\gamma}}_1,\widetilde{\bs{\gamma}}_2)$, with
$\widetilde{\bs{\gamma}}_1=(\widetilde{\gamma}_{1,1},\ldots,\widetilde{\gamma}_{1,K_1})^{\top}$ and $\widetilde{\bs{\gamma}}_2=(\widetilde{\gamma}_{2,1},\ldots,\widetilde{\gamma}_{2,K_2})^{\top}$,
then 
\[
    \widehat{\bs{\theta}}_n= \left(\widetilde{\bs{\alpha}}_{1n},\widetilde{\bs{\alpha}}_{2n},\sum_{k=1}^{K_1}\widetilde{\gamma}_{1,k}\widehat{\phi}_{1,nk}(\bs{s}),\sum_{k=1}^{K_2}\widetilde{\gamma}_{2,k}\widehat{\phi}_{2,nk}(\bs{s})\right)^{\top}.
\]
Recall that $\bs{\theta}_0=(\bs{\alpha}_{01},\bs{\alpha}_{02},\beta_{01},\beta_{02})\in\bs{\Theta}$ is the true parameter value. 
Let $\gamma_{0\ell,k}=\int_{\mathcal{V}}\phi_{\ell,k}(\bs{s})\beta_{0\ell}(\bs{s})\mathrm{d}\mu(\bs{s})$, and thus $\beta_{0\ell}(\bs{s})=\sum_{k=1}^{\infty}\gamma_{0\ell,k}\phi_{\ell,k}(\bs{s})$, $\ell=1,2$, where $\{\phi_{1,k}(\bs{s})\}_k$ and $\{\phi_{2,k}(\bs{s})\}_k$ are theoretical orthonormal bases for $Z(\bs{s})$ and $AZ(\bs{s})$, respectively.
 Define 
\[
    \bs{\theta}_{0n}
        =\left(\bs{\alpha}_{01}^{\top}, \bs{\alpha}_{02}^{\top},  \sum_{k=1}^{K_1}\gamma_{01,k}S_{1,nk}\phi_{1,k}(\bs{s}), \sum_{k=1}^{K_2}\gamma_{02,k}S_{2,nk}\phi_{2,k}(\bs{s})\right)^{\top}, 
\]
and
\[
	\widetilde{\bs{\theta}}_{0n}=\left(\bs{\alpha}_{01}^{\top},\bs{\alpha}_{02}^{\top},(\gamma_{01,1},\ldots,\gamma_{01,K_1}),(\gamma_{02,1},\ldots,\gamma_{02,K_2})\right)^{\top}.
\] 

We first explore the properties of the truncated terms. Let
\begin{align}
\label{DEF:deltaW}
	\Delta_{0n}(\mathbf{W}_i)
	=\sum_{k=K_1+1}^{\infty}\gamma_{01,k}\int_{\mathcal{V}}S_{nk}\phi_k(\bs{s})Z_i(\bs{s})\mathrm{d}\mu(\bs{s}) + \sum_{k=K_2+1}^{\infty}\gamma_{02,k}\int_{\mathcal{V}}S_{nk}\phi_k(\bs{s})A_iZ_i(\bs{s})\mathrm{d}\mu(\bs{s}).
\end{align}
We use the following Lemma \ref{LEM:deltanW} to state that the truncated terms are negligible. 

\begin{lemma}
\label{LEM:deltanW}
Under Assumptions (A3) -- (A8), 
as $n\to\infty$, $n^{-1}\sum_{i=1}^n\Delta_{0n}^2(\mathbf{W}_i)\to0$.
\end{lemma}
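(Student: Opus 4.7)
\textbf{Proof proposal for Lemma \ref{LEM:deltanW}.} The plan is to peel the two pieces of $\Delta_{0n}(\mathbf{W}_i)$ apart, apply Cauchy--Schwarz in $L_2(\mathcal{V},\mu)$ to each, bound the data factor by the strong law of large numbers, and then kill the coefficient factor using the fact that $\beta_{01},\beta_{02}\in L_2(\mathcal{V},\mu)$ together with $K_1,K_2\to\infty$ as $n\to\infty$ (which is the standing truncation regime in which Theorem \ref{THM:Conv} is proved).

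First, by $(a+b)^2\le 2a^2+2b^2$, I would write $\Delta_{0n}^2(\mathbf{W}_i)\le 2\Delta_{1,i}^2+2\Delta_{2,i}^2$, where, defining the sign-adjusted tail coefficient functions
\[
\widetilde{\beta}_{0\ell}^{(n)}(\bs{s})=\sum_{k=K_\ell+1}^{\infty}\gamma_{0\ell,k}S_{\ell,nk}\phi_{\ell,k}(\bs{s}),\qquad \ell=1,2,
\]
one has $\Delta_{1,i}=\int_{\mathcal{V}}\widetilde{\beta}_{01}^{(n)}(\bs{s})Z_i(\bs{s})\mathrm{d}\mu(\bs{s})$ and $\Delta_{2,i}=\int_{\mathcal{V}}\widetilde{\beta}_{02}^{(n)}(\bs{s})A_iZ_i(\bs{s})\mathrm{d}\mu(\bs{s})$. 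By the Cauchy--Schwarz inequality in $L_2(\mathcal{V},\mu)$, together with the orthonormality of $\{\phi_{\ell,k}\}_k$ in $\mathcal{H}_0$ and $|S_{\ell,nk}|=1$,
\[
\Delta_{1,i}^2\le\|\widetilde{\beta}_{01}^{(n)}\|_{\mu,2}^2\,\|Z_i\|_{\mu,2}^2=\Bigl(\sum_{k>K_1}\gamma_{01,k}^2\Bigr)\|Z_i\|_{\mu,2}^2,
\]
and similarly $\Delta_{2,i}^2\le\bigl(\sum_{k>K_2}\gamma_{02,k}^2\bigr)\|A_iZ_i\|_{\mu,2}^2\le\bigl(\sum_{k>K_2}\gamma_{02,k}^2\bigr)\|Z_i\|_{\mu,2}^2$ since $A_i\in\{0,1\}$.

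Averaging over $i$ gives
\[
n^{-1}\sum_{i=1}^{n}\Delta_{0n}^2(\mathbf{W}_i)\le 2\Bigl(\sum_{k>K_1}\gamma_{01,k}^2+\sum_{k>K_2}\gamma_{02,k}^2\Bigr)\cdot n^{-1}\sum_{i=1}^{n}\|Z_i\|_{\mu,2}^2.
\]
By Assumption (A2), $\mathrm{E}\|Z\|_{\mu,2}^2<\infty$, so the strong law of large numbers yields $n^{-1}\sum_{i=1}^n\|Z_i\|_{\mu,2}^2\xrightarrow{\as}\mathrm{E}\|Z\|_{\mu,2}^2<\infty$. By Assumption (A10), $\beta_{0\ell}\in\mathcal{H}_0$ with $\|\beta_{0\ell}\|_{\mu,2}^2<\infty$, so Parseval (applied to the projection of $\beta_{0\ell}$ onto $\mathcal{H}_1$) gives $\sum_{k=1}^{\infty}\gamma_{0\ell,k}^2\le\|\beta_{0\ell}\|_{\mu,2}^2<\infty$, whence the tails $\sum_{k>K_\ell}\gamma_{0\ell,k}^2\to 0$ as $K_\ell\to\infty$. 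Combining these two facts under the regime $K_1,K_2\to\infty$ delivers the claimed convergence.

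The argument is essentially mechanical once one has the orthonormality of $\{\phi_{\ell,k}\}_k$ and the $L_2$-membership of $\beta_{0\ell}$; the only mildly delicate point is verifying that the signs $S_{\ell,nk}$ are harmless (they are, as they only appear squared in the norm computation) and that the truncation indices are indeed allowed to grow with $n$, which is the setting the companion results in Theorem \ref{THM:fHconv} already assume. I expect no serious obstacle beyond cleanly stating this growth assumption.
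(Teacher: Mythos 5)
Your proof is correct, but it runs along a genuinely different track from the paper's. The paper computes the second moment exactly: writing $\Delta_{0n}=I_1+I_2$, it expands $\mathrm{E}Z_i(\bs{s})Z_i(\bs{s}^{\prime})$ in the Mercer series and uses orthonormality of the $\phi_k$ to obtain $\mathrm{E}I_1^2=\sum_{k>K_1}\gamma_{01,k}^2\lambda_k$ (and analogous bounds for $I_2$ and the cross term), then passes from $n^{-1}\sum_{i}\Delta_{0n}^2(\mathbf{W}_i)$ to $\mathrm{E}\Delta_{0n}^2(\mathbf{W})$ by a law of large numbers and bounds the limit by $2\bigl(\sum_{k>K_1}\gamma_{01,k}^2+\sum_{k>K_2}\gamma_{02,k}^2\bigr)\bigl(\sum_{k>K_1\wedge K_2}\lambda_k\bigr)$, which vanishes via the eigenvalue tail from Assumption (A3) as $K_1\wedge K_2\to\infty$. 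You instead bound pathwise by Cauchy--Schwarz, which factorizes the average into a deterministic coefficient tail $\sum_{k>K_\ell}\gamma_{0\ell,k}^2$ times the honest iid average $n^{-1}\sum_i\|Z_i\|_{\mu,2}^2$, and you kill the coefficient tail by Bessel; the eigenstructure enters only through $\mathrm{E}\|Z\|_{\mu,2}^2<\infty$. Each route buys something. Your factorization sidesteps two informalities in the paper's version: the LLN there is applied to a triangular array ($\Delta_{0n}$ depends on $n$ through the signs $S_{nk}$ and the truncation levels), whereas your SLLN applies to a fixed iid sequence with all $n$-dependence isolated in a deterministic prefactor; and the paper's identity $\mathrm{E}Z_i(\bs{s})Z_i(\bs{s}^{\prime})=\sum_{k}\lambda_k\phi_k(\bs{s})\phi_k(\bs{s}^{\prime})$ is exact only for a centered process, an issue your pathwise bound never touches. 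Conversely, the paper's exact computation yields the sharper quantity $\sum_{k>K_\ell}\gamma_{0\ell,k}^2\lambda_k$, which is small whenever \emph{either} the coefficient tail \emph{or} the eigenvalue tail is small; your Cauchy--Schwarz discards the $\lambda_k$-weighting and so requires the coefficient tail itself to vanish. Both proofs share the bookkeeping caveats you flagged: square-summability of the $\gamma_{0\ell,k}$ comes from Assumption (A10) (not listed in the lemma's hypotheses, but used by the paper as well), and the conclusion holds in the regime $K_1\wedge K_2\to\infty$, exactly as the paper's own closing line ``$\mathrm{E}_{1n}^{\ast}(K_1,K_2)\to0$ as $K_1\wedge K_2\to\infty$'' indicates.
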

\begin{proof}
Decompose $\Delta_{0n}(\mathbf{W}_i)=I_1+I_2$, where
\[
 	 I_1 = \sum_{k=K_1+1}^{\infty}\gamma_{01,k}\int_{\mathcal{V}}S_{nk}\phi_k(\bs{s})Z_i(\bs{s})\mathrm{d}\mu(\bs{s}), ~~~~~~
	 I_2 = \sum_{k=K_2+1}^{\infty}\gamma_{02,k}\int_{\mathcal{V}}S_{nk}\phi_k(\bs{s})A_iZ_i(\bs{s})\mathrm{d}\mu(\bs{s}) .
\]
We have
\begin{align*}
    \mathrm{E}I_1^2
     &=\sum_{k=K_1+1}^{\infty}\sum_{k^{\prime}=K_1+1}^{\infty} \gamma_{01,k}\gamma_{01,k^{\prime}} \int_{\mathcal{V}\times\mathcal{V}}S_{nk}S_{nk^{\prime}}\phi_k(\bs{s})\phi_{k^{\prime}}(\bs{s}^{\prime})\left\{\mathrm{E}Z_i(\bs{s})Z_i(\bs{s}^{\prime})\right\}\mathrm{d}\mu(\bs{s})\mathrm{d}\mu(\bs{s}^{\prime}) \\
    &=\sum_{k=K_1+1}^{\infty}\sum_{k^{\prime}=K_1+1}^{\infty} \gamma_{01,k}\gamma_{01,k^{\prime}} \int_{\mathcal{V}\times\mathcal{V}}S_{nk}S_{nk^{\prime}}\phi_k(\bs{s})\phi_{k^{\prime}}(\bs{s}^{\prime})\left\{\sum_{k^{\prime\prime}=1}^{\infty}\lambda_{k^{\prime\prime}}\phi_{k^{\prime\prime}}(\bs{s})\phi_{k^{\prime\prime}}(\bs{s}^{\prime})\right\}\mathrm{d}\mu(\bs{s})\mathrm{d}\mu(\bs{s}^{\prime})\\
	&=\sum_{k=K_1+1}^{\infty}\sum_{k^{\prime}=K_1+1}^{\infty}\sum_{k^{\prime\prime}=1}^{\infty}\gamma_{01,k}\gamma_{01,k^{\prime}}\lambda_{k^{\prime\prime}} S_{nk}S_{nk^{\prime}}
		\int_{\mathcal{V}}\phi_k(\bs{s})\phi_{k^{\prime\prime}}(\bs{s})\mathrm{d}\mu(\bs{s})
		\int_{\mathcal{V}}\phi_{k^{\prime}}(\bs{s}^{\prime})\phi_{k^{\prime\prime}}(\bs{s}^{\prime})\mathrm{d}\mu(\bs{s}^{\prime}) \\
	&=\sum_{k=K_1+1}^{\infty} \gamma_{01,k}^2\lambda_k.
\end{align*}
Similarly, $\mathrm{E}I_2^2\leq\sum_{k=K_2+1}^{\infty} \gamma_{02,k}^2\lambda_k$, and $\mathrm{E}I_1 I_2\leq \sum_{k=K_1\vee K_2+1}^{\infty} \gamma_{01,k}\gamma_{02,k}\lambda_k$.
Then as $n\to\infty$,
\begin{align*}
	n^{-1}\sum_{i=1}^n\Delta_{0n}^2(\mathbf{W}_i)
	&\to \mathrm{E} \Delta_{0n}^2(\mathbf{W})
	\leq 2\left(\sum_{k=K_1+1}^{\infty} \gamma_{01,k}^2\lambda_k + \sum_{k=K_2+1}^{\infty} \gamma_{02,k}^2\lambda_k\right)  \\
	&\leq 2\left(\sum_{k=K_1+1}^{\infty}\gamma_{01,k}^2 + \sum_{k=K_2+1}^{\infty}\gamma_{02,k}^2\right) \left(\sum_{k=K_1\wedge K_2+1}^{\infty}\lambda_k\right)
	\equiv\mathrm{E}_{1n}^{\ast}(K_1,K_2).
\end{align*}
Thus, we have $\infty > \mathrm{E}_{1n}^{\ast}(K_1,K_2) \to 0$ as $K_1\wedge K_2\to\infty$.
\end{proof}

Next, we use the following Lemma \ref{LEM:thetaW} to state that the errors induced by the approximation of $\{\widehat{\phi}_{nk}\}$ to $\{\phi_{k}\}$ are negligible.
\begin{lemma}
\label{LEM:thetaW}
Under Assumptions (A3) -- (A8),
as $n\to\infty$, 
$n^{-1}\sum_{i=1}^n\{\bs{\theta}_{0n}(\mathbf{W}_i) - \widetilde{\bs{\theta}}_{0n}(\widehat{\mathbf{W}}_i)\}^2\xrightarrow{\as}0$.
\end{lemma}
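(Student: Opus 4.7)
The plan is to write the difference $\bs{\theta}_{0n}(\mathbf{W}_i) - \widetilde{\bs{\theta}}_{0n}(\widehat{\mathbf{W}}_i)$ in closed form, where the $\mathbf{X}_i$ and $A_i\mathbf{X}_i$ contributions cancel because the linear coefficients agree, leaving only a basis-approximation error:
\[
D_i \;\equiv\; \sum_{k=1}^{K_1}\gamma_{01,k}\!\int_{\mathcal{V}}\!\bigl\{S_{1,nk}\phi_{1,k}(\bs{s})-\widehat{\phi}_{1,nk}(\bs{s})\bigr\}Z_i(\bs{s})\,\mathrm{d}\mu(\bs{s}) \;+\; \sum_{k=1}^{K_2}\gamma_{02,k}\!\int_{\mathcal{V}}\!\bigl\{S_{2,nk}\phi_{2,k}(\bs{s})-\widehat{\phi}_{2,nk}(\bs{s})\bigr\}A_iZ_i(\bs{s})\,\mathrm{d}\mu(\bs{s}).
\]
Using $D_i^2 \leq 2(D_{1i}^2+D_{2i}^2)$ to separate the two (functional) contributions reduces the task to controlling each term $D_{\ell i}^2$ uniformly over $i$.

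Next I would apply the Cauchy--Schwarz inequality twice to each piece. First, over the sum in $k$, using $\|\bs{\gamma}_{0\ell}\|_2^2 \leq \|\beta_{0\ell}\|_{\mu,2}^2 < \infty$ by Assumption (A10); second, over the integral in $\bs{s}$, giving
\[
D_{1i}^2 \;\leq\; \|\bs{\gamma}_{01}\|_2^2 \, K_1 \,\Bigl(\max_{1\leq k\leq K_1}\|S_{1,nk}\phi_{1,k}-\widehat{\phi}_{1,nk}\|_{\mu,2}^2\Bigr)\,\|Z_i\|_{\mu,2}^2,
\]
and analogously for $D_{2i}^2$ with $\|A_iZ_i\|_{\mu,2}^2 \leq \|Z_i\|_{\mu,2}^2$. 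Because $K_1,K_2$ are fixed finite truncation levels, the constants $\|\bs{\gamma}_{0\ell}\|_2^2 K_\ell$ are bounded and independent of $n$.

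Averaging over $i$ and using independence of the max-over-$k$ factor from $i$, I get
\[
n^{-1}\sum_{i=1}^n\bigl\{\bs{\theta}_{0n}(\mathbf{W}_i) - \widetilde{\bs{\theta}}_{0n}(\widehat{\mathbf{W}}_i)\bigr\}^2 \;\leq\; C \,\Bigl(\max_{\ell=1,2}\max_{1\leq k\leq K_\ell}\|S_{\ell,nk}\phi_{\ell,k}-\widehat{\phi}_{\ell,nk}\|_{\mu,2}^2\Bigr)\,\Bigl(n^{-1}\sum_{i=1}^n \|Z_i\|_{\mu,2}^2\Bigr).
\]
The second factor converges a.s.\ to $\mathrm{E}\|Z\|_{\mu,2}^2 < \infty$ by Assumption (A2) and the strong law of large numbers. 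The first factor converges a.s.\ to zero by Theorem \ref{THM:fHconv} (i), applied separately to $\{\widehat{\phi}_{1,nk}\}$ and $\{\widehat{\phi}_{2,nk}\}$; since $K_1$ and $K_2$ are fixed, this is immediate from the stated uniform-over-$k$ convergence (which holds even up to a growing $K_n$).

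There is no serious obstacle here: the only point requiring a touch of care is that $\widehat{\phi}_{\ell,nk}$ is random, so the max factor and the empirical average of $\|Z_i\|_{\mu,2}^2$ are jointly random. Both converge a.s.\ to their respective limits ($0$ and $\mathrm{E}\|Z\|_{\mu,2}^2$), and the product of an a.s.\ vanishing sequence with an a.s.\ bounded sequence is a.s.\ vanishing, which gives the conclusion. The algebra above is routine; the content of the lemma is almost entirely supplied by Theorem \ref{THM:fHconv} (i).
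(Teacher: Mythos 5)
Your proposal is correct and follows essentially the same route as the paper's own proof: the same cancellation of the linear parts, Cauchy--Schwarz over $k$ and over $\bs{s}$ (with $|A_i|\leq 1$), Theorem \ref{THM:fHconv}(i) for the basis-approximation error, and the law of large numbers for $n^{-1}\sum_{i=1}^n\|Z_i\|_{\mu,2}^2$. The only difference worth noting is that the paper keeps the bound in the form $\bigl(\sum_k\gamma_{0\ell,k}^2\bigr)\bigl(\sum_k\|S_{nk}\phi_k-\widehat{\phi}_{nk}\|_{\mu,2}^2\bigr)$ and couples $K_1\vee K_2$ to a rate sequence $r_n$ from Theorem \ref{THM:fHconv} (taking $K_1\vee K_2=r_n^{-1/2}$), so its argument also covers truncation levels growing slowly with $n$ --- which is invoked later in the proof of Theorem \ref{THM:Conv} --- whereas your $K_\ell\max_k$ bound is stated for fixed $K_1,K_2$ only (for which it is perfectly valid, and your explicit appeal to (A10) for $\sum_k\gamma_{0\ell,k}^2<\infty$ matches what the paper uses implicitly despite the lemma listing only (A3)--(A8)).
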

\begin{proof}
By the definitions of $\bs{\theta}_{0n}(\mathbf{W}_i)$ and $\widetilde{\bs{\theta}}_{0n}(\widehat{\mathbf{W}}_i)$,
\begin{align*}
	\bs{\theta}_{0n}(\mathbf{W}_i) - \widetilde{\bs{\theta}}_{0n}(\widehat{\mathbf{W}}_i) 
	=& \sum_{k=1}^{K_1}\gamma_{01,k}\int_{\mathcal{V}}\left\{S_{nk}\phi_k(\bs{s})-\widehat{\phi}_{nk}(\bs{s})\right\}Z_i(\bs{s})\mathrm{d}\mu(\bs{s}) \\
	&~~~~~~~~~~~~~~~~~~~~~~~~ + \sum_{k=1}^{K_2}\gamma_{02,k}\int_{\mathcal{V}}\left\{S_{nk}\phi_k(\bs{s})-\widehat{\phi}_{nk}(\bs{s})\right\}A_iZ_i(\bs{s})\mathrm{d}\mu(\bs{s}).
\end{align*}
So
$
	|\bs{\theta}_{0n}(\mathbf{W}_i) - \widehat{\bs{\theta}}_{0n}(\widehat{\mathbf{W}}_i)| 
	\leq \sum_{k=1}^{K_1}\gamma_{01,k}\|S_{nk}\phi_k-\widehat{\phi}_{nk}\|_{\mu,2}\|Z_i\|_{\mu,2}
		+ \sum_{k=1}^{K_2}\gamma_{02,k}\|S_{nk}\phi_k-\widehat{\phi}_{nk}\|_{\mu,2}\|Z_i\|_{\mu,2}
$
since $|A_i|\leq 1$. 
By Theorem \ref{THM:fHconv}, 
there exist a sequence $\{r_n\}$, such that $r_n\|S_{nk}\phi_k-\widehat{\phi}_{nk}\|_{\mu,2}\to0 $. Take $K_1\vee K_2=r_n^{-1/2}$,
as $n\to\infty$,  
\begin{align*}
	&n^{-1}\sum_{i=1}^n\left\{\bs{\theta}_{0n}(\mathbf{W}_i) - \widetilde{\bs{\theta}}_{0n}(\widehat{\mathbf{W}}_i)\right\}^2 \\
	&~~~~~~~~~~~~~~~~~~~~~~~\leq \left(\sum_{j=1}^{K_1\vee K_2}\gamma_{01,k}^2 + \gamma_{02,k}^2\right) \left(2\sum_{j=1}^{K_1\vee K_2}\|S_{nk}\phi_k-\widehat{\phi}_{nk}\|_{\mu,2}^2\right)n^{-1}\sum_{i=1}^n\|Z_i\|_{\mu,2}^2
	\xrightarrow{\as}{0}.  
\end{align*}
\end{proof}

Combining Lemmas \ref{LEM:deltanW} and \ref{LEM:thetaW}, we use the following Lemma \ref{LEM:ConvOrac} to illuminate the convergence rate of $\widetilde{\bs{\theta}}_{0n}$ to the observed $Y_i$.
\begin{lemma}
\label{LEM:ConvOrac}
Under Assumptions (A3) -- (A8), as $n\to\infty$, $n^{-1}\sum_{i=1}^n\{Y_i-\widetilde{\bs{\theta}}_{0n}(\widehat{\mathbf{W}}_i)\}^2\xrightarrow{\as}C_1$ for some constant $C_1$.
\end{lemma}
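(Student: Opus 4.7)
The plan is to decompose the residual $Y_i - \widetilde{\bs{\theta}}_{0n}(\widehat{\mathbf{W}}_i)$ into a noise piece and two approximation errors, show that each squared empirical average converges, and then handle the three cross terms by Cauchy--Schwarz. Starting from the true model $Y_i = \bs{\theta}_0(\mathbf{W}_i) + \varepsilon_i$, the key identity is
\[
	Y_i - \widetilde{\bs{\theta}}_{0n}(\widehat{\mathbf{W}}_i)
	= \varepsilon_i
	+ \{\bs{\theta}_0(\mathbf{W}_i) - \bs{\theta}_{0n}(\mathbf{W}_i)\}
	+ \{\bs{\theta}_{0n}(\mathbf{W}_i) - \widetilde{\bs{\theta}}_{0n}(\widehat{\mathbf{W}}_i)\}
	\equiv \varepsilon_i + R_{2,i} + R_{3,i},
\]
so that squaring yields three squares and three cross products.

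For the three squares, the strong law of large numbers applied to the iid sequence $\{\varepsilon_i^2\}$ (with $\mathrm{E}\varepsilon_i^2 = \sigma^2 < \infty$ by the model assumption) gives $n^{-1}\sum_{i=1}^n \varepsilon_i^2 \xrightarrow{\as} \sigma^2$. The truncation-error average $n^{-1}\sum R_{2,i}^2 \xrightarrow{\as} 0$ follows by mimicking the calculation in Lemma \ref{LEM:deltanW}: expand $R_{2,i}$ in the orthonormal systems $\{\phi_{\ell,k}\}_k$, take expectations using orthogonality and Assumption (A3), and invoke $\sum_k \gamma_{0\ell,k}^2 = \|\beta_{0\ell}\|_{\mu,2}^2 < \infty$ from (A10) to bound $\mathrm{E}R_{2,i}^2$ by a tail sum in $\lambda_k$ and $\gamma_{0\ell,k}^2$ that vanishes as $K_1 \wedge K_2 \to \infty$; an application of the SLLN to the iid envelope-bounded sequence $R_{2,i}^2$ then yields the almost-sure convergence. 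The basis-approximation term $n^{-1}\sum R_{3,i}^2 \xrightarrow{\as} 0$ is exactly Lemma \ref{LEM:thetaW}. For the cross terms, Cauchy--Schwarz gives $|n^{-1}\sum \varepsilon_i R_{2,i}| \leq (n^{-1}\sum\varepsilon_i^2)^{1/2}(n^{-1}\sum R_{2,i}^2)^{1/2} \to \sigma \cdot 0 = 0$ almost surely, and the remaining two cross terms are handled identically. Combining all six contributions gives $n^{-1}\sum\{Y_i - \widetilde{\bs{\theta}}_{0n}(\widehat{\mathbf{W}}_i)\}^2 \xrightarrow{\as} \sigma^2$, so the constant is $C_1 = \sigma^2$.

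The main delicacy will be controlling $R_{2,i}$ when the truncation levels $K_1(n), K_2(n)$ are allowed to grow with $n$, as is required for the subsequent estimator consistency proof. The cleanest route is to fix the sign ambiguity in $\widehat{\phi}_{\ell,nk}$ up front (eigenfunctions are determined only up to sign, so one may assume $S_{\ell,nk}\equiv 1$), which reduces $R_{2,i}$ to the clean tail $\sum_{k > K_1}\gamma_{01,k}\int \phi_{1,k}Z_i\,\mathrm{d}\mu + \sum_{k > K_2}\gamma_{02,k}\int\phi_{2,k}A_iZ_i\,\mathrm{d}\mu$. One then pairs the Parseval-based $L^2$ tail bound with the iid SLLN for $R_{2,i}^2$, whose envelope $(\|\beta_{01}\|_{\mu,2} + \|\beta_{02}\|_{\mu,2})\|Z_i\|_{\mu,2}$ has finite second moment by Assumptions (A2) and (A10), to deliver the required almost-sure convergence along a growing sequence of truncation levels.
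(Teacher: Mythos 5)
Your proposal follows essentially the same route as the paper's proof: the identical three-part decomposition $Y_i-\widetilde{\bs{\theta}}_{0n}(\widehat{\mathbf{W}}_i)=\varepsilon_i+\Delta_{0n}(\mathbf{W}_i)+\{\bs{\theta}_{0n}(\mathbf{W}_i)-\widetilde{\bs{\theta}}_{0n}(\widehat{\mathbf{W}}_i)\}$, with the strong law for $n^{-1}\sum_{i=1}^n\varepsilon_i^2$ and Lemmas \ref{LEM:deltanW} and \ref{LEM:thetaW} controlling the two squared error terms. Your explicit Cauchy--Schwarz handling of the cross terms and your identification of the limit as $C_1=\sigma^2$ are correct refinements of details the paper leaves implicit, not a different argument.
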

\begin{proof}
By definition, 
\begin{eqnarray*}
 	n^{-1}\sum_{i=1}^n\left\{Y_i-\widetilde{\bs{\theta}}_{0n}(\widehat{\mathbf{W}}_i)\right\}^2
	&=&n^{-1}\sum_{i=1}^n\left\{\bs{\theta}_0(\mathbf{W}_i) + \varepsilon_i -\widetilde{\bs{\theta}}_{0n}(\widehat{\mathbf{W}}_i)\right\}^2 \\
	&=&n^{-1}\sum_{i=1}^n \left\{\varepsilon_i +\Delta_{0n}(\mathbf{W}_i) + \bs{\theta}_{0n}(\mathbf{W}_i) -\widetilde{\bs{\theta}}_{0n}(\widehat{\mathbf{W}}_i)\right\}^2~~~\as,
\end{eqnarray*}
where $\Delta_{0n}(\mathbf{W}_i)$ is given in (\ref{DEF:deltaW}).
Since $n^{-1}\sum_{i=1}^n \varepsilon_i^2 \to \mathrm{Var}(\varepsilon)= \sigma^2 < \infty$, combined with Lemmas \ref{LEM:deltanW} and \ref{LEM:thetaW}, we have
\begin{align}
\label{EQN:Y2theta0nWhat}
	n^{-1}\sum_{i=1}^n\left\{Y_i - \widetilde{\bs{\theta}}_{0n}(\widehat{\mathbf{W}}_i)\right\}^2 
	\xrightarrow{\as} C_1 < \infty 
\end{align}
for some constant $C_1$.
\end{proof}

Note that 
$
 	n^{-1}\sum_{i=1}^n\{Y_i-\widetilde{\bs{\theta}}_n(\widehat{\mathbf{W}}_i)\}^2
	\leq n^{-1}\sum_{i=1}^n\{Y_i-\widetilde{\bs{\theta}}_{0n}(\widehat{\mathbf{W}}_i)\}^2. 
$ Consequently,
\begin{align}
\label{EQN:DiffDelta}
	-\frac{2}{n}\sum_{i=1}^n \left\{Y_i-\widetilde{\bs{\theta}}_{0n}(\widehat{\mathbf{W}}_i)\right\} \left\{\widetilde{\bs{\theta}}_n(\widehat{\mathbf{W}}_i)-\widetilde{\bs{\theta}}_{0n}(\widehat{\mathbf{W}}_i)\right\} + \frac{1}{n} \sum_{i=1}^n \left\{\widetilde{\bs{\theta}}_n(\widehat{\mathbf{W}}_i)-\widetilde{\bs{\theta}}_{0n}(\widehat{\mathbf{W}}_i)\right\}^2 
	\leq 0 .
\end{align}
Let $\widetilde{\bs{\delta}}_n=\widetilde{\bs{\theta}}_n-\widetilde{\bs{\theta}}_{0n}$ and $\widetilde{\bs{r}}_n=\widetilde{\bs{\delta}}_n / \|\widetilde{\bs{\delta}}_n\|_2$. 
We use the following Lemma \ref{LEM:OrderDelta} to build the boundedness of $\widetilde{\bs{\delta}}_n$.

\begin{lemma}
\label{LEM:OrderDelta}
Under Assumptions (A3) -- (A10), $\limsup_{n\to\infty} \|\widetilde{\bs{\delta}}_n\|_2 \leq C_2  < \infty$ almost surely for some constant $C_2$.
\end{lemma}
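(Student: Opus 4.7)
The plan is to turn the empirical KKT-type inequality (\ref{EQN:DiffDelta}) into an upper bound on $\|\widetilde{\bs{\delta}}_n\|_2$ by controlling the empirical Gram matrix $\widehat{\mathbf{M}}_n \equiv n^{-1}\sum_{i=1}^n \widehat{\mathbf{W}}_i \widehat{\mathbf{W}}_i^{\top}$ from below. First, I would apply the Cauchy--Schwarz inequality to the cross term in (\ref{EQN:DiffDelta}) to obtain
\begin{equation*}
    \frac{1}{n}\sum_{i=1}^n \widetilde{\bs{\delta}}_n(\widehat{\mathbf{W}}_i)^2 \leq 4\cdot\frac{1}{n}\sum_{i=1}^n \bigl\{Y_i-\widetilde{\bs{\theta}}_{0n}(\widehat{\mathbf{W}}_i)\bigr\}^2,
\end{equation*}
whose right-hand side converges almost surely to $4C_1$ by Lemma \ref{LEM:ConvOrac}. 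Since $\widetilde{\bs{\delta}}_n(\widehat{\mathbf{W}}_i) = \widetilde{\bs{\delta}}_n^{\top}\widehat{\mathbf{W}}_i$, the left-hand side equals $\|\widetilde{\bs{\delta}}_n\|_2^2 \cdot \widetilde{\bs{r}}_n^{\top}\widehat{\mathbf{M}}_n\widetilde{\bs{r}}_n$, so the conclusion reduces to proving $\liminf_{n\to\infty}\lambda_{\min}(\widehat{\mathbf{M}}_n) > 0$ almost surely.

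To establish this lower bound, I would introduce the oracle covariate vector $\mathbf{W}_i^{*} = (\mathbf{X}_i^{\top}, A_i\mathbf{X}_i^{\top}, (\mathbf{U}_{1,i}^{*})^{\top}, A_i(\mathbf{U}_{2,i}^{*})^{\top})^{\top}$ built from the true bases $\{\phi_{\ell,k}\}_{k=1}^{K_{\ell}}$, and the diagonal sign matrix $\mathbf{S}_n = \mathrm{diag}(\mathbf{I}_{2q}, S_{1,n1},\ldots,S_{1,nK_1}, S_{2,n1},\ldots,S_{2,nK_2})$ collecting the sign sequences from Theorem \ref{THM:fHconv}. The elementary bound $|\widehat{U}_{\ell,ik}-S_{\ell,nk}U_{\ell,ik}^{*}| \leq \|\widehat{\phi}_{\ell,nk}-S_{\ell,nk}\phi_{\ell,k}\|_{\mu,2}\cdot\|Z_i\|_{\mu,2}$, combined with Theorem \ref{THM:fHconv} (i), $\mathrm{E}\|Z\|_{\mu,2}^2 < \infty$, and $\mathrm{E}\|\mathbf{X}\|^2 < \infty$ from (A9), would yield $\widehat{\mathbf{M}}_n - \mathbf{S}_n\bigl\{n^{-1}\sum_i \mathbf{W}_i^{*}(\mathbf{W}_i^{*})^{\top}\bigr\}\mathbf{S}_n \xrightarrow{\as} 0$ in operator norm. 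A standard strong law of large numbers then gives $n^{-1}\sum_i \mathbf{W}_i^{*}(\mathbf{W}_i^{*})^{\top} \xrightarrow{\as} \mathbf{M}_0 \equiv \mathrm{E}\{\mathbf{W}^{*}(\mathbf{W}^{*})^{\top}\}$, and since $\mathbf{S}_n$ is orthogonal the spectrum is preserved.

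The main obstacle is verifying that the limit $\mathbf{M}_0$ is positive definite, since this is what ultimately keeps $\lambda_{\min}(\widehat{\mathbf{M}}_n)$ bounded away from zero. For any nonzero $\widetilde{\bs{r}} \in \mathbb{R}^{2q+K_1+K_2}$, the isomorphism between $\widetilde{\bs{\Theta}}_{K_1,K_2}$ and $\bs{\Theta}_{K_1,K_2}$ noted after (\ref{DEF:ThetaK}) produces a corresponding nonzero linear operator $\bs{\theta} \in \bs{\Theta}_{K_1,K_2}$ with $\widetilde{\bs{r}}^{\top}\mathbf{W}_i^{*} = \bs{\theta}(\mathbf{W}_i)$. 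Assumption (A8) then forces $\Pr\{\bs{\theta}(\mathbf{W}) \neq 0\} > 0$, so $\widetilde{\bs{r}}^{\top}\mathbf{M}_0\widetilde{\bs{r}} = \mathrm{E}\{\bs{\theta}(\mathbf{W})\}^2 > 0$. Compactness of the unit sphere in the finite-dimensional space $\mathbb{R}^{2q+K_1+K_2}$ upgrades this to $\lambda_{\min}(\mathbf{M}_0) > 0$, and continuity of the minimum eigenvalue in the operator norm delivers $\liminf_n \lambda_{\min}(\widehat{\mathbf{M}}_n) \geq \lambda_{\min}(\mathbf{M}_0) > 0$ almost surely. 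Combining the two estimates gives $\|\widetilde{\bs{\delta}}_n\|_2 \leq 2\sqrt{C_1/\lambda_{\min}(\mathbf{M}_0)} + o(1)$ almost surely, so the conclusion holds with $C_2 = 2\sqrt{C_1/\lambda_{\min}(\mathbf{M}_0)}$.
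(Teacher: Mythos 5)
Your proposal is correct, and its skeleton matches the paper's: start from (\ref{EQN:DiffDelta}), control the residual term via Lemma \ref{LEM:ConvOrac}, and bound $\|\widetilde{\bs{\delta}}_n\|_2$ by showing the empirical Gram matrix $n^{-1}\sum_i\widehat{\mathbf{W}}_i\widehat{\mathbf{W}}_i^{\top}$ (the paper's $\widehat{\mathbf{D}}_n$, your $\widehat{\mathbf{M}}_n$) stays uniformly positive definite, via comparison with an oracle Gram matrix built from the true bases and the sign sequences of Theorem \ref{THM:fHconv}. Two execution differences are worth noting. First, your Cauchy--Schwarz cancellation $a^2\leq 2ab\Rightarrow a\leq 2b$ is cleaner than the paper's route, which passes through the ratio $\|\widetilde{\bs{\delta}}_n\|_2^2/(1+\|\widetilde{\bs{\delta}}_n\|_2)$ and then needs the algebraic fact that $a^2/(1+a)\leq k$ forces $a\leq k+\sqrt{k}$; your version dispenses with that device and yields the explicit constant $C_2=2\sqrt{C_1/\lambda_{\min}(\mathbf{M}_0)}$ directly. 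Second, and more substantively, for the crucial step $\lambda_{\min}(\mathbf{M}_0)>0$ the paper simply asserts that the minimum eigenvalue of its limit matrix $\mathbf{D}$ is bounded below ``from Assumptions (A3) and (A9),'' whereas you derive it from Assumption (A8) through the isomorphism between $\widetilde{\bs{\Theta}}_{K_1,K_2}$ and $\bs{\Theta}_{K_1,K_2}$, plus compactness of the unit sphere in $\mathbb{R}^{2q+K_1+K_2}$. Your justification is the more defensible one: (A3) and (A9) control only the diagonal blocks $\mathbf{D}_{11}$ and $\mathbf{D}_{33}$ and do not by themselves exclude collinearity among $\mathbf{X}$, $A\mathbf{X}$, and the FPC scores (e.g., a component of $\mathbf{X}$ coinciding with a score of $Z$), while (A8) is precisely the non-degeneracy condition needed --- consistent with the paper's own remark that (A8) is implied by positive definiteness of $\mathrm{E}\mathbf{W}\mathbf{W}^{\top}$. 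Your explicit factorization $\widehat{\mathbf{W}}_i\approx\mathbf{S}_n\mathbf{W}_i^{*}$ with an orthogonal sign matrix also makes transparent why the $n$-dependent signs are harmless to the spectrum, a point the paper leaves implicit by folding the signs into $\widetilde{\mathbf{W}}_i$.
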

\begin{proof}
\begin{align}
	(\ref{EQN:DiffDelta}) &\Rightarrow {\frac{\|\widetilde{\bs{\delta}}_n\|_2^2}{1+\|\widetilde{\bs{\delta}}_n\|_2}} \frac{1}{n}\sum_{i=1}^n\{\widetilde{\bs{r}}_n(\widehat{\mathbf{W}}_i)\}^2
	\leq \frac{2}{n}\sum_{i=1}^n \left\{Y_i-\widetilde{\bs{\theta}}_{0n}(\widehat{\mathbf{W}}_i)\right\} \widetilde{\bs{r}}_n(\widehat{\mathbf{W}}_i) \nonumber \\
	&~~~~~~~~~~~~~~~~~~~~~~~~~~~~~~~~~~~~~~~~~~\leq 2\left[\frac{1}{n}\sum_{i=1}^n \left\{Y_i-\widetilde{\bs{\theta}}_{0n}(\widehat{\mathbf{W}}_i)\right\}^2\right]^{1/2} 
		\left[\frac{1}{n}\sum_{i=1}^n \left\{\widetilde{\bs{r}}_n(\widehat{\mathbf{W}}_i)\right\}^2\right]^{1/2}, \nonumber \\
	&\Rightarrow {\frac{\|\widetilde{\bs{\delta}}_n\|_2^2}{1+\|\widetilde{\bs{\delta}}_n\|_2}}
	\leq \frac{2\left[\frac{1}{n}\sum_{i=1}^n \left\{Y_i-\widetilde{\bs{\theta}}_{0n}(\widehat{\mathbf{W}}_i)\right\}^2\right]^{1/2} }
		{\left[\frac{1}{n}\sum_{i=1}^n \left\{\widetilde{\bs{r}}_n(\widehat{\mathbf{W}}_i)\right\}^2\right]^{1/2}}. \label{EQN:OrderDelta}
\end{align}
By Lemma \ref{LEM:ConvOrac}, the numerator on the right-hand-side (RHS) of (\ref{EQN:OrderDelta}) $\xrightarrow{\as}2C_1^{1/2}$ as $n\to\infty$. 

As for the denominator on the RHS of $(\ref{EQN:OrderDelta})$, we first define

\[
    \widehat{\mathbf{D}}_n 
	= \frac{1}{n} \sum_{i=1}^n \widehat{\mathbf{W}}_i\widehat{\mathbf{W}}_i^{\top}
	= \begin{pmatrix}
		\widehat{\mathbf{D}}_{11} & \widehat{\mathbf{D}}_{12} & \widehat{\mathbf{D}}_{13} & \widehat{\mathbf{D}}_{14} \\
		\widehat{\mathbf{D}}_{12}^{\top} & \widehat{\mathbf{D}}_{22} & \widehat{\mathbf{D}}_{23} & \widehat{\mathbf{D}}_{24} \\
		\widehat{\mathbf{D}}_{13}^{\top} & \widehat{\mathbf{D}}_{23}^{\top} & \widehat{\mathbf{D}}_{33} & \widehat{\mathbf{D}}_{34} \\
		\widehat{\mathbf{D}}_{14}^{\top} & \widehat{\mathbf{D}}_{24}^{\top} & \widehat{\mathbf{D}}_{34}^{\top} & \widehat{\mathbf{D}}_{44} \\
	\end{pmatrix},
\]
where for each entry of $\widehat{\mathbf{D}}_n$,

\begin{tabular}{lll}
	& $\widehat{\mathbf{D}}_{11}=n^{-1}\sum_{i=1}^n \mathbf{X}_i\mathbf{X}_i^{\top}$, &
	$\widehat{\mathbf{D}}_{12}=\widehat{\mathbf{D}}_{22}=n^{-1}\sum_{i=1}^n A_i\mathbf{X}_i\mathbf{X}_i^{\top}$, \\
	& $\widehat{\mathbf{D}}_{13}=n^{-1}\sum_{i=1}^n \mathbf{X}_i\widehat{\mathbf{U}}_{1,i}^{\top}(K_1)$, &
	$\widehat{\mathbf{D}}_{14}=\widehat{\mathbf{D}}_{24}=n^{-1}\sum_{i=1}^n A_i\mathbf{X}_i\widehat{\mathbf{U}}_{2,i}^{\top}(K_2)$, \\
	& $\widehat{\mathbf{D}}_{23}=n^{-1}\sum_{i=1}^n A_i\mathbf{X}_i\widehat{\mathbf{U}}_{1,i}^{\top}(K_1)$, &
	$\widehat{\mathbf{D}}_{33}=n^{-1}\sum_{i=1}^n \widehat{\mathbf{U}}_{1,i}(K_1)\widehat{\mathbf{U}}_{1,i}^{\top}(K_1)$,\\
	& $\widehat{\mathbf{D}}_{34}=n^{-1}\sum_{i=1}^n A_i\widehat{\mathbf{U}}_{1,i}(K_1)\widehat{\mathbf{U}}_{2,i}^{\top}(K_2)$, &
	$\widehat{\mathbf{D}}_{44}=n^{-1}\sum_{i=1}^n A_i\widehat{\mathbf{U}}_{2,i}(K_2)\widehat{\mathbf{U}}_{2,i}^{\top}(K_2)$. \\
\end{tabular} \\
Then we have 
$n^{-1}\sum_{i=1}^n\{\widetilde{\bs{r}}_n(\widehat{\mathbf{W}}_i)\}^2
	=\widetilde{\bs{r}}_n^{\top}\widehat{\mathbf{D}}_n\widetilde{\bs{r}}_n$.
 
In a parallel fashion to $\widehat{\mathbf{W}}_i$ and $\widehat{\mathbf{D}}_n$,
for $i=1,\ldots,n$,
let $\widetilde{\mathbf{W}}_i=(\mathbf{X}_i^{\top}, A_i\mathbf{X}_i^{\top},\widetilde{\mathbf{U}}_{1,i}^{\top}(K_1), A_i\widetilde{\mathbf{U}}_{2,i}^{\top}(K_2))^{\top}$, where $\widetilde{\mathbf{U}}_{\ell,i}(K_{\ell})=(\widetilde{U}_{\ell,i1},\ldots,\widetilde{U}_{\ell,iK_{\ell}})^{\top}$, $\ell=1,2$, and 
\begin{eqnarray*}
    \widetilde{U}_{1,ik}&=&\int_{\mathcal{V}}S_{1,nk}\phi_{1,k}(\bs{s})Z_i(\bs{s})\mathrm{d}\mu(\bs{s}), ~~1\leq k\leq K_1, \\
    \widetilde{U}_{2,ik}&=&\int_{\mathcal{V}}S_{2,nk}\phi_{2,k}(\bs{s})A_iZ_i(\bs{s})\mathrm{d}\mu(\bs{s}), ~~ 1\leq k\leq K_2.
\end{eqnarray*}
Then define 
\[
    \mathbf{D} = \mathrm{E} \widetilde{\mathbf{W}}_i\widetilde{\mathbf{W}}_i^{\top}
	= \begin{pmatrix}
		\mathbf{D}_{11} & \mathbf{D}_{12} & \mathbf{D}_{13} & \mathbf{D}_{14} \\
		\mathbf{D}_{12}^{\top} & \mathbf{D}_{22} & \mathbf{D}_{23} & \mathbf{D}_{24} \\
		\mathbf{D}_{13}^{\top} & \mathbf{D}_{23}^{\top} & \mathbf{D}_{33} & \mathbf{D}_{34} \\
		\mathbf{D}_{14}^{\top} & \mathbf{D}_{24}^{\top} & \mathbf{D}_{34}^{\top} & \mathbf{D}_{44} \\
	\end{pmatrix},
\]
where for each entry of $\mathbf{D}$,
\begin{eqnarray*}
	\mathbf{D}_{11} &=& \mathrm{E}(\mathbf{X}_i\mathbf{X}_i^{\top}),~~
	\mathbf{D}_{12}=\mathbf{D}_{22}=\mathrm{E}(A_i\mathbf{X}_i\mathbf{X}_i^{\top}), \\
	\mathbf{D}_{13} &=& \mathrm{E}\{\mathbf{X}_i\widetilde{\mathbf{U}}_{1,i}^{\top}(K_1)\}=\int_{\mathcal{V}}\mathrm{E}\left\{\mathbf{X}_i Z_i(\bs{s})\Big(S_{1,n1}\phi_{1,1}(\bs{s}),\ldots,S_{1,nK_1}\phi_{1,K_1}(\bs{s})\Big)\right\}\mathrm{d}\mu(\bs{s}), \\
	\mathbf{D}_{14} &=& \mathbf{D}_{24}=\mathrm{E}\{A_i\mathbf{X}_i\widetilde{\mathbf{U}}_{2,i}^{\top}(K_2)\}\\
    &=&\int_{\mathcal{V}}\mathrm{E}\left[A_i\mathbf{X}_i \left\{A_iZ_i(\bs{s})\right\}\Big(S_{2,n1}\phi_{2,1}(\bs{s}),\ldots,S_{2,nK_2}\phi_{2,K_2}(\bs{s})\Big)\right]\mathrm{d}\mu(\bs{s}), 
\end{eqnarray*}
\begin{eqnarray*}
	\mathbf{D}_{23}&=&\mathrm{E}\{A_i\mathbf{X}_i\widetilde{\mathbf{U}}_{1,i}^{\top}(K_1)\}=\int_{\mathcal{V}}\mathrm{E}\left\{A_i\mathbf{X}_i Z_i(\bs{s})\Big(S_{1,n1}\phi_{1,1}(\bs{s}),\ldots,S_{1,nK_1}\phi_{1,K_1}(\bs{s})\Big)\right\}\mathrm{d}\mu(\bs{s}), 
\end{eqnarray*}
\begin{eqnarray*}
	\mathbf{D}_{33}&=&\mathrm{E}\{\widetilde{\mathbf{U}}_{1,i}(K_1)\widetilde{\mathbf{U}}_{1,i}^{\top}(K_1)\}
		=\int_{\mathcal{V}\times\mathcal{V}}\mathrm{E}\left\{Z_i(\bs{s})Z_i(\bs{s}^{\prime})\times\right. \\
		&&\left.~~~~~~~~~~~~~~~~~~~~~~~~
            \begin{pmatrix}
			S_{1,n1}\phi_{1,1}(\bs{s})\\
			\vdots \\
			S_{1,nK_1}\phi_{1,K_1}(\bs{s}) \\
		\end{pmatrix} \Big(S_{1,n1}\phi_{1,1}(\bs{s}^{\prime}),\ldots,S_{1,nK_1}\phi_{1,K_1}(\bs{s}^{\prime})\Big)\right\}\mathrm{d}\mu(\bs{s})\mathrm{d}\mu(\bs{s}^{\prime}),
\end{eqnarray*}
\begin{eqnarray*}
	\mathbf{D}_{34}&=&\mathrm{E}\{A_i\widetilde{\mathbf{U}}_{1,i}(K_1)\widetilde{\mathbf{U}}_{2,i}^{\top}(K_2)\}
		=\int_{\mathcal{V}\times\mathcal{V}}\mathrm{E}\left[A_i\left\{Z_i(\bs{s})\right\}\left\{A_iZ_i(\bs{s}^{\prime})\right\}\times\right. \\
		&&\left.~~~~~~~~~~~~~~~~~~~~~~~~
            \begin{pmatrix}
			S_{1,n1}\phi_{1,1}(\bs{s})\\
			\vdots \\
			S_{1,nK_1}\phi_{1,K_1}(\bs{s}) \\
		\end{pmatrix}
    \Big(S_{2,n1}\phi_{2,1}(\bs{s}^{\prime}),\ldots,S_{2,nK_2}\phi_{2,K_2}(\bs{s}^{\prime})\Big)\right]\mathrm{d}\mu(\bs{s})\mathrm{d}\mu(\bs{s}^{\prime}),
\end{eqnarray*}
\begin{eqnarray*}
	\mathbf{D}_{44}&=&\mathrm{E}\{A_i\widetilde{\mathbf{U}}_{2,i}(K_2)\widetilde{\mathbf{U}}_{2,i}^{\top}(K_2)\}
		=\int_{\mathcal{V}\times\mathcal{V}}\mathrm{E}\left[A_i\left\{A_iZ_i(\bs{s})\right\}\left\{A_iZ_i(\bs{s}^{\prime})\right\}\times\right. \\
		&&\left.~~~~~~~~~~~~~~~~~~~~~~~~~
            \begin{pmatrix}
			S_{2,n1}\phi_{2,1}(\bs{s})\\
			\vdots \\
			S_{2,nK_2}\phi_{2,K_2}(\bs{s}) \\
		\end{pmatrix}
        \Big(S_{2,n1}\phi_{2,1}(\bs{s}^{\prime}),\ldots,S_{2,nK_2}\phi_{2,K_2}(\bs{s}^{\prime})\Big)\right]\mathrm{d}\mu(\bs{s})\mathrm{d}\mu(\bs{s}^{\prime}).
\end{eqnarray*}
By strong law of large numbers and Theorem \ref{THM:GC}, we can recycle previous arguments to verify that $\|\widehat{\mathbf{D}}_n-\mathbf{D}\|_{\infty}\xrightarrow{\as}0$.
From Assumptions (A3) and (A9), we know that the minimum eigenvalue of $\mathbf{D}\geq c_1$ for some constant $c_1>0$. Hence  
\[
	\liminf_{n\to\infty} \frac{1}{n}\sum_{i=1}^n \left\{\widetilde{\bs{r}}_n(\widehat{\mathbf{W}}_i)\right\}^2 \geq c_1 > 0, ~~~\as
\]
Let $C_2=2C_1^{1/2}c_1^{-1}$, then the $\limsup_{n\to\infty}$ of the RHS of (\ref{EQN:OrderDelta}) $\leq C_2<\infty$.

Now, {if $\frac{a^2}{1+a}\leq k$ for all $a\geq0$ and some $k<\infty$, then $a\leq k+\sqrt{k}$ (follows from algebra).} Thus,
\begin{align*}
	\limsup_{n\to\infty} \|\widetilde{\bs{\delta}}_n\|_2 \leq {C_2 + \sqrt{C_2}} < \infty. 
\end{align*}
\end{proof}

The following Theorem \ref{THM:CompactTheta} gives the tightness of $\widehat{\bs{\theta}}_n$. 
\begin{theorem}
\label{THM:CompactTheta}
Under Assumptions (A5) -- (A10), for fixed $1\leq K_1,K_2<\infty$, $\exists ~C_{\theta}<\infty$, such that
\begin{equation}
\label{EQN:CompactTheta}
	\limsup_{n\to\infty}\|\widehat{\bs{\theta}}_n\|_{\ast} \leq C_{\theta} ~~~\as
\end{equation}
\end{theorem}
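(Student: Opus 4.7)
The plan is to reduce $\|\widehat{\bs{\theta}}_n\|_{\ast}$ to the Euclidean norm of the finite-dimensional coordinate vector $\widetilde{\bs{\theta}}_n$ and then combine a deterministic uniform bound on the oracle coordinate vector $\widetilde{\bs{\theta}}_{0n}$ with the stochastic deviation bound on $\widetilde{\bs{\delta}}_n=\widetilde{\bs{\theta}}_n-\widetilde{\bs{\theta}}_{0n}$ supplied by Lemma \ref{LEM:OrderDelta}. The essential analytical work has already been done in that lemma; this theorem merely pulls the coordinate-level bound back to the functional norm $\|\cdot\|_{\ast}$.

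First I would exploit the fact that $\{\widehat{\phi}_{\ell,nk}\}_{k=1}^{K_\ell}$ is an orthonormal system in $L_2(\mathcal{V},\mu)$ for $\ell=1,2$ (by the construction in (\ref{DEF:HK}) and Theorem \ref{THM:fHconv}(iii)) to apply Parseval to the functional components of $\widehat{\bs{\theta}}_n$:
\[
\Bigl\|\sum_{k=1}^{K_\ell}\widetilde{\gamma}_{\ell,k}\widehat{\phi}_{\ell,nk}\Bigr\|_{\mu,2}=\|\widetilde{\bs{\gamma}}_\ell\|_2,\qquad \ell=1,2.
\]
Combining this with the linear blocks and applying Cauchy--Schwarz to the four nonnegative summands gives
\[
\|\widehat{\bs{\theta}}_n\|_{\ast}=\|\widetilde{\bs{\alpha}}_{1n}\|_2+\|\widetilde{\bs{\alpha}}_{2n}\|_2+\|\widetilde{\bs{\gamma}}_1\|_2+\|\widetilde{\bs{\gamma}}_2\|_2\leq 2\|\widetilde{\bs{\theta}}_n\|_2,
\]
so it suffices to bound $\limsup_{n\to\infty}\|\widetilde{\bs{\theta}}_n\|_2$ almost surely.

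Next I would split via the triangle inequality $\|\widetilde{\bs{\theta}}_n\|_2\leq \|\widetilde{\bs{\theta}}_{0n}\|_2+\|\widetilde{\bs{\delta}}_n\|_2$. The oracle term is handled deterministically: since $\beta_{0\ell}=\sum_{k\geq 1}\gamma_{0\ell,k}\phi_{\ell,k}$ with $\{\phi_{\ell,k}\}_k$ orthonormal, Parseval combined with Assumption (A10) yields $\sum_{k\geq 1}\gamma_{0\ell,k}^2=\|\beta_{0\ell}\|_{\mu,2}^2<\infty$, so that
\[
\|\widetilde{\bs{\theta}}_{0n}\|_2^2\leq \|\bs{\alpha}_{01}\|_2^2+\|\bs{\alpha}_{02}\|_2^2+\|\beta_{01}\|_{\mu,2}^2+\|\beta_{02}\|_{\mu,2}^2\equiv C_3<\infty
\]
uniformly in $n$, $K_1$, $K_2$. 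The stochastic term is controlled by Lemma \ref{LEM:OrderDelta}, which gives $\limsup_{n\to\infty}\|\widetilde{\bs{\delta}}_n\|_2\leq C_2+\sqrt{C_2}$ almost surely. Setting $C_\theta=2(\sqrt{C_3}+C_2+\sqrt{C_2})$ and sending $n\to\infty$ then yields (\ref{EQN:CompactTheta}).

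There is no real obstacle remaining in this step, because the delicate part of the argument---turning the near-optimality inequality (\ref{EQN:DiffDelta}) into a self-normalized bound via a lower-eigenvalue estimate on the empirical Gram matrix $\widehat{\mathbf{D}}_n$, which itself rests on Assumption (A9), Theorem \ref{THM:GC}, and almost sure convergence $\widehat{\mathbf{D}}_n\to\mathbf{D}$---is already executed inside Lemma \ref{LEM:OrderDelta}. The only bookkeeping is to verify that the orthonormality of the estimated FPC bases transfers the functional $\|\cdot\|_{\ast}$ norm to the Euclidean norm of $\widetilde{\bs{\theta}}_n$ and that the oracle coefficient sequences are square-summable under (A10), both of which are immediate.
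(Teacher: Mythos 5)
Your proof is correct and takes essentially the same route as the paper, whose one-line argument invokes Lemma \ref{LEM:OrderDelta} together with the isomorphism between $\widehat{\bs{\Theta}}_{K_1,K_2}$ and $\widetilde{\bs{\Theta}}_{K_1,K_2}$: you simply make that isomorphism explicit via Parseval (using the orthonormality of $\{\widehat{\phi}_{\ell,nk}\}_{k=1}^{K_\ell}$ from Theorem \ref{THM:fHconv}) and bound the oracle vector $\widetilde{\bs{\theta}}_{0n}$ directly from Assumption (A10), which is the bookkeeping the paper leaves implicit in its citation of Lemma \ref{LEM:thetaW}. All steps check out, including your correct use of the bound $C_2+\sqrt{C_2}$ from the final display of Lemma \ref{LEM:OrderDelta}'s proof rather than the $C_2$ appearing in its statement.
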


\begin{proof}[Proof of Theorem \ref{THM:CompactTheta}]
The conclusion follows directly from Lemmas \ref{LEM:thetaW} and \ref{LEM:OrderDelta}, and the isomorphism between $\widehat{\bs{\Theta}}_{K_1,K_2}$ and $\widetilde{\bs{\Theta}}_{K_1,K_2}$.
\end{proof}

\begin{lemma}
\label{LEM:ConvThetahat}
Under Assumptions (A3) -- (A10), as $n\to\infty$,
\begin{equation}
\label{EQN:ConvThetahat}
	\mathbb{P}_n \left\{Y-\widehat{\bs{\theta}}_n(\mathbf{W})\right\}^2
		- P\left\{Y-\widehat{\bs{\theta}}_n(\mathbf{W})\right\}^2
		\xrightarrow{\as}0.
\end{equation}
\end{lemma}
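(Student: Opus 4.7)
The plan is to exploit the boundedness of $\widehat{\bs{\theta}}_n$ provided by Theorem \ref{THM:CompactTheta} and reduce the question to a uniform convergence (Glivenko-Cantelli) statement over a deterministic bounded class. Specifically, since $\limsup_{n\to\infty}\|\widehat{\bs{\theta}}_n\|_{\ast}\leq C_{\theta}$ almost surely, the parameter $\widehat{\bs{\theta}}_n$ lies (a.s., for all $n$ sufficiently large) in the deterministic class $\bs{\Theta}_{C_{\theta}}=\{\bs{\theta}\in\bs{\Theta}:\|\bs{\theta}\|_{\ast}\leq C_{\theta}\}$. It therefore suffices to show
\[
\sup_{\bs{\theta}\in\bs{\Theta}_{C_{\theta}}}\bigl|\{\mathbb{P}_n-P\}\{Y-\bs{\theta}(\mathbf{W})\}^2\bigr|\xrightarrow{\as}0,
\]
and then evaluate the supremum at $\bs{\theta}=\widehat{\bs{\theta}}_n$. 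This is precisely what decouples $\widehat{\bs{\theta}}_n$ from the observations used in $\mathbb{P}_n$.

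Next, I would expand $(Y-\bs{\theta}(\mathbf{W}))^2=Y^2-2Y\bs{\theta}(\mathbf{W})+\bs{\theta}(\mathbf{W})^2$ and treat each piece. The $Y^2$ term converges by the scalar strong law of large numbers, since $EY^2<\infty$ follows from Assumptions (A9)--(A10) together with bounded residual variance and Cauchy-Schwarz. For the cross term, decompose $\bs{\theta}(\mathbf{W})=\bs{\alpha}_1^{\top}\mathbf{X}+\bs{\alpha}_2^{\top}A\mathbf{X}+\int\beta_1 Z\mathrm{d}\mu+\int\beta_2 AZ\mathrm{d}\mu$; the scalar parts give a uniformly bounded linear combination of the SLLN-convergent averages $\mathbb{P}_n(Y\mathbf{X})$ and $\mathbb{P}_n(YA\mathbf{X})$, while the functional parts yield, by Cauchy-Schwarz,
\[
\sup_{\|\beta_1\|_{\mu,2}\leq C_{\theta}}\Bigl|\int \beta_1(\bs{s})\{\mathbb{P}_n YZ(\bs{s})-P YZ(\bs{s})\}\mathrm{d}\mu(\bs{s})\Bigr|\leq C_{\theta}\|\mathbb{P}_n(YZ)-P(YZ)\|_{\mu,2},
\]
and $\|\mathbb{P}_n(YZ)-P(YZ)\|_{\mu,2}\xrightarrow{\as}0$ by the Kolmogorov SLLN for $L_2(\mathcal{V},\mu)$-valued iid random elements, using $E\|YZ\|_{\mu,2}\leq(EY^2)^{1/2}(E\|Z\|_{\mu,2}^2)^{1/2}<\infty$. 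The analogous bound holds with $AZ$ in place of $Z$.

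For the quadratic part $\bs{\theta}(\mathbf{W})^2$, the genuinely functional cross products such as $\{\int\beta_1 Z\mathrm{d}\mu\}^2=\int\int\beta_1(\bs{s})\beta_1(\bs{s}')Z(\bs{s})Z(\bs{s}')\mathrm{d}\mu\mathrm{d}\mu$ are controlled by Theorem \ref{THM:GC}(ii), which asserts that $\mathcal{F}_2$ is $P$-Glivenko-Cantelli and hence
\[
\sup_{a,b\in\mathcal{B}_1}\Bigl|\int\int a(\bs{s})b(\bs{s}')\{V_n(\bs{s},\bs{s}')-V_0(\bs{s},\bs{s}')\}\mathrm{d}\mu\mathrm{d}\mu\Bigr|\xrightarrow{\as}0;
\]
the discrepancy between $V_n-V_0$ and the raw second moment $\mathbb{P}_n Z\otimes Z-E Z\otimes Z$ is absorbed by the Hilbert-valued SLLN applied to $\bar{Z}_n$. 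Mixed scalar-functional cross products like $\bs{\alpha}_1^{\top}\mathbf{X}\int\beta_1 Z\mathrm{d}\mu$ are handled by the same Cauchy-Schwarz-plus-Hilbert-SLLN template applied to the cross-moment $E(\mathbf{X}\otimes Z)$, and the treatment-dependent analogues are immediate because $|A|\leq 1$.

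The main obstacle is the coupling between $\widehat{\bs{\theta}}_n$ and the sample used in $\mathbb{P}_n$, which rules out a naive application of the scalar SLLN and necessitates the Glivenko-Cantelli reduction above. Once that reduction is in place, the genuinely new input beyond standard moment bookkeeping is the Hilbert-valued SLLN for $YZ$ and for $Z\otimes Z$, both of which are either already supplied by Theorem \ref{THM:GC} or follow immediately from Assumptions (A2), (A9), and (A10); the rest is arranging the cross terms so that the envelope $\|\bs{\theta}\|_{\ast}\leq C_{\theta}$ suffices to absorb all coefficients.
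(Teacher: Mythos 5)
Your proof is correct, and it shares the paper's two-step skeleton --- use the almost-sure bound $\limsup_{n\to\infty}\|\widehat{\bs{\theta}}_n\|_{\ast}\leq C_{\theta}$ to place the estimator in a fixed deterministic bounded class, then prove a uniform law of large numbers over that class --- but it executes the second step quite differently. The paper's proof stays abstract: after the isomorphism reduction it observes that $\widetilde{\bs{\theta}}_n(\widehat{\mathbf{W}})=\widehat{f}_n(\mathbf{W})$ lies in a Glivenko--Cantelli class with probability tending to one (via Theorem \ref{THM:GC} and Lemma \ref{LEM:OrderDelta}), and then invokes Glivenko--Cantelli preservation (first for $Y-\widetilde{\bs{\theta}}_n(\widehat{\mathbf{W}})$, then for its square) to conclude. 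You instead expand $\{Y-\bs{\theta}(\mathbf{W})\}^2$ and control each term by hand: the scalar SLLN for $Y^2$ and the $\mathbf{X}$-blocks; the duality identity $\sup_{\|\beta\|_{\mu,2}\leq C}\left|\int_{\mathcal{V}}\beta(\bs{s}) h(\bs{s})\mathrm{d}\mu(\bs{s})\right|=C\|h\|_{\mu,2}$ combined with Mourier's SLLN for the $L_2(\mathcal{V},\mu)$-valued elements $YZ$, $YAZ$, $\mathbf{X}\otimes Z$; and Theorem \ref{THM:GC}(ii) together with $\|\bar{Z}_n-\mathrm{E}Z\|_{\mu,2}\xrightarrow{\as}0$ for the quadratic functional terms. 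What your route buys is self-containment and explicitness: it avoids the preservation theorems for products and squares of Glivenko--Cantelli classes, whose envelope-integrability hypotheses (e.g., $\mathrm{E}Y^2<\infty$) the paper leaves implicit but you verify; indeed, the Hilbert-valued SLLN alone would even suffice for the quadratic terms, since $\mathrm{E}\|Z\otimes Z\|_{L_2(\mathcal{V}\times\mathcal{V})}=\mathrm{E}\|Z\|_{\mu,2}^2<\infty$, making Theorem \ref{THM:GC}(ii) optional in your scheme. What the paper's route buys is brevity and robustness to the exact form of the working model, since preservation arguments need no enumeration of cross terms. Two small points to settle in a final write-up, neither a gap: state explicitly that the functional--functional cross term $\int_{\mathcal{V}}\beta_1 Z\,\mathrm{d}\mu\cdot A\int_{\mathcal{V}}\beta_2 Z\,\mathrm{d}\mu$ is covered by the same tensor template, because $\mathrm{E}\|Z\otimes AZ\|_{L_2(\mathcal{V}\times\mathcal{V})}\leq \mathrm{E}\|Z\|_{\mu,2}^2<\infty$; and, since Theorem \ref{THM:CompactTheta} only controls the $\limsup$, take the deterministic class of radius $C_{\theta}+1$, in which $\widehat{\bs{\theta}}_n$ lies for all sufficiently large $n$ almost surely before evaluating the supremum at $\bs{\theta}=\widehat{\bs{\theta}}_n$.
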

\begin{proof}
By the isomorphism between $\widehat{\bs{\Theta}}_{K_1,K_2}$ and $\widetilde{\bs{\Theta}}_{K_1,K_2}$, it is equivalent to show 
\[\mathbb{P}_n \left\{Y-\widetilde{\bs{\theta}}_n(\widehat{\mathbf{W}})\right\}^2
		- P\left\{Y-\widetilde{\bs{\theta}}_n(\widehat{\mathbf{W}})\right\}^2
		\xrightarrow{\as}0.
\] 
Note that
\begin{align*}
    \widetilde{\bs{\theta}}_n(\widehat{\mathbf{W}}) 
    &= \widetilde{\bs{\alpha}}_{1n}^{\top}\mathbf{X} + \widetilde{\bs{\alpha}}_{2n}^{\top}A\mathbf{X} + 
    \sum_{k=1}^{K_1}\widetilde{\gamma}_{1,k}\widehat{U}_{1,k}
    + 
    \sum_{k=1}^{K_2}\widetilde{\gamma}_{2,k}\widehat{U}_{2,k} \\
    &= \widetilde{\bs{\alpha}}_{1n}^{\top}\mathbf{X} + \widetilde{\bs{\alpha}}_{2n}^{\top}A\mathbf{X} + 
    \int_{\mathcal{V}} \sum_{k=1}^{K_1}\widetilde{\gamma}_{1,k}\widehat{\phi}_{1,nk}(\bs{s})Z(\bs{s})\mathrm{d}\mu(\bs{s})
    + 
    \int_{\mathcal{V}} \sum_{k=1}^{K_2}\widetilde{\gamma}_{2,k}\widehat{\phi}_{2,nk}(\bs{s})AZ(\bs{s})\mathrm{d}\mu(\bs{s}) \\
    &\triangleq \widehat{f}_n(\mathbf{W}).
\end{align*}
By Theorem \ref{THM:GC} and Lemma \ref{LEM:OrderDelta}, 
$\widehat{f}_n$ is contained in a Glivenko-Cantelli (G-C) class with probability going to one as $n\to\infty$. 
By Glivenko-Cantelli Preservation, $Y-\widetilde{\bs{\theta}}_n(\widehat{\mathbf{W}})$ is also a G-C class, and by Theorem \ref{THM:GC},  $\{Y-\widetilde{\bs{\theta}}_n(\widehat{\mathbf{W}})\}^2$ is also a G-C class, and thus the conclusion follows.
\end{proof}

With all these preparations, we can show the detailed proof of Theorem \ref{THM:Conv}.

\begin{proof}[Proof of Theorem \ref{THM:Conv}]
Let $\bs{\bs{\Omega}}$ be the set of $\bs{\omega}$ for which both (\ref{EQN:CompactTheta}) and (\ref{EQN:ConvThetahat}) hold. Fix $\bs{\omega}\in\bs{\Omega}$. By definition of $\bs{\Omega}$, $\exists$ a subsequence $\{n^{\prime}\}$ such that along this subsequence 
\[
	\widehat{\bs{\theta}}_{n^{\prime}}\to\bs{\theta}_{\ast}\in\bs{\Theta}_{K_1,K_2},
\]
for some limit $\bs{\theta}_{\ast}$, where $\bs{\Theta}_{K_1,K_2}$ is given in (\ref{DEF:ThetaK}). 
Note that by Assumption (A8), $P\{Y-\bs{\theta}(\mathbf{W})\}^2$ has a unique minimizer over $\bs{\Theta}_{K_1,K_2}$, $\bs{\theta}_0^{\ast}$. 
Let 
\begin{eqnarray*}
    r_{1n} &= & \left|\mathbb{P}_n\left\{Y-\bs{\theta}_0^{\ast}(\mathbf{W})\right\}^2 - P\left\{Y-\bs{\theta}_0^{\ast}(\mathbf{W})\right\}^2\right|,  \\
    r_{2n} &= &
	\left|\mathbb{P}_n\left\{Y-\widehat{\bs{\theta}}_n(\mathbf{W})\right\}^2 - P\left\{Y-\widehat{\bs{\theta}}_n(\mathbf{W})\right\}^2\right|.
\end{eqnarray*}
By Lemma \ref{LEM:ConvThetahat}, $r_{2n}\to0$ as $n\to\infty$. Thus,
\begin{align*}
	P\{Y-\bs{\theta}_0^{\ast}(\mathbf{W})\}^2
	&\geq \mathbb{P}_n \{Y_i-\bs{\theta}_0^{\ast}(\mathbf{W})\}^2 - r_{1n}
	\geq \mathbb{P}_n \{Y_i-\widehat{\bs{\theta}}_n(\mathbf{W})\}^2 - r_{1n} \\
	&\geq P\{Y-\widehat{\bs{\theta}}_n(\mathbf{W})\}^2 - r_{1n} - r_{2n} \\
	&\to P\{Y-\bs{\theta}_{\ast}(\mathbf{W})\}^2 - \lim r_{1n}
	\geq P\{Y-\bs{\theta}_0^{\ast}(\mathbf{W})\}^2,
\end{align*}
as $n\to\infty$, where the convergence is attained along the subsequence $n^{\prime}$.
Since the subsequence is arbitrary, and with the help of the strong law, we can conclude that $\widehat{\bs{\theta}}_n\to\bs{\theta}_0^{\ast}$ and
$
	r_{1n} 
	\to 0$ as $n\to\infty$.
Thus, for fixed $1\leq K_1,K_2 <\infty$, integrates $P$  over $\bs{\omega}$ in $\Omega$, we have
\begin{equation}
\label{EQN:YThetaW}
	P\left\{Y-\widehat{\bs{\theta}}_n(\mathbf{W})\right\}^2 - P\left\{Y-\bs{\theta}_0^{\ast}(\mathbf{W})\right\}^2
	\xrightarrow{\as}0.
\end{equation}
However, this is also true if we allow $K_{1n},K_{2n}\to\infty$ sufficiently slowly.
Recycling previous arguments as in the proofs of Lemmas \ref{LEM:deltanW} -- \ref{LEM:OrderDelta}, we can now show that 
\begin{equation}
\label{EQN:Theta0W}
	P\{\bs{\theta}_0(\mathbf{W})-\bs{\theta}_0^{\ast}(\mathbf{W})\}^2\to0, ~~~~\text{ as }~ K_{1n},K_{2n}\to\infty. 
\end{equation}
Then 
\begin{align*}
	&P\{Y-\widehat{\bs{\theta}}_n(\mathbf{W})\}^2 - P\{Y-\bs{\theta}_0^{\ast}(\mathbf{W})\}^2 \\
	=&P\left[\left\{Y-\bs{\theta}_0^{\ast}(\mathbf{W})\right\}-\left\{\widehat{\bs{\theta}}_n(\mathbf{W})-\bs{\theta}_0^{\ast}(\mathbf{W})\right\}\right]^2 - P\{Y-\bs{\theta}_0^{\ast}(\mathbf{W})\}^2  \\
	=&P\left\{\widehat{\bs{\theta}}_n(\mathbf{W})-\bs{\theta}_0^{\ast}(\mathbf{W})\right\}^2 
		-2P\left[\left\{Y-\bs{\theta}_0^{\ast}(\mathbf{W})\right\}\left\{\widehat{\bs{\theta}}_n(\mathbf{W})-\bs{\theta}_0^{\ast}(\mathbf{W})\right\}\right] .
\end{align*}
Let $E_n = 2P[\{Y-\bs{\theta}_0^{\ast}(\mathbf{W})\}\{\widehat{\bs{\theta}}_n(\mathbf{W})-\bs{\theta}_0^{\ast}(\mathbf{W})\}]$. By (\ref{EQN:Theta0W}), we have 
\begin{align*}
	|E_{n}| 
	\leq& 2\left|P\left[\left\{\varepsilon + \bs{\theta}_0(\mathbf{W}) -\bs{\theta}_0^{\ast}(\mathbf{W})\right\}\left\{\widehat{\bs{\theta}}_n(\mathbf{W})-\bs{\theta}_0^{\ast}(\mathbf{W})\right\}\right]\right| \\
	=&2\left|P\left[\left\{\bs{\theta}_0(\mathbf{W})-\bs{\theta}_0^{\ast}(\mathbf{W})\right\}\left\{\widehat{\bs{\theta}}_n(\mathbf{W})-\bs{\theta}_0^{\ast}(\mathbf{W})\right\}\right]\right| \\
	\leq&\left[P\left\{\bs{\theta}_0(\mathbf{W})-\bs{\theta}_0^{\ast}(\mathbf{W})\right\}^2\right]^{1/2} \left[P\left\{\widehat{\bs{\theta}}_n(\mathbf{W})-\bs{\theta}_0^{\ast}(\mathbf{W})\right\}^2\right]^{1/2} 
	\xrightarrow{\as} 0,
\end{align*}
as $n\to\infty$.
Combining with (\ref{EQN:YThetaW}) above, we obtain that $P\{\widehat{\bs{\theta}}_n(\mathbf{W})-\bs{\theta}_0^{\ast}(\mathbf{W})\}^2\xrightarrow{\as}0$. Thus combining with (\ref{EQN:Theta0W}), as $n\to\infty$, the conclusion follows by
\begin{eqnarray*}
	P\left\{\widehat{\bs{\theta}}_n(\mathbf{W})-\bs{\theta}_0(\mathbf{W})\right\}^2
	&=&P\left\{\widehat{\bs{\theta}}_n(\mathbf{W}) -\bs{\theta}_0^{\ast}(\mathbf{W}) + \bs{\theta}_0^{\ast}(\mathbf{W})-\bs{\theta}_0(\mathbf{W})\right\}^2 \\
	&\xrightarrow{\as}& P\left\{\widehat{\bs{\theta}}_n(\mathbf{W}) -\bs{\theta}_0^{\ast}(\mathbf{W})\right\}^2 + P\left\{\bs{\theta}_0^{\ast}(\mathbf{W})-\bs{\theta}_0(\mathbf{W})\right\}^2
	\xrightarrow{\as} 0.
\end{eqnarray*}
\end{proof}

\bibliographystyle{asa}
\bibliography{references}

\end{document}